\newcommand\setcitation[2]{%
	\csdef{mycommoncitation\text_uppercase:n{#1}}{#2}%
	\csappto{bbAllCommonCitations}{\cite{#2}\ }
}
\newcommand\getcitation[1]{%
	\csuse{mycommoncitation\text_uppercase:n{#1}}}
\newcommand\refto[1]{%
	\ifcsname  mycommoncitation\text_uppercase:n{#1}\endcsname%
	\getcitation{#1}%
	\else%
	#1%
	\fi%
}
\newcommand\mycitep[1]{\citep{\refto{#1}}}
\newtheorem{lemma}{Lemma}[section]
\newtheorem{proposition}{Proposition}[section]
\newtheorem{definition}{Definition}[section]
\newtheorem{theorem}{Theorem}[section]
\newtheorem{corollary}{Corollary}[section]
\newcommand{\lsem}{\mbox{$\lbrack\!\lbrack$}}
\newcommand{\rsem}{\mbox{$\rbrack\!\rbrack$}}
\newcommand{\mtrue}{\mathit{true}}
\newcommand{\mfalse}{\mathit{false}}
\newcommand{\mundef}{\mathit{undef}}
\newcommand*{\pnot}{\mathord{\sim}}
\newcommand{\mo}[1]{\llbracket#1\rrbracket}
\newcommand{\mos}[1]{\llbracket#1\rrbracket^{*}}
\newcommand{\mwrs}[3]{\llbracket#1\rrbracket_{#3}(#2)}
\newcommand{\mwrst}[3]{\llbracket#1\rrbracket^{*}_{#3}(#2)}
\newcommand{\aleq}[1][]{\leq_{#1}}
\newcommand{\HOL}{\ensuremath{\mathcal{HOL}}}
\newcommand{\TP}{T_{\mathsf{P}}}
\newcommand\AP{A_{\mathsf{P}}}
\newcommand{\ATP}{\AP}
\newcommand\lrule{\leftarrow}
\newcommand\m[1]{\ensuremath{#1}\xspace}
\newcommand\basedom{\m{\iota}}
\newcommand\bool{\m{o}}
\newcommand{\cp}[1]{{#1}^c}
\newcommand\eg{e.g.,\xspace}
\newcommand\ie{i.e.,\xspace}
\DeclareMathOperator\lfp{lfp}
\newcommand\newlineInListing{\mbox{\textcolor{red}{$\hookrightarrow$}\space}}
\lstdefinelanguage{holp}{
	otherkeywords={E',V'},
	morekeywords=[1]{X,Y,Z}, 
	morekeywords=[2]{A,P,Q,R,V,V',E',E,S}, 
	morekeywords=[3]{Ord,Prop}, 
	morecomment=[l]{\%}
}
\itshape\color{black!60!white},
\newcommand\stickListings{\vspace{-10pt}}
\renewenvironment{keywords}
  {\noindent\normalfont\small\rmfamily{\em \keywordsname}:}{\vspace{6.5\p@ \@plus 3\p@ \@minus 1\p@}\endtrivlist
   \vbox{\hrule \@width \hsize}}
\newcounter{Diff}
\newcounter{Start}
\def\ps@appendixheadings{\let\@mkboth\@gobbletwo
  \def\@oddhead{\setcounter{Diff}{\numexpr\value{page}-\value{Start}+2}\hfil{\itshape\@shorttitle}\hfil \llap{A-\theDiff}}\def\@evenhead{\setcounter{Diff}{\numexpr\value{page}-\value{Start}+2}\rlap{A-\theDiff}\hfil\itshape\@shortauthor\hfil}\let\@oddfoot\@empty
  \let\@evenfoot\@oddfoot
  \def\sectionmark##1{\markboth{##1}{}}\def\subsectionmark##1{\markright{##1}}}
\begin{document}

\lefttitle{Bogaerts et al.}

\jnlPage{1}{14}
\jnlDoiYr{2024}
\doival{10.1017/xxxxx}

\title[The Stable Model Semantics for Higher-Order Logic Programming]{%
  The Stable Model Semantics for\\ Higher-Order Logic Programming%
  \thanks{This work was partially supported by Fonds Wetenschappelijk Onderzoek -- Vlaanderen (project G0B2221N)
          and by a research project which is implemented in the framework of H.F.R.I call ``Basic research
          Financing (Horizontal support of all Sciences)'' under the National Recovery and Resilience
          Plan ``Greece 2.0'' funded by the European Union - NextGenerationEU (H.F.R.I. Project Number: 16116).}}

\begin{authgrp}
\author{%
  \sn{Bogaerts} \gn{Bart}$^{\dagger\ddagger}$,
  \sn{Charalambidis} \gn{Angelos}$^\circledast$,
  \sn{Chatziagapis} \gn{Giannos}$^\divideontimes$,
  \sn{Kostopoulos} \gn{Babis}$^{\circledast}$,
  \sn{Pollaci} \gn{Samuele}$^{\dagger\ddagger}$ \and
  \sn{Rondogiannis} \gn{Panos}$^\divideontimes$}
\affiliation{$\dagger$Vrije Universiteit Brussel, Belgium}
\affiliation{$\ddagger$Katholieke Universiteit Leuven, Belgium}
\affiliation{$\circledast$Harokopio University of Athens, Greece}
\affiliation{$\divideontimes$National and Kapodistrian University of Athens, Greece}
\end{authgrp}

\history{\sub{xx xx xxxx;} \rev{xx xx xxxx;} \acc{xx xx xxxx}}

\maketitle

\begin{abstract}
We propose a stable model semantics for higher-order logic programs. Our semantics is developed
using Approximation Fixpoint Theory (AFT), a powerful formalism that has successfully been used to give
meaning to diverse non-monotonic formalisms. The proposed semantics generalizes the classical two-valued
stable model semantics of~\citep{GL88StableModelSemanticsLogicProgramming} as-well-as the three-valued
one of~\citep{P90Well-FoundedSemanticsCoincidesThree-ValuedStableSemantics}, retaining their
desirable properties. Due to the use of AFT, we also get for free alternative semantics for higher-order
logic programs, namely \emph{supported model}, \emph{Kripke-Kleene}, and \emph{well-founded}. Additionally,
we define a broad class of \emph{stratified higher-order logic programs} and demonstrate that they have
a unique two-valued higher-order stable model which coincides with the well-founded semantics of such programs.
We provide a number of examples in different application domains, which demonstrate that higher-order logic
programming under the stable model semantics is a powerful and versatile formalism, which can potentially
form the basis of novel ASP systems.
\ifarxiv
This work is under consideration for acceptance in TPLP.%
\fi
\end{abstract}

\begin{keywords}
Higher-Order Logic Programming, Stable Model Semantics, Approximation Fixpoint Theory.
\end{keywords}

\section{Introduction}
Recent
research~\citep{CHRW13ExtensionalHigher-OrderLogicProgramming,CRS18ApproximationFixpointTheoryWell-FoundedSemanticsHigher-Order,CRT18Higher-orderlogicprogrammingexpressivelanguagerepresenting}
has demonstrated that it is possible to design higher-order logic programming
languages that have powerful expressive capabilities and simple and elegant
semantic properties. These languages are genuine extensions of classical
(first-order) logic programming: for example,
\citet{CHRW13ExtensionalHigher-OrderLogicProgramming} showed that positive
higher-order logic programs have a Herbrand model intersection property and this
least Herbrand model can also be produced as the least fixpoint of a continuous
immediate consequence operator. In other words, crucial semantic results of
classical (positive) logic programs transfer directly to the higher-order
setting.

The above positive results, created the hope and expectation that all major
achievements of first-order logic programming could transfer to the higher-order
world. Despite this hope, it was not clear until now whether it is possible to
define a \emph{stable model semantics} for higher-order logic programs that would
generalize the seminal work of~\citet{GL88StableModelSemanticsLogicProgramming}.
For many extensions of standard logic programming, it is possible to generalize
the \emph{reduct} construction
of~\citeauthor{GL88StableModelSemanticsLogicProgramming} to obtain a stable
model semantics, as illustrated for instance by
\citet{FPL11Semanticscomplexityrecursiveaggregatesanswerset} for an extension of
logic programs with aggregates. For higher-order programs, however, it is not
clear whether a reduct-based definition makes sense. The most important reason
why it is challenging to define a higher-order reduct, is that using the
powerful abstraction mechanisms that higher-order languages provide, one can
\emph{define} negation inside the language, for instance by the rule
\lstinline|neg X  :- ~ X| and use \lstinline|neg| everywhere in the program
where otherwise negation would be used, rendering syntactic definitions based on
occurrences of negation difficult to apply.

Apart from scientific curiosity, the definition of a stable model semantics for
higher-order logic programs also serves solid practical goals: there has been a
quest for extending the power of ASP
systems~\citep{BJT16Stable-unstablesemanticsBeyondNPnormallogic,ART19BeyondNPQuantifyingoverAnswerSets,FLRSS21PlanningIncompleteInformationQuantifiedAnswerSet},
and higher-order logic programming under the stable model semantics may prove to
be a promising solution.

In this paper we define a stable model semantics for higher-order logic
programs. Our semantics is developed using Approximation Fixpoint Theory
(AFT)~\citep{DMT04Ultimateapproximationapplicationnonmonotonicknowledgerepresentation}.
AFT is a powerful lattice-theoretic formalism that was originally developed to
unify semantics of logic programming, autoepistemic logic (AEL) and default
logic (DL) and was used to resolve a long-standing open question about the
relation between AEL and DL semantics
\citep{DMT03Uniformsemantictreatmentdefaultautoepistemiclogics}. Afterwards, it
has been applied to several other fields, including abstract argumentation
\citep{S13Approximatingoperatorssemanticsabstractdialecticalframeworks},  active
integrity constraints \citep{BC18Fixpointsemanticsactiveintegrityconstraints},
stream reasoning \citep{A20Fixedpointsemanticsstreamreasoning}, and constraint
languages for the semantic web \citep{BJ21FixpointSemanticsRecursiveSHACL}. In
these domains, AFT has been used to define new semantics without having to
reinvent the wheel (for instance, if one uses AFT to define a stable semantics,
well-known properties such as minimality results will be automatic), to study
the relation to other formalisms, and even to discover bugs in the original
semantics \citep{B19WeightedAbstractDialecticalFrameworksthroughLens}. To apply
AFT to a new domain, what we need to do is define a suitable semantic operator
on a suitable set of ``partial interpretations''. Once this operator is
identified, a family of well-known semantics and properties immediately rolls
out of the abstract theory. In this paper, we construct such an operator for
higher-order logic programs. Since our operator coincides with
Fitting's~(\citeyear{F02Fixpointsemanticslogicprogrammingsurvey})  three-valued
immediate consequence operator for the case of standard logic programs, we
immediately know that our resulting stable semantics generalizes the classical
two-valued stable model semantics
of~\citet{GL88StableModelSemanticsLogicProgramming} as well as the three-valued
one of~\citet{P90Well-FoundedSemanticsCoincidesThree-ValuedStableSemantics}.

The main idea of our construction is to interpret the higher-order predicates of
our language as three-valued relations over two-valued objects, \ie as functions
that take classical relations as arguments and return $\mtrue$, $\mfalse$, or
$\mundef$. We demonstrate that such relations are equivalent to appropriate
pairs of (classical) two-valued relations. The pair-representation gives us the
basis to apply AFT, and to obtain, in a simple and transparent manner, the
stable model semantics. At the same time, thanks to the versatility of AFT,
without any additional effort, we obtain several alternative semantics for
higher-order logic programs, namely \emph{supported model},
\emph{Kripke-Kleene}, and \emph{well-founded} semantics. In particular, we argue
that our well-founded semantics remedies certain deficiencies that have been
observed in other attempts to define such a semantics for higher-order
formalisms~\citep{DvJD15Semanticstemplatescompositionalframeworkbuildinglogics,DvBJD16CompositionalTypedHigher-OrderLogicDefinitions,CRS18ApproximationFixpointTheoryWell-FoundedSemanticsHigher-Order}.
We study properties of our novel semantics and to do so, we define a broad class
of \emph{stratified higher-order logic programs}. This is a non-trivial task
mainly due to the fact that in the higher-order setting non-monotonicity can be
well-hidden
\citep{RS17intricaciesthree-valuedextensionalsemanticshigher-orderlogic} and
stratification will hence have to take more than just occurrences of negation
into account. We demonstrate that stratified programs, as expected, indeed have
a unique two-valued higher-order stable model, which coincides with the
well-founded model of such programs. We feel that these results create a solid
and broad foundation for the semantics of higher-order logic programs with
negation. Finally, from a practical perspective, we showcase our semantics on
three different examples.
In Section~\ref{sec:motivating}, we start with max-clique, a simple
graph-theoretic problem  which we use to familiarize the reader with our
notation and to demonstrate the power of abstraction. In
Section~\ref{sec:applications}, we study more intricate applications, namely
semantics for abstract argumentation and Generalized Geography, which is a
$\mathsf{PSPACE}$-complete problem. These examples illustrate that higher-order
logic programming under the stable model semantics is a powerful and versatile
formalism, which can potentially form the basis of novel ASP systems.


\section{A Motivating Example}\label{sec:motivating}
In this section, we illustrate our higher-order logic programming language on
the max-clique problem. A complete solution is included in
Listing~\ref{list:maxclique}. We will assume an undirected graph is given by
means of a unary predicate \lstinline|v| (containing all the nodes of the graph)
and a binary predicate \lstinline|e| representing the edge-relation (which we
assume to be symmetric). Lines~\ref{line:choice1} and~\ref{line:choice2} contain
the standard trick that exploits an even loop of negation for simulating a
choice, which in modern ASP input formats
\citep{CFGIKKLM20ASP-Core-2InputLanguageFormat} would be abbreviated by a choice
rule construct \lstinline|{pick X : v X}|. Line~\ref{line:clique} defines what
it means to be a clique. In this line the (red) variable \lstinline|P| is a
first-order variable; it ranges over all \emph{sets} of domain elements, whereas
(blue) zero-order variables such as \lstinline|X| in Line~\ref{line:choice1}
range over actual elements of the domain. A set of elements is a clique if it
\emph{(i)} is a subset of \lstinline|v|, and \emph{(ii)} contains no two nodes
without an edge between them. Failures to satisfy the second condition are
captured by the predicate \lstinline|hasNonEdge|. The predicate
\lstinline|clique| is a second-order predicate. Formally, we will say its type
is $(\basedom\to\bool)\to\bool$: it takes as input a  relation of type
$\basedom\to\bool$, \ie a set of base domain elements of type \basedom  and it
returns a Boolean (type $\bool$). In other words, the interpretation of
\lstinline|clique| will be a set of sets. Next, line~\ref{line:maxclique}
defines the second-order predicate \lstinline|maxclique|, which is true
precisely for those sets \lstinline|P| that are subset-maximal among the set of
all cliques, and line~\ref{line:assert} asserts, using the standard trick with
an odd loop over negation that \lstinline|pick| must indeed be in
\lstinline|maxclique|.

This definition of \lstinline|maxclique| makes use of a third-order predicate
\lstinline|maximal| which works with an arbitrary binary relation for comparing
sets (here: the subset relation), as well as an arbitrary unary predicate over
sets (here: the \lstinline|clique| predicate). Listing~\ref{list:generic}
provides definitions of \lstinline|maximal|, \lstinline|equal|, and other
generic predicates. Note that equality between predicates is not a primitive of
the language: we define it in Line~\ref{line:equal} of
Listing~\ref{list:generic}. On the other hand, equality between atomic objects
(\lstinline|=|), which we use in Line~\ref{line:nonedge} of
Listing~\ref{list:maxclique}, is a primitive of the language. These generic
definitions, which can be reused in different applications, illustrate the power
of higher-order modelling: it enables reuse and provides great flexibility, \eg
if we are interested in cardinality-maximal cliques, we only need to replace
subset by an appropriate relation comparing the size of two predicates. Also
note that our solution has only a single definition of what it means to be a
clique. This definition is used both to state that \lstinline|pick| is a clique
(the first atom of the rule defining \lstinline|maximal| guarantees this) and to
check that there are no larger cliques (in the rule defining
\lstinline|nonmaximal|).

\begin{lstlisting}[caption=Max-clique problem using stable semantics for higher-order logic programs.,
  label=list:maxclique]
% Pick a set of vertices (emulate choice rule)
pick X  :- v X, ~(npick X). |\label{line:choice1}|
npick X :- v X, ~(pick X).  |\label{line:choice2}|
% Define what it means for a set of vertices to be a clique
hasNonEdge P :- P X, P Y, ~(X=Y), ~(e X Y). |\label{line:nonedge}|
clique P :- subset P v,  ~(hasNonEdge P).  |\label{line:clique}|
% Define what it means to be a max-clique:
maxclique P :- maximal subset clique P.  |\label{line:maxclique}|
% The selected set should be a max-clique
f :- ~f, ~(maxclique pick).  |\label{line:assert}|
\end{lstlisting}
\stickListings
\begin{lstlisting}[caption=Definitions of generic higher-order predicates.,
  label=list:generic]
% Define generic higher-order predicates: subset, equal, maximal
nonsubset P Q :- P X, ~(Q X).
subset P Q :-  ~(nonsubset P Q).
equal P Q :- subset P Q, subset Q P. |\label{line:equal}|
% maximal Ord Prop  P means: P is Ord-maximal among sets satisfying Prop
maximal Ord Prop P    :-  Prop P, ~(nonmaximal Ord Prop P).
nonmaximal Ord Prop P :-  Prop Q, Ord P Q, ~(equal P Q).
\end{lstlisting}

\section{$\HOL$: A Higher-Order Logic Programming Language}\label{sec:hol}
In this section we define the syntax of the language $\HOL$ that we use
throughout the paper. For simplicity reasons, the syntax of $\HOL$ does not
include function symbols; this is a restriction that can easily be lifted.
$\HOL$ is based on a simple type system with two base types: $\bool$, the
Boolean domain, and $\basedom$, the domain of data objects. The composite types
are partitioned into \emph{predicate} ones (assigned to predicate symbols) and
\emph{argument} ones (assigned to parameters of predicates).
\begin{definition}
Types are either \emph{predicate} or \emph{argument}, denoted by $\pi$
and $\rho$ respectively, and defined as:
\begin{align*}
\pi  & := \bool \mid (\rho \to \pi) \\
\rho & := \basedom \mid \pi
\end{align*}
\end{definition}

As usual, the binary operator $\to$ is right-associative. It can be easily seen
that every predicate type $\pi$ can be written in the form
$\rho_1 \to \cdots \to \rho_n \rightarrow \bool$, $n\geq 0$ (for $n=0$ we assume that $\pi=\bool$).
We proceed by defining the syntax of $\HOL$.

\begin{definition}
The \emph{alphabet} of $\HOL$ consists of the following: \emph{predicate variables}
of every predicate type $\pi$ (denoted by capital letters such as $\mathsf{P,Q,\ldots})$;
\emph{predicate constants} of every predicate type $\pi$ (denoted by lowercase letters such as $\mathsf{p,q,\ldots}$);
\emph{individual variables} of type $\basedom$ (denoted by capital letters such as $\mathsf{X,Y,\ldots}$);
\emph{individual constants} of type $\basedom$ (denoted by lowercase letters such as $\mathsf{a,b,\ldots}$);
the \emph{equality}  constant $\approx$ of type $\basedom \to \basedom \to o$ for comparing individuals of type $\basedom$;
the \emph{conjunction} constant $\wedge$ of type $\bool \to \bool \to \bool$;
the \emph{rule operator} constant $\lrule$ of type $\bool \to \bool \to \bool$; and
the \emph{negation} constant $\pnot$ of type $\bool \to \bool$.
\end{definition}

Arbitrary variables (either predicate or individual ones) will usually be denoted by $\mathsf{R}$. 

\begin{definition}
The \emph{terms} and \emph{expressions} of $\HOL$ are defined as follows.
Every predicate variable/constant and every individual variable/constant is a
term; if $\mathsf{E}_1$ is a term of type $\rho \to \pi$ and $\mathsf{E}_2$ a
term of type $\rho$ then $(\mathsf{E}_1\ \mathsf{E}_2)$ is a term of type $\pi$.
\label{def:expressions}
Every term is also an expression;
if $\mathsf{E}$ is a term of type $\bool$ then $(\pnot \mathsf{E})$ is an expression of type $\bool$;
if $\mathsf{E}_1$ and $\mathsf{E}_2$ are terms of type $\basedom$, then $(\mathsf{E}_1\approx \mathsf{E}_2)$ is an expression of type $\bool$.
\end{definition}

We will omit parentheses when no confusion arises. To denote that an expression
$\mathsf{E}$ has type $\rho$ we will often write $\mathsf{E}:\rho$.

\begin{definition}
A \emph{rule} of $\HOL$ is a formula
$\mathsf{p}\ \mathsf{R}_1 \cdots \mathsf{R}_n \lrule \mathsf{E}_1 \land \ldots \land \mathsf{E}_m$,
where $\mathsf{p}$ is a predicate constant of type $\rho_1 \to \cdots \to \rho_n \to \bool$,
$\mathsf{R}_1,\ldots,\mathsf{R}_n$ are distinct variables of types $\rho_1,\ldots,\rho_n$ respectively and
the $\mathsf{E}_i$ are expressions of type $\bool$.
The term $\mathsf{p}\ \mathsf{R}_1 \cdots \mathsf{R}_n$ is the \emph{head} of the rule and
$ \mathsf{E}_1 \land \ldots \land \mathsf{E}_m$ is the \emph{body} of the rule.
A \emph{program} $\mathsf{P}$ of $\HOL$ is a finite set of rules.
\end{definition}

We will often follow the common logic programming notation and write
$\mathsf{E}_1,\ldots,\mathsf{E}_m$ instead of
$\mathsf{E}_1 \wedge \cdots \wedge \mathsf{E}_m$ for the body of a rule.
For brevity reasons, we will often denote
a rule as $\mathsf{p} \ \overline{\mathsf{R}} \lrule \mathsf{B}$, where
$\overline{\mathsf{R}}$ is a shorthand for a sequence of variables
$\mathsf{R}_1 \cdots \mathsf{R}_n$ and $\mathsf{B}$ represents a conjunction of expressions of
type $\bool$.

\section{The Two-Valued Semantics of $\HOL$}\label{sec:two-valued-semantics}
In this section we define an immediate consequence operator for $\HOL$ programs,
which is an extension of the classical $\TP$ operator for first-order logic
programs. We start with the semantics of the types of our language. In the
following, we denote by $U_{\mathsf{P}}$ the \emph{Herbrand universe} of
$\mathsf{P}$, namely the set of all constants of the program.

The semantics of the base type $\bool$ is the classical Boolean domain $\{\mathit{true}, \mathit{false}\}$
and that of the base type $\basedom$ is $U_{\mathsf{P}}$.
The semantics of types of the form $\rho \to \pi$ is the set of all functions
from the domain of type $\rho$ to that of type $\pi$.
We define, simultaneously
with the meaning of every type, a partial order on the elements of the type.

\begin{definition}\label{def:orders}
Let $\mathsf{P}$ be an $\HOL$ program. We define the (two-valued) meaning of a type with respect to $U_{\mathsf{P}}$, as follows:
\begin{itemize}
  \item $\mo{\bool}_{U_{\mathsf{P}}} = \{\mathit{true}, \mathit{false}\}$.
        The partial order $\leq_\bool$ is the usual one induced by the ordering  $\mfalse <_\bool \mtrue$
  \item $\mo{\basedom}_{U_{\mathsf{P}}} = U_{\mathsf{P}}$.
        The partial order $\leq_\basedom$ is the
        trivial one defined as $d \leq_\basedom d$ for all $d \in U_{\mathsf{P}}$
  \item $\mo{\rho \to \pi}_{U_{\mathsf{P}}} = \mo{\rho}_{U_{\mathsf{P}}} \to \mo{\pi}_{U_{\mathsf{P}}}$,
        namely the set of all functions from  $\mo{\rho}_{U_{\mathsf{P}}}$ to $\mo{\pi}_{U_{\mathsf{P}}}$.
        The partial order $\leq_{\rho \to \pi}$ is defined as: for all $f,g \in \mo{\rho \to \pi}_{U_{\mathsf{P}}}$,
        $f \leq_{\rho \to \pi} g$ iff $f(d) \leq_{\pi} g(d)$ for all $d \in \mo{\rho}_{U_{\mathsf{P}}}$.
\end{itemize}
\end{definition}
The subscripts from the above partial orders will be omitted when they are obvious from context.
Moreover, we will omit the subscript $U_{\mathsf{P}}$ assuming that our semantics is defined with
respect to a specific program $\mathsf{P}$.

As we mentioned before, each predicate type $\pi$ can be written in the form $\rho_1\to\dots\to\rho_n\to \bool$. Elements of $\mo{\pi}$ can be thought of, alternatively, as subsets of $\mo{\rho_1}\times\dots\times\mo{\rho_n}$ (the set contains precisely those $n$-tuples mapped to $\mtrue$).
Under this identification, it can be seen that $\leq_\pi$ simply becomes the subset relation.

\begin{proposition}\label{semantics_of_types_lattice}
For every predicate type $\pi$, $(\mo{\pi}, \leq_\pi)$ is a complete lattice.
\end{proposition}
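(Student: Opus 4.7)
The plan is a straightforward induction on the structure of the predicate type $\pi$, following the grammar given in the definition of types.

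For the base case $\pi = \bool$, by Definition~\ref{def:orders} we have $\mo{\bool} = \{\mtrue,\mfalse\}$ with $\mfalse <_\bool \mtrue$. This two-element chain is trivially a complete lattice: for any $S \subseteq \{\mtrue,\mfalse\}$, the supremum is $\mtrue$ if $\mtrue \in S$ and $\mfalse$ otherwise (dually for the infimum), with the empty supremum being $\mfalse$ and the empty infimum being $\mtrue$.

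For the inductive step, suppose $\pi = \rho \to \pi'$, where $\pi'$ is a predicate type and $\rho$ is an argument type. By the induction hypothesis, $(\mo{\pi'}, \leq_{\pi'})$ is a complete lattice; note that we do not need any hypothesis on $(\mo{\rho}, \leq_\rho)$, since the construction only uses $\mo{\rho}$ as an index set. Given an arbitrary family $F \subseteq \mo{\rho \to \pi'}$, I would define its pointwise supremum $\hat{f} : \mo{\rho} \to \mo{\pi'}$ by
\[
  \hat{f}(d) \;=\; \lubr{\pi'} \{\, f(d) \mid f \in F \,\}
\]
for every $d \in \mo{\rho}$; this is well-defined because $(\mo{\pi'}, \leq_{\pi'})$ is a complete lattice by the induction hypothesis. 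A routine verification using the pointwise definition of $\leq_{\rho\to\pi'}$ in Definition~\ref{def:orders} shows that $\hat{f}$ is an upper bound for $F$ and is below every other upper bound of $F$, hence it is the least upper bound. The infimum is handled dually. Since suprema exist for every subset (including the empty one, which yields the constant $\bot_{\pi'}$ function as the bottom element of $\mo{\rho\to\pi'}$), the lattice is complete.

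I do not anticipate a real obstacle: the argument is the standard fact that a pointwise-ordered function space into a complete lattice is itself a complete lattice, combined with an induction that parses the type. The only mildly delicate point is that $\rho$ itself may be a predicate type, so its meaning is obtained from the same mutually recursive definition; however, since we use $\mo{\rho}$ only as a set of indices, no extra structure on $\mo{\rho}$ is required, and the induction on $\pi$ goes through without needing a parallel induction on $\rho$.
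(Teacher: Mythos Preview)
Your proof is correct and follows essentially the same approach as the paper: induction on the predicate type, with the base case handled trivially and the inductive step using pointwise suprema and infima in the function space. The paper's proof is slightly terser, but the structure and key idea are identical.
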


In the following, we denote by $\bigvee_{\leq_{\pi}}$ and
$\bigwedge_{\leq_{\pi}}$ the corresponding lub and glb operations of the above
lattice. When viewing elements of $\pi$ as \emph{sets}, $\bigvee_{\leq_{\pi}}$
is just the union operator and $\bigwedge_{\leq_{\pi}}$ the intersection. We now
proceed to define Herbrand interpretations and states.
\begin{definition}\label{def:interpretation_Herbrand} A \emph{Herbrand
interpretation $I$ of a program $\mathsf{P}$} assigns to each individual
constant $\mathsf{c}$ of $\mathsf{P}$, the element $I(\mathsf{c}) = \mathsf{c}$,
and to each predicate constant $\mathsf{p} : \pi$ of $\mathsf{P}$, an element
$I(\mathsf{p}) \in \mo{\pi}$.
\end{definition}
We will denote the set of Herbrand interpretations of a program $\mathsf{P}$
with $H_\mathsf{P}$. We define a partial order on $H_\mathsf{P}$ as
follows: for all $I, J \in H_\mathsf{P}$, $I \leq J$
iff for every predicate constant $\mathsf{p} : \pi$ that appears in
$\mathsf{P}$, $I(\mathsf{p}) \aleq[\pi] J(\mathsf{p})$.
The following proposition demonstrates that the space of interpretations is a
complete lattice. This is an easy consequence of
Proposition~\ref{semantics_of_types_lattice}. 
\begin{proposition}\label{interp-states-lattices}
Let $\mathsf{P}$ be a program. Then, $(H_{\mathsf{P}}, \leq)$ is a complete lattice.
\end{proposition}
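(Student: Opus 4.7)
The plan is to leverage Proposition~\ref{semantics_of_types_lattice}: since each $(\mo{\pi}, \leq_\pi)$ is already a complete lattice, the space $H_{\mathsf{P}}$ is essentially a product of complete lattices under the pointwise order, and products of complete lattices in this sense are themselves complete lattices.

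First, I would observe that individual constants play no role in the partial order: by Definition~\ref{def:interpretation_Herbrand}, every Herbrand interpretation $I$ is forced to map each individual constant $\mathsf{c}$ to itself, so $I$ is fully determined by its restriction to the (finitely many) predicate constants occurring in $\mathsf{P}$. Thus $H_{\mathsf{P}}$ can be identified with $\prod_{\mathsf{p}:\pi \in \mathsf{P}} \mo{\pi}$, and the relation $\leq$ from the excerpt is precisely the pointwise (product) order on this identification. Reflexivity, antisymmetry, and transitivity of $\leq$ then follow immediately from the corresponding properties of each $\leq_\pi$.

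Next, given an arbitrary subset $S \subseteq H_{\mathsf{P}}$, I would define two candidate interpretations componentwise: for every predicate constant $\mathsf{p}:\pi$ of $\mathsf{P}$, put
\[
 I^{\vee}(\mathsf{p}) \;:=\; \bigvee\nolimits_{\leq_\pi} \{\, I(\mathsf{p}) : I \in S\,\}, \qquad
 I^{\wedge}(\mathsf{p}) \;:=\; \bigwedge\nolimits_{\leq_\pi} \{\, I(\mathsf{p}) : I \in S\,\},
\]
and extend them trivially on individual constants. By Proposition~\ref{semantics_of_types_lattice}, the lub and glb on the right-hand sides exist in $\mo{\pi}$, so $I^{\vee}$ and $I^{\wedge}$ are well-defined Herbrand interpretations. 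That $I^{\vee}$ is an upper bound of $S$ and dominates every other upper bound of $S$ is a direct consequence of the analogous property of $\bigvee_{\leq_\pi}$ applied componentwise, and dually for $I^{\wedge}$. Hence every subset has both a lub and a glb in $(H_{\mathsf{P}}, \leq)$.

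I do not foresee any genuine obstacle: once one notes the rigid treatment of individual constants, the statement reduces to the standard fact that a product of complete lattices, ordered pointwise, is again a complete lattice. The main care needed is simply notational, to keep track of types $\pi$ of the various predicate constants and to use the correct order $\leq_\pi$ in each coordinate.
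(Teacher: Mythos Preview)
Your proposal is correct and matches the paper's approach: the paper does not spell out a proof but simply states that the result is an easy consequence of Proposition~\ref{semantics_of_types_lattice}, which is precisely the reduction to a pointwise product of complete lattices that you carry out. Your argument fills in exactly the expected details.
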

\begin{definition}\label{def:state_Herbrand}
	A \emph{Herbrand state} $s$ of a program $\mathsf{P}$ is a function that
	assigns to each argument variable $\mathsf{R}$ of type $\rho$, an element
	$s(\mathsf{R}) \in \mo{\rho}$.  We denote the set of Herbrand states with
	$S_\mathsf{P}$.
\end{definition}
In the following, $s[\mathsf{R}_1/d_1,\ldots,\mathsf{R}_n/d_n]$ is used to denote a state that is identical to $s$ the
only difference being that the new state assigns to each $\mathsf{R}_i$ the corresponding value $d_i$;
for brevity, we will also denote it by $s[\overline{\mathsf{R}}/\overline{d}]$.

We proceed to define the (two-valued) semantics of $\cal{HOL}$ expressions and bodies.
\begin{definition}\label{standard-semantics}
Let $\mathsf{P}$ be a program, $I$ a Herbrand interpretation of $\mathsf{P}$, and
$s$ a Herbrand state. Then, the semantics of expressions and bodies is
defined as follows:
\begin{enumerate}
  \item $\mwrs{\mathsf{R}}{I}{s} = s(\mathsf{R})$
  \item $\mwrs{\mathsf{c}}{I}{s} = I(\mathsf{c}) = \mathsf{c}$
  \item $\mwrs{\mathsf{p}}{I}{s} = I(\mathsf{p})$
  \item $\mwrs{(\mathsf{E}_1\ \mathsf{E}_2)}{I}{s} = \mwrs{\mathsf{E}_1}{I}{s}\ \mwrs{\mathsf{E}_2}{I}{s}$
  \item $\mwrs{(\mathsf{E}_1\approx \mathsf{E}_2)}{I}{s} = \begin{cases}
    \mathit{true},  & \text{if } \mwrs{\mathsf{E}_1}{I}{s} = \mwrs{\mathsf{E}_2}{I}{s} \\
    \mathit{false}, & \text{otherwise}
    \end{cases}$
  \item $\mwrs{(\sim \mathsf{E})}{I}{s} = \begin{cases}
    \mathit{true},  & \text{if } \mwrs{\mathsf{E}}{I}{s} = \mathit{false} \\
    \mathit{false}, & \text{otherwise}
    \end{cases}$
  \item $\mwrs{(\mathsf{E}_1 \wedge \cdots \wedge \mathsf{E}_m)}{I}{s} =
    \bigwedge_{\leq_\bool}\{\mwrs{\mathsf{E}_1}{I}{s},\ldots,\mwrs{\mathsf{E}_m}{I}{s}\}$
\end{enumerate}
\end{definition}
We can now formally define the notion of \emph{model} for $\HOL$ programs.
\begin{definition}\label{def:twovalued-models}
Let $\mathsf{P}$ be a program and $M$ be a two-valued Herbrand interpretation of $\mathsf{P}$.
Then, $M$ is a \emph{two-valued Herbrand model} of $\mathsf{P}$ iff for every rule
$\mathsf{p}\ \overline{\mathsf{R}} \lrule \mathsf{B}$ in $\mathsf{P}$
and for every Herbrand state $s$,
$\mwrs{\mathsf{B}}{M}{s} \leq_o \mwrs{\mathsf{p}\ \overline{\mathsf{R}}}{M}{s}$.
\end{definition}
Since we have a mechanism to evaluate bodies of rules, we can define the \emph{immediate consequence operator} for
$\HOL$ programs, which generalizes the corresponding operator for classical (first-order)
logic programs of \citep{vK76SemanticsPredicateLogicProgrammingLanguage}.
\begin{definition}
  Let $\mathsf{P}$ be a program. The mapping $T_\mathsf{P} : H_\mathsf{P} \to H_\mathsf{P}$
  is called the \emph{immediate consequence operator for $\mathsf{P}$} and is defined for every predicate constant
  $\mathsf{p} : \rho_1 \to \cdots \to \rho_n \to \bool$ and all $d_1 \in \mo{\rho_1},\ldots,d_n \in \mo{\rho_n}$, as:
  $
  \TP(I)(\mathsf{p})\ \overline{d} =
              \bigvee\nolimits_{\leq_\bool}\{
                  \mwrs{\mathsf{B}}{I}{s[\overline{\mathsf{R}}/\overline{d}]}  \mid
                  \mbox{$s\in S_{\mathsf{P}}$ and $(\mathsf{p}\ \overline{\mathsf{R}} \leftarrow \mathsf{B})$
                  in $\mathsf{P}$}\}$.
\end{definition}

Since a program may contain negation, $T_\mathsf{P}$ is not necessarily monotone. 
In fact, perhaps somewhat surprisingly, $T_\mathsf{P}$ can even be non-monotone for negation-free programs
such as \lstinline|p  :- r(p)|, where \lstinline|p| is of type $o$ and \lstinline|r| is a predicate constant
of type $o\to o$.\footnote{To see the non-monotonicity of $T_{\mathsf{P}}$ for this program, consider an
interpretation $I_0$ which assigns to \lstinline|p| the value $\mfalse$ and to \lstinline|r| the negation operation
$\mathit{neg}:\bool\to\bool: \mtrue\mapsto\mfalse,\mfalse\mapsto\mtrue$. Consider also an interpretation $I_1$ which
is identical to $I_0$ the only difference being that it assigns to \lstinline|p| the value $\mtrue$. It can be verified that $I_0\leq I_1$ but $T_{\mathsf{P}}(I_0)\not\leq T_{\mathsf{P}}(I_1)$.}

As expected, $T_\mathsf{P}$ characterizes the models of $\mathsf{P}$, as the following proposition suggests.
\begin{proposition}\label{model-iff-tp-prefixpoint}
Let $\mathsf{P}$ be a program and $I \in H_\mathsf{P}$. Then, $I$ is a model of $\mathsf{P}$
iff $I$ is a pre-fixpoint of $T_\mathsf{P}$ (i.e., $\TP(I)\leq I$).
\end{proposition}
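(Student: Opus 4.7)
The plan is to unfold both sides of the biconditional directly against the definitions of model (Definition~\ref{def:twovalued-models}) and of $T_{\mathsf{P}}$, and observe that the two statements are in fact equivalent modulo reindexing by states versus tuples of domain elements. The bridge between the two is the identity
\[
\mwrs{\mathsf{p}\ \overline{\mathsf{R}}}{I}{s} \;=\; I(\mathsf{p})(s(\mathsf{R}_1),\ldots,s(\mathsf{R}_n)),
\]
which follows by a short induction via clauses (1), (3), and (4) of Definition~\ref{standard-semantics}.

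For the forward direction, I would fix a predicate constant $\mathsf{p}:\rho_1\to\cdots\to\rho_n\to\bool$ and arbitrary elements $d_1\in\mo{\rho_1},\ldots,d_n\in\mo{\rho_n}$, and show $\TP(I)(\mathsf{p})\,\overline{d}\leq_\bool I(\mathsf{p})\,\overline{d}$. Since the left-hand side is a supremum (indexed by rules for $\mathsf{p}$ and states $s$), it suffices to bound each summand. Given a rule $\mathsf{p}\,\overline{\mathsf{R}}\lrule \mathsf{B}$ and a state $s$, set $s'=s[\overline{\mathsf{R}}/\overline{d}]$. By the bridge identity, $\mwrs{\mathsf{p}\,\overline{\mathsf{R}}}{I}{s'}=I(\mathsf{p})\,\overline{d}$, and by the model assumption $\mwrs{\mathsf{B}}{I}{s'}\leq_\bool I(\mathsf{p})\,\overline{d}$, which is what we need. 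Taking the supremum over all such rules and states, we conclude $\TP(I)(\mathsf{p})\,\overline{d}\leq_\bool I(\mathsf{p})\,\overline{d}$; this holds for every $\mathsf{p}$ and every $\overline{d}$, which by Definition~\ref{def:orders} is precisely $\TP(I)\leq I$.

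For the converse, assume $\TP(I)\leq I$ and take an arbitrary rule $\mathsf{p}\,\overline{\mathsf{R}}\lrule\mathsf{B}$ and state $s$. Let $d_i = s(\mathsf{R}_i)$, so that $s=s[\overline{\mathsf{R}}/\overline{d}]$ (the $\mathsf{R}_i$ are distinct by the definition of a rule). Then $\mwrs{\mathsf{B}}{I}{s}$ is one of the elements of the set whose supremum defines $\TP(I)(\mathsf{p})\,\overline{d}$, so $\mwrs{\mathsf{B}}{I}{s}\leq_\bool \TP(I)(\mathsf{p})\,\overline{d}\leq_\bool I(\mathsf{p})\,\overline{d}=\mwrs{\mathsf{p}\,\overline{\mathsf{R}}}{I}{s}$, again by the bridge identity. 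This verifies Definition~\ref{def:twovalued-models}.

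The only genuinely delicate point is the bridge identity, and in particular checking that the rule format $\mathsf{p}\,\overline{\mathsf{R}}$ with distinct variables $\mathsf{R}_i$ ensures that setting $s(\mathsf{R}_i)=d_i$ indeed recovers $I(\mathsf{p})\,\overline{d}$ without any unintended aliasing. Once this is in place, both directions reduce to the observation that a pointwise inequality on $\mo{\pi}$ unfolds via Definition~\ref{def:orders} into a pointwise inequality on $\mo{\bool}$ over all tuples $\overline{d}$, and that quantifying over states $s$ in the definition of model is exactly the same as quantifying over assignments $\overline{d}$ in the definition of $\TP$.
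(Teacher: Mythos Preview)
Your proof is correct and follows essentially the same approach as the paper's. The only cosmetic differences are that the paper argues the forward direction by contradiction (assuming some $\TP(I)(\mathsf{p})\,\overline{d}\not\leq I(\mathsf{p})\,\overline{d}$, hence equal to $\mtrue$, and deriving a contradiction from the model condition) and in the backward direction case-splits on the body being $\mtrue$ rather than carrying the inequality $\leq_\bool$ throughout; your direct supremum argument and explicit ``bridge identity'' make the same reasoning slightly more transparent.
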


\section{The Three-Valued Semantics of $\HOL$}\label{sec:three-valued-semantics}
In this section we define an alternative semantics for $\HOL$ types and
expressions, based on a three-valued truth space. As in first-order logic
programming, the purpose of the third truth value is to assign meaning to programs
that contain circularities through negation. Since we are dealing with higher-order
logic programs, we must define three-valued relations at all orders of the type
hierarchy. These three-valued relations are functions that take two-valued arguments
and return a three-valued truth result.

Due to the three-valuedness of our base domain $o$, all our domains inherit two
distinct ordering relations, namely $\leq$ (the \emph{truth ordering}) and
$\preceq$ (the \emph{precision ordering}).
\begin{definition}\label{def:orders_three-valued}
Let $\mathsf{P}$ be a program. We define the (three-valued) meaning of a type with respect to $U_{\mathsf{P}}$, as follows:
\begin{itemize}
\item $\mos{\bool}_{U_{\mathsf{P}}} = \{\mfalse,\mundef,\mtrue\}$.
      The partial order $\leq_\bool$ is the one induced by the ordering  $\mfalse <_\bool \mundef <_\bool \mtrue$;
      the partial order $\preceq_\bool$ is the one induced by the ordering $\mundef \prec_\bool \mfalse$ and $\mundef \prec_\bool \mtrue$.

\item $\mos{\basedom}_{U_{\mathsf{P}}} = U_{\mathsf{P}}$.
      The partial order $\leq_\basedom$ is defined as $d \leq_\basedom d$ for all $d \in U_{\mathsf{P}}$.
      The partial order $\preceq_\basedom$ is also defined as $d \preceq_\basedom d$ for all $d\in U_{\mathsf{P}}$.

\item $\mos{\rho \to \pi}_{U_{\mathsf{P}}} = \mo{\rho}_{U_{\mathsf{P}}} \to \mos{\pi}_{U_{\mathsf{P}}}$.
      The partial order $\leq_{\rho \to \pi}$ is defined as follows:
      for all $f,g \in \mos{\rho \to \pi}_{U_{\mathsf{P}}}$,
      $f \leq_{\rho \to \pi} g$ iff $f(d) \leq_{\pi} g(d)$ for all $d \in \mo{\rho}_{U_{\mathsf{P}}}$.
      The partial order $\preceq_{\rho \to \pi}$ is defined as follows:
      for all $f,g \in \mos{\rho \to \pi}_{U_{\mathsf{P}}}$,
      $f \preceq_{\rho \to \pi} g$ iff $f(d) \preceq_{\pi} g(d)$ for all $d \in \mo{\rho}_{U_{\mathsf{P}}}$.
\end{itemize}
\end{definition}

We omit subscripts when unnecessary.
It can be easily verified that for every $\rho$ it holds $\mo{\rho} \subseteq \mos{\rho}$.
In other words, every two-valued element is also a three-valued one.
Moreover, the $\leq_\rho$ ordering in the above definition is an
extension of the $\leq_\rho$ ordering in Definition~\ref{def:orders}.

%
\begin{proposition}\label{semantics_of_types_lattice_cpo}
For every predicate type $\pi$, $(\mos{\pi}, \leq_\pi)$
is a complete lattice and $(\mos{\pi}, \preceq_\pi)$ is a complete meet-semilattice (\ie
every non-empty subset of $\mos{\pi}$ has a $\preceq_\pi$-greatest lower bound).
\end{proposition}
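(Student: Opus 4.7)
The plan is to prove both claims simultaneously by structural induction on the predicate type $\pi$.

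For the base case $\pi=\bool$, the carrier $\mos{\bool}=\{\mtrue,\mundef,\mfalse\}$ is finite. Under $\leq_\bool$ it is a three-element chain, which is trivially a complete lattice. Under $\preceq_\bool$ it is the flat domain with $\mundef$ below the two incomparable elements $\mtrue$ and $\mfalse$; by inspection, each of the seven non-empty subsets admits a $\preceq_\bool$-greatest lower bound (equal to $\mundef$ whenever both $\mtrue$ and $\mfalse$ occur, and equal to the common value otherwise), so $(\mos{\bool},\preceq_\bool)$ is a complete meet-semilattice.

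For the inductive step, let $\pi=\rho\to\pi'$ with $\pi'$ a predicate type for which the proposition holds. Both orderings on $\mos{\pi}=\mo{\rho}\to\mos{\pi'}$ are defined pointwise, so the argument reduces to standard facts about function spaces. For $\leq_\pi$, since $(\mos{\pi'},\leq_{\pi'})$ is a complete lattice by hypothesis, for every (possibly empty) subset $S\subseteq\mos{\pi}$ the pointwise assignments $d\mapsto\bigvee_{\leq_{\pi'}}\{f(d)\mid f\in S\}$ and $d\mapsto\bigwedge_{\leq_{\pi'}}\{f(d)\mid f\in S\}$ are well-defined elements of $\mos{\pi}$, and a routine verification shows that they are the $\leq_\pi$-least upper bound and greatest lower bound of $S$. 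For $\preceq_\pi$, the inductive hypothesis only supplies greatest lower bounds of non-empty subsets of $\mos{\pi'}$; however, if $S\subseteq\mos{\pi}$ is non-empty then for every $d\in\mo{\rho}$ the set $\{f(d)\mid f\in S\}$ is again non-empty, so its $\preceq_{\pi'}$-glb exists by induction, and the pointwise function $d\mapsto\glb_{\preceq_{\pi'}}\{f(d)\mid f\in S\}$ yields the $\preceq_\pi$-greatest lower bound of $S$.

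The main subtlety, rather than any deep obstacle, is to keep the two orderings separate and to remember that $\preceq$ yields only a meet-semilattice, so non-emptiness must be tracked carefully when passing from $S$ to the argumentwise sets $\{f(d)\mid f\in S\}$ in the inductive step. Once this is kept in mind, the remainder is a textbook pointwise construction, with the inductive hypothesis on $\pi'$ guaranteeing the existence of all required extrema in the codomain.
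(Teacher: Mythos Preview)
Your proof is correct and follows essentially the same approach as the paper: structural induction on the predicate type with pointwise construction of the required extrema in the inductive step. If anything, your version is more detailed, since the paper treats the base case as obvious and omits the $\preceq$ half of the inductive step as ``analogous,'' whereas you spell out the non-emptiness bookkeeping needed there.
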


We denote by $\bigvee_{\leq_{\pi}}$ and $\bigwedge_{\leq_{\pi}}$ the
 lub and glb operations of the lattice $(\mos{\pi}, \leq_{\pi})$;
it can easily be verified that these operations are extensions of the
corresponding operations implied by
Proposition~\ref{semantics_of_types_lattice}. We denote by
$\bigwedge_{\preceq_{\pi}}$ the glb in $(\mos{\pi}, \preceq_{\pi})$. Just like
how a two-valued interpretation of a predicate $\pi$ of type
$\rho_1\to\dots\to\rho_n\to \bool$ can be viewed as a \emph{set}, an element of
$\mos{\pi}$ can be viewed as a \emph{partial set}, assigning to each tuple in
$\mo{\rho_1}\times \dots \times \mo{\rho_n}$ one of three truth values
($\mtrue$, meaning the tuple is \emph{in} the set, $\mfalse$ meaning it is
not in the set, or $\mundef$ meaning it is not determined if it is in the set or
not). This explains why the \emph{arguments} are interpreted classically: a
partial set decides for each \emph{actual} (\ie two-valued) object whether it is
in the set or not; it does not make statements about \emph{partial} (\ie three-valued) objects.
Due to the fact that the arguments of relations are interpreted classically, the definition of Herbrand
states that we use below, is the same as that of Definition~\ref{def:state_Herbrand}.
A \emph{three-valued Herbrand interpretation} is defined analogously to a
two-valued one (Definition~\ref{def:interpretation_Herbrand}), the only difference being that
the meaning of a predicate constant $\mathsf{p} : \pi$ is now an element of $\mos{\pi}_{U_{\mathsf{P}}}$.
We will use caligraphic fonts (\eg ${\cal I},{\cal J}$) to differentiate three-valued interpretations
from two-valued ones. The set of all three-valued Herbrand interpretations is denoted by ${\cal H}_\mathsf{P}$.
Since $\mo{\pi} \subseteq \mos{\pi}$ it also follows that $H_\mathsf{P} \subseteq {\cal H}_\mathsf{P}$.
\begin{definition}\label{ordering-interpretations}
Let $\mathsf{P}$ be a program. We define the partial orders $\leq$ and $\preceq$ on ${\cal H}_\mathsf{P}$
as follows: for all ${\cal I}, {\cal J} \in  {\cal H}_\mathsf{P}$, ${\cal I} \leq {\cal J}$ (respectively,
${\cal I} \preceq {\cal J}$) iff for every predicate type $\pi$ and for every predicate constant
$\mathsf{p} : \pi$ of $\mathsf{P}$, ${\cal I}(\mathsf{p}) \leq_\pi {\cal J}(\mathsf{p})$ (respectively,
${\cal I}(\mathsf{p}) \preceq_\pi {\cal J}(\mathsf{p})$). 
\end{definition}

\begin{definition}\label{tuple-semantics}
  Let $\mathsf{P}$ be a program, ${\cal I}$ a three-valued Herbrand interpretation of $\mathsf{P}$, and
  $s$ a Herbrand state. The \emph{three-valued semantics} of expressions and bodies
  is defined as follows:
\begin{enumerate}
  \item $\mwrst{\mathsf{R}}{{\cal I}}{s} = s(\mathsf{R})$
  \item $\mwrst{\mathsf{c}}{{\cal I}}{s} = {\cal I}(\mathsf{c}) = \mathsf{c}$
  \item $\mwrst{\mathsf{p}}{{\cal I}}{s} = {\cal I}(\mathsf{p})$
  \item \label{item:threeval-apply}$\mwrst{(\mathsf{E}_1\ \mathsf{E}_2)}{{\cal I}}{s} = \bigwedge_{\preceq_{\pi}}\{\lsem \mathsf{E}_1 \rsem^{*}_s(\mathcal{I})(d) \mid d \in \lsem \rho\rsem, \lsem \mathsf{E}_2 \rsem^{*}_s({\cal I}) \preceq_{\rho} d\}$,
      for $\mathsf{E}_1\! :\! \rho \to \pi$ and $\mathsf{E}_2\! :\! \rho$
  \item $\mwrst{(\mathsf{E}_1\approx \mathsf{E}_2)}{{\cal I}}{s} = \begin{cases}
    \mathit{true},  & \text{if } \mwrst{\mathsf{E}_1}{{\cal I}}{s} = \mwrst{\mathsf{E}_2}{{\cal I}}{s} \\
    \mathit{false}, & \text{otherwise}
    \end{cases}$
  \item $\mwrst{(\sim \mathsf{E})}{{\cal I}}{s} =  (\mwrst{\mathsf{E}}{{\cal I}}{s})^{-1}$, with $\mathit{true}^{-1}\!=\!\mathit{false}$, $\mathit{false}^{-1}\!=\!\mathit{true}$
  and $\mathit{undef}^{-1}\!=\!\mathit{undef}$
  \item $\mwrst{(\mathsf{E}_1 \wedge \cdots \wedge \mathsf{E}_m)}{I}{s} =
    \bigwedge_{\leq_\bool}\{\mwrst{\mathsf{E}_1}{I}{s},\ldots,\mwrst{\mathsf{E}_m}{I}{s}\}$
\end{enumerate}
\end{definition}

Item~\ref{item:threeval-apply} is perhaps the most noteworthy.
To evaluate an expression $(\mathsf{E}_1\ \mathsf{E}_2)$, we cannot just take $\mwrst{\mathsf{E}_1}{{\cal I}}{s} $, which is a function $\mo{\rho}_{U_{\mathsf{P}}} \to \mos{\pi}_{U_{\mathsf{P}}}$, and apply it to $\mwrst{\mathsf{E}_2}{{\cal I}}{s}$, which is of type $\mos{\rho}_{U_{\mathsf{P}}}$.
Instead, we apply $\mwrst{\mathsf{E}_1}{{\cal I}}{s} $ to all ``two-valued extensions'' of  $\mwrst{\mathsf{E}_2}{{\cal I}}{s}$ and take the least precise element approximating all those results.
Our definition ensures that if $\mwrst{\mathsf{E}_2}{{\cal I}}{s}$ is a partial object, the result of the application
is the most precise outcome achievable by using information from all the two-valued extensions of the argument.

Application is always well-defined, \ie the set of two-valued extensions
of a three-valued element is always non-empty, as the following lemma suggests.
\begin{lemma}\label{two-valued-above-three-valued}
For every argument type $\rho$ and $d^* \in \mos{\rho}$, there exists $d \in \mo{\rho}$ such that $d^* \preceq_\rho d$.
\end{lemma}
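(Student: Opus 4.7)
The plan is to proceed by structural induction on the argument type $\rho$. The grammar gives $\rho := \basedom \mid \pi$ and $\pi := \bool \mid (\rho' \to \pi')$, so there are three cases to dispatch and one recursive case.

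First I would handle the base cases. If $\rho = \basedom$, then $\mos{\basedom} = \mo{\basedom} = U_{\mathsf{P}}$, and by Definition~\ref{def:orders_three-valued} the relation $\preceq_\basedom$ is the identity on $U_{\mathsf{P}}$, so taking $d := d^*$ works. If $\rho = \bool$ and $d^* \in \{\mfalse, \mundef, \mtrue\}$, then either $d^* \in \{\mfalse, \mtrue\} = \mo{\bool}$, in which case $d := d^*$ suffices, or $d^* = \mundef$, in which case $d := \mtrue$ (equivalently $\mfalse$) lies in $\mo{\bool}$ and satisfies $\mundef \prec_\bool d$ by definition of $\preceq_\bool$.

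For the inductive step, suppose $\rho$ is the predicate type $\rho' \to \pi'$ and the lemma holds for all strictly smaller types (in particular for $\pi'$). An element $d^* \in \mos{\rho' \to \pi'}$ is, by Definition~\ref{def:orders_three-valued}, a function $d^* : \mo{\rho'} \to \mos{\pi'}$. For each $x \in \mo{\rho'}$ we invoke the induction hypothesis on $d^*(x) \in \mos{\pi'}$ to obtain some $y_x \in \mo{\pi'}$ with $d^*(x) \preceq_{\pi'} y_x$, and then define $d(x) := y_x$. This $d$ lies in $\mo{\rho' \to \pi'} = \mo{\rho'} \to \mo{\pi'}$ and, by the pointwise definition of $\preceq_{\rho' \to \pi'}$, satisfies $d^* \preceq_{\rho' \to \pi'} d$.

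The only mildly subtle point is making sure the induction is well-founded: $\pi'$ is an argument type (since argument types include all predicate types via $\rho := \basedom \mid \pi$), and it is a proper syntactic subtype of $\rho' \to \pi'$, so the induction hypothesis applies. Everything else is a direct unpacking of the definitions, and the containment $\mo{\rho} \subseteq \mos{\rho}$ noted after Definition~\ref{def:orders_three-valued} guarantees that the two-valued witnesses we produce are indeed comparable to the three-valued starting point under $\preceq_\rho$. I do not anticipate any real obstacle; the argument is a clean induction, and the main care is simply to keep the mutual recursion between $\rho$ and $\pi$ straight and to record that the $d(x) := y_x$ assignment uses a choice of witness at each $x$ (harmless, since we only need existence of $d$).
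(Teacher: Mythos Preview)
Your proof is correct. The paper takes a slightly different route: instead of a structural induction on $\rho$, it handles the predicate case in one shot by unrolling $\pi$ to its normal form $\rho_1 \to \cdots \to \rho_n \to o$ and defining $d(\overline{x})$ explicitly as $d^*(\overline{x})$ when that value is already in $\{\mfalse,\mtrue\}$ and as $\mfalse$ otherwise. This gives a canonical witness without any per-point choice, whereas your inductive step selects a $y_x$ at each $x$. Both arguments are equally valid; the paper's is more concrete (a single closed-form definition), while yours is more modular and would adapt unchanged if the type grammar were extended.
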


Moreover, as the following lemma suggests, the above semantics (Definition~\ref{tuple-semantics})
is compatible with the standard semantics (see, Definition~\ref{standard-semantics})
when restricted to two-valued interpretations.

\begin{lemma}\label{mo-mos-coincide}
Let $\mathsf{P}$ be a program, $I \in H_\mathsf{P}$ and $s \in S_\mathsf{P}$.
Then, for every expression $\mathsf{E}$, $\mo{\mathsf{E}}_s(I) = \mos{\mathsf{E}}_s(I)$.
\end{lemma}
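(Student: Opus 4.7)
The proof will proceed by structural induction on the expression $\mathsf{E}$, showing that whenever the interpretation $I$ is two-valued (\ie $I \in H_{\mathsf{P}} \subseteq \mathcal{H}_{\mathsf{P}}$), the three-valued evaluation collapses to the two-valued one. The base cases (variables, individual constants, predicate constants) are immediate from the corresponding clauses of Definitions~\ref{standard-semantics} and \ref{tuple-semantics}, together with the observation that $\mo{\pi} \subseteq \mos{\pi}$ so that $I(\mathsf{p})$ denotes the same element in both semantics. The cases for equality, negation, and conjunction follow directly from the induction hypothesis, since on two-valued truth values the clauses in Definition~\ref{tuple-semantics} reduce syntactically to those in Definition~\ref{standard-semantics}.

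The main work is in the application case $(\mathsf{E}_1\ \mathsf{E}_2)$, with $\mathsf{E}_1 : \rho \to \pi$ and $\mathsf{E}_2 : \rho$. By the induction hypothesis, $\mos{\mathsf{E}_1}_s(I) = \mo{\mathsf{E}_1}_s(I)$ and $\mos{\mathsf{E}_2}_s(I) = \mo{\mathsf{E}_2}_s(I)$; in particular, the latter is a two-valued element $d^{\star} \in \mo{\rho}$. The definition of the three-valued application then evaluates
\[
\mwrst{(\mathsf{E}_1\ \mathsf{E}_2)}{I}{s} \;=\; \bigwedge\nolimits_{\preceq_\pi}\{\, \mos{\mathsf{E}_1}_s(I)(d) \mid d \in \mo{\rho},\ d^{\star} \preceq_\rho d \,\}.
\]
To make this collapse to $\mo{\mathsf{E}_1}_s(I)(d^{\star})$, which is precisely the two-valued value $\mwrs{(\mathsf{E}_1\ \mathsf{E}_2)}{I}{s}$, I need the set indexing the meet to be the singleton $\{d^{\star}\}$. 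This reduces to the auxiliary claim that every two-valued element of an argument type is $\preceq_\rho$-maximal in $\mos{\rho}$.

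I would prove this auxiliary claim by a side induction on argument types. For $\bool$, both $\mtrue$ and $\mfalse$ are maximal by inspection of $\preceq_\bool$. For $\basedom$, $\preceq_\basedom$ is the identity, so maximality is trivial. For a predicate type $\rho \to \pi$, given $f \in \mo{\rho \to \pi}$ and $g \in \mos{\rho \to \pi}$ with $f \preceq g$, the pointwise definition of $\preceq$ together with the inductive maximality of each $f(d) \in \mo{\pi}$ forces $f(d) = g(d)$ for all $d \in \mo{\rho}$, whence $f = g$. The hard part of the main proof is exactly this subtle interplay between the typed recursion and the $\preceq$-maximality argument in the application case; once it is in hand, the remaining inductive steps are routine.
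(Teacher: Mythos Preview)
Your proposal is correct and follows essentially the same approach as the paper: structural induction on expressions, with the only nontrivial case being application, handled via an auxiliary type-induction showing that two-valued elements admit no strictly $\preceq$-larger elements. The paper phrases the auxiliary fact slightly differently (that $\preceq_\rho$ restricted to $\mo{\rho}$ is the identity, rather than your stronger statement of $\preceq$-maximality in all of $\mos{\rho}$), but both versions are proved by the same easy type induction and either suffices to collapse the indexing set to a singleton.
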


This three-valued valuation of bodies, immediately gives us a notion of three-valued model as well as a three-valued immediate consequence operator.

\begin{definition}\label{def:three-valued_model}
Let $\mathsf{P}$ be a program and ${\cal M}$ be a three-valued Herbrand interpretation of $\mathsf{P}$.
Then, ${\cal M}$ is a \emph{three-valued Herbrand model} of $\mathsf{P}$ iff for every rule
$\mathsf{p}\ \overline{\mathsf{R}} \lrule \mathsf{B}$ in $\mathsf{P}$
and for every Herbrand state $s$, $\mwrst{\mathsf{B}}{{\cal M}}{s} \leq_\bool \mwrst{\mathsf{p}\ \overline{\mathsf{R}}}{{\cal M}}{s}$.
\end{definition}

For the special case where in the above definition ${\cal M} \in H_\mathsf{P}$,
it is clear from Lemma~\ref{mo-mos-coincide} that
Definition~\ref{def:three-valued_model} coincides with
Definition~\ref{def:twovalued-models}.

\begin{definition}\label{def:three-valuedTP}
  Let $\mathsf{P}$ be a program. The \emph{three-valued immediate consequence operator}
  ${\cal T}_\mathsf{P} : {\cal H}_\mathsf{P} \to {\cal H}_\mathsf{P}$ is defined
  for every predicate constant $\mathsf{p} : \rho_1 \to \cdots \to \rho_n \to \bool$ in $\mathsf{P}$ and
  all $d_1 \in \mo{\rho_1},\ldots, d_n \in \mo{\rho_n}$, as:
  ${\cal T}_{\mathsf{P}}({\cal I})(\mathsf{p})\ \overline{d} =
        \bigvee_{\leq_\bool}\{
          \mwrst{\mathsf{B}}{{\cal I}}{s[\overline{\mathsf{R}}/\overline{d}]} \mid
                  \mbox{$s\in S_{\mathsf{P}}$ and
                        $(\mathsf{p}\ \overline{\mathsf{R}} \lrule \mathsf{B})$ in $\mathsf{P}$}\}$.
\end{definition}

The proof of the following proposition is similar to that of Proposition~\ref{model-iff-tp-prefixpoint}.
%
\begin{proposition}\label{three-valued-model-iff-tp-prefixpoint}
Let $\mathsf{P}$ be a program and ${\cal I} \in {\cal H}_\mathsf{P}$. Then, ${\cal I}$ is a three-valued
model of $\mathsf{P}$ if and only if ${\cal I}$ is a pre-fixpoint of ${\cal T}_\mathsf{P}$.
\end{proposition}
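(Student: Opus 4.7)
The plan is to establish the biconditional by unfolding the definitions of three-valued model and of pre-fixpoint, and reducing both to the same pointwise $\leq_\bool$-inequality. The argument closely mirrors the proof sketch of Proposition~\ref{model-iff-tp-prefixpoint}; the only genuinely new ingredient is to check that the three-valued application operator of Definition~\ref{tuple-semantics}(\ref{item:threeval-apply}) degenerates to ordinary application when its argument is already two-valued.

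The first step I would carry out is an auxiliary identity: for every predicate constant $\mathsf{p}:\rho_1\to\cdots\to\rho_n\to\bool$ and all $d_i \in \mo{\rho_i}$,
\[
  \mwrst{\mathsf{p}\ \overline{\mathsf{R}}}{\mathcal{I}}{s[\overline{\mathsf{R}}/\overline{d}]} \;=\; \mathcal{I}(\mathsf{p})\ \overline{d}.
\]
This is proved by induction on $n$. At each application step the clause gives $\bigwedge_{\preceq_{\rho_i}}\{f(d)\mid d\in\mo{\rho_i},\ d_i\preceq_{\rho_i} d\}$; since each $d_i$ is two-valued, it is $\preceq_{\rho_i}$-maximal (on $\bool$ because $\mtrue$ and $\mfalse$ are maximal, on $\basedom$ because $\preceq_\basedom$ is trivial, and on higher argument types by a pointwise lifting), so the set of extensions collapses to $\{d_i\}$ and the glb is simply $f(d_i)$. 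Iterating this $n$ times yields the identity.

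Given this identity, both directions are routine bookkeeping. For the ``only if'' direction, suppose $\mathcal{I}$ is a three-valued model. Fix $\mathsf{p}:\rho_1\to\cdots\to\rho_n\to\bool$ and $\overline{d}\in\mo{\rho_1}\times\cdots\times\mo{\rho_n}$. For every rule $(\mathsf{p}\ \overline{\mathsf{R}}\lrule\mathsf{B})\in\mathsf{P}$ and every state $s$, the model condition applied to $s[\overline{\mathsf{R}}/\overline{d}]$ gives $\mwrst{\mathsf{B}}{\mathcal{I}}{s[\overline{\mathsf{R}}/\overline{d}]}\leq_\bool \mwrst{\mathsf{p}\ \overline{\mathsf{R}}}{\mathcal{I}}{s[\overline{\mathsf{R}}/\overline{d}]}=\mathcal{I}(\mathsf{p})\ \overline{d}$. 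Taking the $\leq_\bool$-lub over all such bodies and states on the left yields $\mathcal{T}_\mathsf{P}(\mathcal{I})(\mathsf{p})\ \overline{d}\leq_\bool \mathcal{I}(\mathsf{p})\ \overline{d}$, i.e.\ $\mathcal{T}_\mathsf{P}(\mathcal{I})\leq\mathcal{I}$.

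For the converse, suppose $\mathcal{T}_\mathsf{P}(\mathcal{I})\leq\mathcal{I}$. Fix a rule $\mathsf{p}\ \overline{\mathsf{R}}\lrule\mathsf{B}$ in $\mathsf{P}$ and any Herbrand state $s$. Let $\overline{d}=s(\overline{\mathsf{R}})$, so $s[\overline{\mathsf{R}}/\overline{d}]=s$; then $\mwrst{\mathsf{B}}{\mathcal{I}}{s}$ is one of the values whose $\leq_\bool$-lub defines $\mathcal{T}_\mathsf{P}(\mathcal{I})(\mathsf{p})\ \overline{d}$, so $\mwrst{\mathsf{B}}{\mathcal{I}}{s}\leq_\bool \mathcal{T}_\mathsf{P}(\mathcal{I})(\mathsf{p})\ \overline{d}\leq_\bool \mathcal{I}(\mathsf{p})\ \overline{d}=\mwrst{\mathsf{p}\ \overline{\mathsf{R}}}{\mathcal{I}}{s}$, which is the model condition. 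The main obstacle is purely the auxiliary identity about application on two-valued arguments; once that is in place, everything else is definition unfolding.
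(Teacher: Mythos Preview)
Your proposal is correct and follows essentially the same route the paper indicates (``similar to that of Proposition~\ref{model-iff-tp-prefixpoint}''): the key extra ingredient in the three-valued setting is precisely the head identity $\mwrst{\mathsf{p}\,\overline{\mathsf{R}}}{\mathcal{I}}{s[\overline{\mathsf{R}}/\overline{d}]}=\mathcal{I}(\mathsf{p})\,\overline{d}$, which the paper would obtain from the fact that two-valued elements are $\preceq$-maximal (Lemma~\ref{two-valued-is-maximal}, equivalently Lemma~\ref{preceq-over-two-valued}). Your direct lub argument is a cleaner packaging than the paper's contradiction-style proof of Proposition~\ref{model-iff-tp-prefixpoint}, but the underlying reasoning is the same.
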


\section{Approximation Fixpoint Theory and the Stable Model Semantics}\label{sec:AFT_stable_model_semantics}
We now define the \emph{two-valued} and \emph{three-valued stable models} of a
program $\mathsf{P}$. To achieve this goal, we use the machinery of
\emph{approximation fixpoint theory
(AFT)}~\citep{DMT04Ultimateapproximationapplicationnonmonotonicknowledgerepresentation}.
In the rest of this section, we assume the reader has a basic familiarity
with~\citep{DMT04Ultimateapproximationapplicationnonmonotonicknowledgerepresentation}.
As mentioned before, the two-valued immediate consequence operator
$T_\mathsf{P}: H_{\mathsf{P}} \to H_{\mathsf{P}}$ can be non-monotone, meaning
it is not clear what its fixpoints of interest would be. The core idea behind
AFT is to ``approximate'' $T_{\mathsf{P}}$ with a function $\AP$ which is
$\preceq$-monotone. We can then study the fixpoints of $\AP$, which shed light
to the fixpoints of $T_{\mathsf{P}}$. While we already have such a candidate
function, namely ${\cal T}_\mathsf{P}$, AFT requires a function that works on
\emph{pairs} (of interpretations). Therefore, we show that there is a simple
isomorphism between three-valued relations and (appropriate) pairs of two-valued
ones. This isomorphism also exists between three-valued interpretations and
(appropriate) pairs of two-valued ones.

%
\begin{definition}\label{def:consistent_lattice}\label{orderings_on_pairs}
Let $(L,\leq)$ be a complete lattice. We define $L^{c} =\{(x,y) \in L \times L \mid x \leq y\}$.
Moreover, we define the relations $\leq$ and  $\preceq$, so that
for all $(x,y),(x',y') \in L^c$:
%
$(x, y) \leq (x', y')$ iff $x \leq x'$ and $y \leq y'$, and
$(x, y) \preceq (x', y')$ iff $x \leq x'$ and $y' \leq y$.
%
\end{definition}
\begin{proposition}\label{tau-isomorphism-preserves}
For every predicate type $\pi$ there exists a bijection $\tau_\pi: \mos{\pi} \to \cp{\mo{\pi}}$
with inverse $\tau^{-1}_\pi:  \cp{\mo{\pi}} \to \mos{\pi}$, that both preserve the orderings $\leq$ and $\preceq$
of elements between $\mos{\pi}$ and $\cp{\mo{\pi}}$. Moreover, there exists a bijection $\tau: {\cal H}_{\mathsf{P}} \to H^{c}_{\mathsf{P}}$ with inverse $\tau^{-1}: H^{c}_{\mathsf{P}} \to {\cal H}_{\mathsf{P}}$,
that both preserve the orderings $\leq$ and $\preceq$ between ${\cal H}_{\mathsf{P}}$ and
$H^{c}_{\mathsf{P}}$.
\end{proposition}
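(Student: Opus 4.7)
The proof proceeds by structural induction on the predicate type $\pi$, exploiting the fact that in Definition~\ref{def:orders_three-valued} the arguments of functions in $\mos{\rho\to\pi}$ always live in the two-valued domain $\mo{\rho}$. This means the two function spaces $\mos{\rho\to\pi}$ and $\mo{\rho\to\pi}$ share the same input type, which is exactly what makes an inductive definition of $\tau_\pi$ clean.

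\textbf{Base case} ($\pi=\bool$). Define $\tau_\bool$ explicitly by $\mfalse\mapsto(\mfalse,\mfalse)$, $\mundef\mapsto(\mfalse,\mtrue)$, $\mtrue\mapsto(\mtrue,\mtrue)$. These are exactly the three elements of $\cp{\mo{\bool}}=\{(x,y)\in\{\mfalse,\mtrue\}^2 \mid x\leq_\bool y\}$, so this is a bijection. A direct three-case check against the orderings $\leq_\bool$ on $\mos{\bool}$ and $\leq$ on $\cp{\mo{\bool}}$, and against $\preceq_\bool$ versus $\preceq$, verifies ordering preservation (e.g.\ $\mundef\preceq\mtrue$ maps to $(\mfalse,\mtrue)\preceq(\mtrue,\mtrue)$ because $\mfalse\leq\mtrue$ on the left and $\mtrue\leq\mtrue$ on the right).

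\textbf{Inductive step} ($\pi=\rho\to\pi'$). Given $f\in\mos{\rho\to\pi'}$, define $\tau_\pi(f)=(f^-,f^+)$ pointwise by $(f^-(d),f^+(d))=\tau_{\pi'}(f(d))$ for every $d\in\mo{\rho}$. The induction hypothesis guarantees $f^-(d)\leq_{\pi'} f^+(d)$, whence $f^-\leq_{\rho\to\pi'} f^+$, so $(f^-,f^+)\in\cp{\mo{\rho\to\pi'}}$. Conversely, for $(g,h)\in\cp{\mo{\rho\to\pi'}}$, set $\tau_\pi^{-1}((g,h))(d)=\tau_{\pi'}^{-1}((g(d),h(d)))$; the fact that $g(d)\leq_{\pi'} h(d)$ ensures $(g(d),h(d))\in\cp{\mo{\pi'}}$, so the right-hand side is well defined in $\mos{\pi'}$. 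That these are mutual inverses is immediate from the pointwise definition and the corresponding property at $\pi'$.

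For ordering preservation, both the orderings on $\mos{\rho\to\pi'}$ (from Definition~\ref{def:orders_three-valued}) and on $\cp{\mo{\rho\to\pi'}}$ (from Definition~\ref{def:consistent_lattice} lifted pointwise) are defined pointwise in $d\in\mo{\rho}$. Hence $f\leq_{\rho\to\pi'} g$ iff $f(d)\leq_{\pi'} g(d)$ for all $d$, which by the induction hypothesis is equivalent to $\tau_{\pi'}(f(d))\leq\tau_{\pi'}(g(d))$ for all $d$, \ie to $\tau_\pi(f)\leq\tau_\pi(g)$; the argument for $\preceq$ is identical. The main point to be careful about is this pointwise reduction, and the reason it works is exactly the shared two-valued argument domain: we never need to reason about how $f$ acts on partial inputs.

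\textbf{Lifting to interpretations.} Given ${\cal I}\in{\cal H}_\mathsf{P}$, define $\tau({\cal I})=(I_1,I_2)$ where, for each predicate constant $\mathsf{p}:\pi$ in $\mathsf{P}$, $(I_1(\mathsf{p}),I_2(\mathsf{p}))=\tau_\pi({\cal I}(\mathsf{p}))$. Since each $\tau_\pi$ is a bijection preserving both orderings, and since both orderings on interpretations (Definitions~\ref{def:interpretation_Herbrand} and~\ref{ordering-interpretations}) and on $H^c_{\mathsf{P}}$ are defined predicate-constant-wise, the bijectivity and ordering preservation of $\tau$ follow immediately from the per-type result. The inverse is given analogously by $\tau^{-1}((I_1,I_2))(\mathsf{p})=\tau_\pi^{-1}((I_1(\mathsf{p}),I_2(\mathsf{p})))$, and the condition $I_1\leq I_2$ in $H^c_\mathsf{P}$ ensures that each pair $(I_1(\mathsf{p}),I_2(\mathsf{p}))$ lies in $\cp{\mo{\pi}}$, so that this inverse is well defined.
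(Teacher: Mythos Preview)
Your proof is correct and follows essentially the same approach as the paper: the paper also defines $\tau_\pi$ explicitly on $\bool$ and then pointwise on $\rho\to\pi'$ (using the two-valued argument domain $\mo{\rho}$), and proves bijectivity and preservation of $\leq$ and $\preceq$ by structural induction on $\pi$, before lifting predicate-constant-wise to interpretations. The only cosmetic difference is that the paper enumerates the six properties (order preservation for $\tau_\pi$ and $\tau_\pi^{-1}$ under each ordering, plus the two inverse identities) and verifies them separately, whereas you bundle them into an ``iff'' via the induction hypothesis; both are fine.
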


When viewing elements of $\mos{\pi}$ as partial sets and elements of $\mo{\pi}$
as sets, the isomorphism maps a partial set onto the pair with first
component all the \emph{certain} elements of the partial set (those mapped to
$\mtrue$) and second component all the \emph{possible} elements (those mapped
to $\mtrue$ or $\mundef$).
Using these bijections, we can now define  $\AP$ which, as we demonstrate,
is an approximator of $T_{\mathsf{P}}$. Intuitively, $\AP$ is the ``pair version of
${\cal T}_{\mathsf{P}}$'' (instead of handling three-valued interpretations, it handles pairs of two-valued ones).
\begin{definition}\label{def:ATP}
	For each program $\mathsf{P}$,
$\ATP: H^{c}_\mathsf{P} \to H^{c}_\mathsf{P}$ is defined as
$\ATP(I,J) = \tau({\cal T}_\mathsf{P}(\tau^{-1}(I,J)))$.
\end{definition}
\begin{lemma}\label{ATP_is_approximator_of_TP}
Let $\mathsf{P}$ be a program. In the terminology of \citet{DMT04Ultimateapproximationapplicationnonmonotonicknowledgerepresentation}, $\ATP: H^{c}_\mathsf{P} \to H^{c}_\mathsf{P}$
is a consistent
approximator of $T_\mathsf{P}$.
\end{lemma}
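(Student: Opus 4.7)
The plan is to verify the three properties required for $\ATP$ to be a consistent approximator of $T_\mathsf{P}$ in the sense of AFT: (i) well-definedness as a map $H^{c}_{\mathsf{P}} \to H^{c}_{\mathsf{P}}$ (which also encodes the consistency requirement), (ii) $\preceq$-monotonicity, and (iii) the extension property $\ATP(I,I)=(T_\mathsf{P}(I),T_\mathsf{P}(I))$ for every $I \in H_\mathsf{P}$.

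Property (i) is immediate from Definition~\ref{def:ATP} and Proposition~\ref{tau-isomorphism-preserves}: $\tau^{-1}(I,J)$ is a three-valued interpretation, ${\cal T}_\mathsf{P}$ maps it to another three-valued interpretation, and $\tau$ then yields an element of $H^{c}_\mathsf{P}$. Property (iii) will be handled via Lemma~\ref{mo-mos-coincide}. For $I \in H_\mathsf{P}$, the pair $(I,I)$ satisfies $\tau^{-1}(I,I)=I$ (under the identification of $H_\mathsf{P}$ with the ``exact'' pairs of $H^{c}_\mathsf{P}$), and then $\mwrst{\mathsf{B}}{I}{s}=\mwrs{\mathsf{B}}{I}{s}$ by Lemma~\ref{mo-mos-coincide}, so ${\cal T}_\mathsf{P}(I)=T_\mathsf{P}(I)$. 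Applying $\tau$ gives the desired identity.

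The main task is (ii). Since $\tau$ and $\tau^{-1}$ both preserve $\preceq$, it suffices to show that ${\cal T}_\mathsf{P}:{\cal H}_\mathsf{P}\to{\cal H}_\mathsf{P}$ is $\preceq$-monotone. Inspecting Definition~\ref{def:three-valuedTP}, and using that $\bigvee_{\leq_\bool}$ is $\preceq_\bool$-monotone on Kleene truth values, this reduces to proving that for every expression $\mathsf{E}$ and every state $s$, the map ${\cal I}\mapsto \mwrst{\mathsf{E}}{{\cal I}}{s}$ is $\preceq$-monotone. I would do this by structural induction on $\mathsf{E}$ following Definition~\ref{tuple-semantics}. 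The cases of variables, individual constants and predicate constants are trivial; the equality case is direct; the negation case uses that $(\cdot)^{-1}$ is $\preceq_\bool$-monotone on Kleene values (it fixes $\mundef$ and swaps the two-valued results); and the conjunction case uses $\preceq_\bool$-monotonicity of $\bigwedge_{\leq_\bool}$ on $\{\mtrue,\mfalse,\mundef\}$.

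The subtle case, and hence the main obstacle, is application $(\mathsf{E}_1\ \mathsf{E}_2)$ whose semantics is a $\preceq_\pi$-glb over the set of two-valued extensions of $\mwrst{\mathsf{E}_2}{\cdot}{s}$. Suppose ${\cal I}\preceq{\cal J}$; by the induction hypothesis, $\mwrst{\mathsf{E}_1}{{\cal I}}{s}\preceq_{\rho\to\pi}\mwrst{\mathsf{E}_1}{{\cal J}}{s}$ (that is, pointwise $\preceq_\pi$ on $\mo{\rho}$) and $\mwrst{\mathsf{E}_2}{{\cal I}}{s}\preceq_\rho\mwrst{\mathsf{E}_2}{{\cal J}}{s}$. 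By transitivity of $\preceq_\rho$, any two-valued $d$ with $\mwrst{\mathsf{E}_2}{{\cal J}}{s}\preceq_\rho d$ also satisfies $\mwrst{\mathsf{E}_2}{{\cal I}}{s}\preceq_\rho d$, so the indexing set of the glb at ${\cal J}$ is contained in that at ${\cal I}$, and each term in the ${\cal J}$-glb is $\preceq_\pi$-above the corresponding term in the ${\cal I}$-glb. Since taking the $\preceq$-glb over a smaller set of $\preceq$-larger elements produces a $\preceq$-larger result (a standard property of complete meet-semilattices, which exists here by Proposition~\ref{semantics_of_types_lattice_cpo}), we conclude $\mwrst{(\mathsf{E}_1\ \mathsf{E}_2)}{{\cal I}}{s}\preceq_\pi\mwrst{(\mathsf{E}_1\ \mathsf{E}_2)}{{\cal J}}{s}$. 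The non-emptiness of the indexing set, which is needed for the glb to be defined, is exactly Lemma~\ref{two-valued-above-three-valued}. Combining (i)--(iii), $\ATP$ is a consistent approximator of $T_\mathsf{P}$.
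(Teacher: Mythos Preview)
Your proposal is correct and follows essentially the same route as the paper. The paper factors the argument through two auxiliary lemmas---one establishing $\preceq$-monotonicity of $\mathcal{I}\mapsto\mwrst{\mathsf{E}}{\mathcal{I}}{s}$ by structural induction (with the application case handled exactly as you do) and one showing that $\bigvee_{\leq_\pi}$ preserves $\preceq_\pi$---together with $\tau^{-1}(I,I)=I$ and Lemma~\ref{mo-mos-coincide} for the extension property; your write-up covers the same steps, and your explicit appeal to Lemma~\ref{two-valued-above-three-valued} for non-emptiness of the indexing set is a detail the paper leaves implicit.
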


Since $\ATP$ is ``the pair version of ${\cal T}_{\mathsf{P}}$'', it is not a surprise that it also
captures all the three-valued models of $\mathsf{P}$.

\begin{lemma}\label{model-iff-atp-prefixpoint}
Let $\mathsf{P}$ be a program and $(I, J)\in H^{c}_\mathsf{P}$.
Then, $(I, J)$ is a pre-fixpoint of $\ATP$ if and only if $\tau^{-1}(I, J)$
is a three-valued model of $\mathsf{P}$.
\end{lemma}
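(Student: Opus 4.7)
The proof plan is essentially to chase the definitions, using Proposition~\ref{three-valued-model-iff-tp-prefixpoint} as the bridge between models and pre-fixpoints. The key observation is that $\ATP$ was defined precisely as the conjugate of ${\cal T}_\mathsf{P}$ through the bijection $\tau$, and $\tau$ (together with its inverse) preserves the truth ordering $\leq$ by Proposition~\ref{tau-isomorphism-preserves}. Since pre-fixpoints for both the three-valued operator and its pair-version are defined with respect to $\leq$, the equivalence should transfer across the bijection without friction.

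More concretely, I would first unfold what $(I,J)$ being a pre-fixpoint of $\ATP$ means: $\ATP(I,J) \leq (I,J)$, which by Definition~\ref{def:ATP} is $\tau({\cal T}_\mathsf{P}(\tau^{-1}(I,J))) \leq (I,J)$. Applying $\tau^{-1}$ to both sides and using that $\tau^{-1}$ preserves $\leq$ (Proposition~\ref{tau-isomorphism-preserves}), this is equivalent to ${\cal T}_\mathsf{P}(\tau^{-1}(I,J)) \leq \tau^{-1}(I,J)$, i.e., that $\tau^{-1}(I,J)$ is a pre-fixpoint of ${\cal T}_\mathsf{P}$. By Proposition~\ref{three-valued-model-iff-tp-prefixpoint}, this in turn is equivalent to $\tau^{-1}(I,J)$ being a three-valued model of $\mathsf{P}$. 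Each step is an ``if and only if,'' so the chain yields the desired biconditional.

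The main thing to verify carefully is that the $\leq$ ordering on $H^c_\mathsf{P}$ from Definition~\ref{def:consistent_lattice} is indeed the one that $\tau$ preserves, so that pre-fixpoints of $\ATP$ correspond exactly to pre-fixpoints of ${\cal T}_\mathsf{P}$ under $\tau^{-1}$. This is precisely what Proposition~\ref{tau-isomorphism-preserves} gives. No subtle obstacle is expected here; the lemma is essentially a bookkeeping consequence of the isomorphism together with the already-proven characterization of three-valued models as pre-fixpoints. If anything, the only care needed is to remember that ``pre-fixpoint'' here refers to the truth ordering, not the precision ordering, which matches the convention in Proposition~\ref{three-valued-model-iff-tp-prefixpoint}.
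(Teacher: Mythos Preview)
Your proposal is correct and follows essentially the same approach as the paper: unfold the definition of $\ATP$, use that $\tau$ and $\tau^{-1}$ preserve $\leq$ (Proposition~\ref{tau-isomorphism-preserves}) to transfer the pre-fixpoint condition to ${\cal T}_\mathsf{P}$, and conclude via Proposition~\ref{three-valued-model-iff-tp-prefixpoint}. The paper's proof is the same chain of equivalences, stated slightly more tersely.
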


Due to the above lemma, by stretching notation, when $(I,J)$ is a pre-fixpoint of $\ATP$
we will also say that $(I,J)$ is a model of $\mathsf{P}$.

The power of AFT comes from the fact that once an approximator is defined, it immediately defines a whole range of semantics.
In other words, there is no need to reinvent the wheel. The following definition summarizes the different induced semantics.
 \begin{definition}\label{def:AFTsemantics}
 	Let $\mathsf{P}$ be a program, $\ATP$ the induced approximator and $I, J \in H_\mathsf{P}$. We call:
 	\begin{itemize}
 		\item $(I,J)$ a \emph{three-valued supported model of $\mathsf{P}$} if it is a fixpoint of $\ATP$;
 		\item $(I,J)$ a \emph{three-valued stable model of $\mathsf{P}$} if it is a stable fixpoint of $\ATP$;
          that is, if $I=\lfp \ATP(\cdot,J)_1$ and $J=\lfp \ATP(I,\cdot)_2$, where $\ATP(\cdot,J)_1$ is the function that maps an interpretation $X$ to the first component of $\ATP(X,J)$, and similarly for $\ATP(I,\cdot)_2$;
 		\item $(I,J)$ the \emph{Kripke-Kleene model of $\mathsf{P}$} if it is the $\preceq$-least fixpoint of $\ATP$;
 		\item $(I,J)$ the \emph{well-founded model of $\mathsf{P}$} if it is the well-founded fixpoint of $\ATP$, \ie if it is the $\preceq$-least three-valued stable model.
 	\end{itemize}
 \end{definition}

Following the correspondence indicated by the isomorphism between pairs and three-valued interpretations,
we will also call $\mathcal{M}$ a \emph{three-valued stable model} of $\textsf{P}$ if $\tau(\mathcal{M})$ is a
\emph{three-valued stable model of $\mathsf{P}$}. If $\mathcal{M}$ is a three-valued stable model and $\mathcal{M} \in H_\mathsf{P}$ (i.e., $\mathcal{M}$ is actually two-valued), we will call $\mathcal{M}$ a \emph{stable model}
of $\mathsf{P}$.

\section{Properties of the Stable Model Semantics}\label{sec:properties}

In this section we discuss various properties of the stable model semantics of higher-order
logic programs, which demonstrate that the proposed approach is indeed an extension of
classical stable models.
In the following results we use the term ``classical stable models'' to
refer to stable models in the sense of~\citep{GL88StableModelSemanticsLogicProgramming},
``classical three-valued stable models'' to refer to stable models in the sense of~\citep{P90Well-FoundedSemanticsCoincidesThree-ValuedStableSemantics}
and the term ``(three-valued) stable models'' to refer to the present semantics.

\begin{theorem}\label{coincides-with-classical-stable-models}
Let $\mathsf{P}$ be a propositional logic program. Then, ${\cal M}$ is a (three-valued)
stable model of $\mathsf{P}$ iff ${\cal M}$ is a classical (three-valued) stable model of $\mathsf{P}$.
\end{theorem}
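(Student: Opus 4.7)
The plan is to reduce the theorem to a known result: AFT applied to Fitting's three-valued immediate consequence operator on a propositional program recovers exactly the Gelfond-Lifschitz two-valued stable models and Przymusinski's three-valued stable models. Hence, it suffices to verify that the machinery defined in this paper, when restricted to propositional programs, collapses onto that standard setup.

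For a propositional program $\mathsf{P}$, every predicate constant has type $\bool$ and no argument variables occur in any rule. Consequently, the higher-order type hierarchy trivializes: item~\ref{item:threeval-apply} of Definition~\ref{tuple-semantics} is never invoked, Herbrand states play no role, and the three-valued evaluation of a body reduces to Kleene's three-valued negation and conjunction applied to the three-valued interpretation of the propositional atoms. Unfolding Definition~\ref{def:three-valuedTP} under this collapse yields ${\cal T}_{\mathsf{P}}({\cal I})(\mathsf{p}) = \bigvee_{\leq_\bool}\{\mwrst{\mathsf{B}}{{\cal I}}{s} \mid (\mathsf{p}\lrule\mathsf{B})\in\mathsf{P}\}$, which coincides with Fitting's three-valued immediate consequence operator.

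Next, I would relate $\AP$ to the approximator used throughout the AFT-for-LP literature. In the propositional case, the bijection $\tau$ of Proposition~\ref{tau-isomorphism-preserves} sends a three-valued interpretation ${\cal M}$ to the pair $(M_t, M_{tu})$ consisting of the atoms assigned $\mtrue$ and of the atoms assigned $\mtrue$ or $\mundef$ respectively, which is precisely the standard consistent-pair encoding of partial interpretations used in \mycitep{UltimateApproximator}. Under this encoding, $\AP = \tau \circ {\cal T}_{\mathsf{P}} \circ \tau^{-1}$ is exactly the canonical approximator of the van Emden-Kowalski consequence operator. The classical AFT theorem for propositional logic programs then yields both directions of the equivalence simultaneously, for both the two-valued and the three-valued cases.

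The argument is therefore verification rather than new computation, and the main point deserving care is the claim that item~\ref{item:threeval-apply} of Definition~\ref{tuple-semantics} genuinely plays no role in the propositional setting. This reduces to noting that no subterm of the form $(\mathsf{E}_1\ \mathsf{E}_2)$ can appear inside a propositional rule, since predicate variables and non-Boolean predicate constants are absent; hence the three-valued semantics truly reduces to Kleene evaluation over the propositional atoms. Once this is observed, the coincidence of operators and of the pair encoding is immediate, and the theorem follows directly from existing AFT results without further work.
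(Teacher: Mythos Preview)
Your proposal is correct and follows essentially the same approach as the paper's own proof: both argue that for propositional programs the application clause (item~\ref{item:threeval-apply} of Definition~\ref{tuple-semantics}) is never triggered, so ${\cal T}_{\mathsf{P}}$ collapses to the standard three-valued immediate consequence operator, whence $\ATP$ coincides with the approximator of \citet{DMT04Ultimateapproximationapplicationnonmonotonicknowledgerepresentation} and the result follows from the AFT characterization of (three-valued) stable models for propositional programs. Your write-up is slightly more explicit about why applications cannot arise and about the r\^ole of $\tau$, but the argument is the same.
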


A crucial property of classical (three-valued) stable models is that they are \emph{minimal} Herbrand models~\citep[Theorem 1]{GL88StableModelSemanticsLogicProgramming}
and~\citep[Proposition~3.1]{P90Well-FoundedSemanticsCoincidesThree-ValuedStableSemantics}.
This property is preserved by our extension.
\begin{theorem}\label{stable-models-are-minimal}
All (three-valued) stable models of a $\HOL$ program $\mathsf{P}$ are
$\leq$-minimal models of~$\mathsf{P}$.
\end{theorem}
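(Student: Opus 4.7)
The plan is to derive the claim from a standard AFT minimality argument for stable fixpoints, transferred to three-valued interpretations via the bijection $\tau$ of Proposition~\ref{tau-isomorphism-preserves}. Let $\mathcal{M}$ be a (three-valued) stable model of $\mathsf{P}$ and set $(I, J) := \tau(\mathcal{M})$. By Definition~\ref{def:AFTsemantics}, $(I, J)$ is a stable fixpoint of $\ATP$, so $I = \lfp\,\ATP(\cdot, J)_1$ and $J = \lfp\,\ATP(I, \cdot)_2$. Let $\mathcal{M}'$ be any three-valued model of $\mathsf{P}$ with $\mathcal{M}' \leq \mathcal{M}$, and set $(I', J') := \tau(\mathcal{M}')$. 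Then Proposition~\ref{tau-isomorphism-preserves} gives $(I', J') \leq (I, J)$ in the pair order, and Lemma~\ref{model-iff-atp-prefixpoint} gives that $(I', J')$ is a pre-fixpoint of $\ATP$. The goal is to conclude $(I', J') = (I, J)$, which via $\tau^{-1}$ yields $\mathcal{M}' = \mathcal{M}$.

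For the first coordinate, I would observe that $J' \leq J$ gives $(I', J) \preceq (I', J')$ in the pair precision order (Definition~\ref{def:consistent_lattice}), so by $\preceq$-monotonicity of $\ATP$ (Lemma~\ref{ATP_is_approximator_of_TP}) we get $\ATP(I', J)_1 \leq \ATP(I', J')_1 \leq I'$. Hence $I'$ is a pre-fixpoint of the $\leq$-monotone slice operator $\ATP(\cdot, J)_1$, and Knaster--Tarski yields $I = \lfp\,\ATP(\cdot, J)_1 \leq I'$, forcing $I = I'$. For the second coordinate, substituting $I = I'$ into the pre-fixpoint inequality $\ATP(I', J')_2 \leq J'$ shows that $J'$ is a pre-fixpoint of the $\leq$-monotone slice $\ATP(I, \cdot)_2$; Knaster--Tarski again gives $J = \lfp\,\ATP(I, \cdot)_2 \leq J'$, so $J = J'$.

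The routine part is verifying $\leq$-monotonicity of the slice operators $\ATP(\cdot, J)_1$ and $\ATP(I, \cdot)_2$ on the appropriate sublattices $\{X \in H_\mathsf{P} : X \leq J\}$ and $\{Y \in H_\mathsf{P} : I \leq Y\}$, together with well-definedness of the least fixpoints therein; both are direct consequences of $\ATP$ being a $\preceq$-monotone consistent approximator. The only point requiring mild care is keeping the precision and truth orderings on pairs straight, since the argument for the first coordinate invokes $\preceq$-monotonicity while the conclusion is drawn in the $\leq$-order. In effect the argument is the abstract AFT minimality theorem for stable fixpoints of consistent approximators, specialized to $\ATP$, which is precisely why the paper can advertise minimality as being obtained ``for free'' from AFT.
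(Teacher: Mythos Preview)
Your proposal is correct and follows essentially the same route as the paper. The paper factors the abstract minimality argument out into a separate proposition (a restatement of Proposition~3.14 of \citet{DMT04Ultimateapproximationapplicationnonmonotonicknowledgerepresentation}) showing that stable fixpoints of $\ATP$ are $\leq$-minimal pre-fixpoints, and then combines this with Lemma~\ref{model-iff-atp-prefixpoint} and the order-preservation of $\tau$; you inline that same argument, with the same two-step use of $\preceq$-monotonicity and Knaster--Tarski on the slice operators.
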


It is a well-known result in classical logic programming that if the well-founded model of a first-order program
is two-valued, then that model is its unique classical stable model~\cite[Corollary~5.6]{VRS88UnfoundedSetsWell-FoundedSemanticsGeneralLogic}.
This property generalizes in our setting.
\begin{theorem}\label{exact-wf-unique-stable}
Let $\mathsf{P}$ be a ${\cal HOL}$ program. If the well-founded model of
$\mathsf{P}$ is two-valued, then this is also its unique stable model.
\end{theorem}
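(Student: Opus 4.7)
The plan is to reduce the statement to a standard AFT fact: the well-founded fixpoint of a consistent approximator is the $\preceq$-least stable fixpoint. Since $\ATP$ is a consistent approximator of $T_\mathsf{P}$ by Lemma~\ref{ATP_is_approximator_of_TP}, this applies directly. So every three-valued stable model of $\mathsf{P}$ sits $\preceq$-above the well-founded model in $H^c_\mathsf{P}$, and I will show that two-valuedness of the latter collapses this interval to a single point.

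Concretely, I would first translate the hypothesis into the pair representation via $\tau$ of Proposition~\ref{tau-isomorphism-preserves}. Under this isomorphism, a three-valued interpretation $\mathcal{M}$ lies in $H_\mathsf{P}$ exactly when $\tau(\mathcal{M})$ has the form $(M,M)$ for some $M \in H_\mathsf{P}$: the first component collects the atoms sent to $\mtrue$ and the second collects those sent to $\mtrue$ or $\mundef$, so the components coincide iff nothing is $\mundef$. Thus the assumption gives a two-valued well-founded pair $(M,M)$.

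Next, let $(I,J) \in H^c_\mathsf{P}$ be an arbitrary three-valued stable model of $\mathsf{P}$. By the AFT minimality of the well-founded fixpoint under $\preceq$ we have $(M,M) \preceq (I,J)$, which unfolds by Definition~\ref{orderings_on_pairs} to $M \leq I$ and $J \leq M$. Combining with $I \leq J$ (which holds by definition of $H^c_\mathsf{P}$) yields $M \leq I \leq J \leq M$, so $I = J = M$ and hence $(I,J) = (M,M)$. Therefore the well-founded model is the unique three-valued stable model, and being two-valued it is in particular the unique stable model.

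The only potentially delicate point, and where I would be most careful in the write-up, is the translation between the three-valued and pair formulations: I must verify that ``two-valued well-founded model'' corresponds exactly to equal pair components under $\tau$, and that the $\preceq$-minimality of the well-founded fixpoint transfers across $\tau$ (which is immediate since $\tau$ preserves $\preceq$ by Proposition~\ref{tau-isomorphism-preserves}). Beyond this bookkeeping, no new computation is needed; the result is essentially the higher-order instantiation of the classical AFT theorem that an exact well-founded fixpoint is the sole stable fixpoint.
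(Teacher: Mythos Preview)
Your proposal is correct and follows essentially the same approach as the paper's proof: translate two-valuedness of the well-founded model into the pair $(M,M)$ via $\tau$, use that the well-founded fixpoint is $\preceq$-least among stable fixpoints, and then collapse any stable pair $(I,J)$ via $M\leq I\leq J\leq M$. The paper's argument is the same, only more terse (it invokes Corollary~\ref{tau-1-of-exact-interpretations} for the pair form and omits writing out the chain of inequalities).
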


A broadly studied subclass of first-order logic programs with negation, is that of \emph{stratified
logic programs}~\citep{ABW88TowardsTheoryDeclarativeKnowledge}. It is a well-known result that
if a logic program is stratified, then it has a two-valued well-founded model which is also its
unique classical stable model \citep[Corollary~2]{GL88StableModelSemanticsLogicProgramming}.
We extend the class of stratified programs to the higher-order case and generalize the
aforementioned result.
\begin{definition}
A ${\cal HOL}$ program $\mathsf{P}$ is called \emph{stratified} if there is a function $S$ mapping
predicate constants to natural numbers, such that for each rule
$\mathsf{p} \ \overline{\mathsf{R}} \leftarrow \mathsf{L}_1 \wedge \cdots \wedge \mathsf{L}_m$
and any $i\in\{1,\ldots, m\}$:
\begin{itemize}
  \item $S(\mathsf{q})\leq S(\mathsf{p})$ for every predicate constant $\mathsf{q}$ occurring in $\mathsf{L}_i$.
  \item If $\mathsf{L}_i$ is of the form $\sim \!\mathsf{E}$, then $S(\mathsf{q})< S(\mathsf{p})$ for each predicate constant $\mathsf{q}$ occurring~in~$\mathsf{E}$.
  \item For any subexpression of $\mathsf{L}_i$ of the form $(\mathsf{E}_1~\mathsf{E}_2)$, $S(\mathsf{q})< S(\mathsf{p})$ for every predicate constant $\mathsf{q}$ occurring in $\mathsf{E}_2$.
\end{itemize}
\end{definition}
For readers familiar with the standard definitions of stratification in first-order logic
programs, the last item might be somewhat surprising. What it says is that the
stratification function should not only increase because of negation, but also
because of higher-order predicate application. The
intuitive reason for this is that (as also noted in the introduction of the present paper)
one can define a higher-order predicate
which is identical to negation, for example, by writing \lstinline|neg P :- ~P|.
As a consequence, it is reasonable
to assume that predicates occurring inside an application of \lstinline|neg|
should be treated similarly to predicates appearing inside the negation symbol.

\begin{theorem}\label{well-founded-exact-on-stratified}
Let $\mathsf{P}$ be a stratified $\HOL$ program. Then, the well-founded model of $\mathsf{P}$ is two-valued.
\end{theorem}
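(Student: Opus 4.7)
The plan is to proceed by induction on the strata. Let $k$ be the maximum value of the stratification function $S$, partition the predicate constants of $\mathsf{P}$ into strata $P_0, \ldots, P_k$ where $P_i = \{\mathsf{p} \mid S(\mathsf{p}) = i\}$, and let $\mathsf{P}_{\leq i}$ denote the subprogram consisting of rules whose head predicate is in $P_0 \cup \cdots \cup P_i$. I will show, by induction on $i$, that the well-founded model of $\mathsf{P}_{\leq i}$ is two-valued. The theorem then follows by taking $i = k$ (and uniqueness of the stable model follows from Theorem~\ref{exact-wf-unique-stable}).

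For the base case ($i = 0$), the stratification conditions on rules whose head predicate lies in $P_0$ force every subexpression of a body to contain only stratum-$0$ predicate constants, with no stratum-$0$ predicate constant appearing either under a negation or as (a subexpression of) an argument of an application. Hence the approximator $A_{\mathsf{P}_{\leq 0}}$ restricted to exact pairs $(I,I) \in H^c_{\mathsf{P}_{\leq 0}}$ is monotone in the truth ordering, so its least fixpoint computed in the two-valued lattice is a fixpoint of the operator, and by Theorem~\ref{model-iff-tp-prefixpoint} and general AFT its image under $\tau^{-1}$ is the well-founded model of $\mathsf{P}_{\leq 0}$ and is two-valued.

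For the inductive step, assume the well-founded model of $\mathsf{P}_{\leq i-1}$ is a two-valued interpretation $M_{<i}$. Consider the restriction of $A_{\mathsf{P}_{\leq i}}$ to pairs that agree with $(M_{<i},M_{<i})$ on stratum-$(<i)$ predicate constants. Fix a rule $\mathsf{p}\,\overline{\mathsf{R}} \leftarrow \mathsf{B}$ with $S(\mathsf{p}) = i$ and a state $s$ of stratum-$i$ values. By the second and third stratification conditions, every subexpression of $\mathsf{B}$ that appears under a negation or as an argument of an application mentions only predicate constants of stratum strictly less than $i$; on the three-valued interpretation corresponding to $(M_{<i},M_{<i})$ extended by any $(I,J)$ on stratum $i$, such a subexpression evaluates by Lemma~\ref{mo-mos-coincide} to the same two-valued result, independent of $(I,J)$. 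What remains of $\mathsf{B}$, treated as a function of the stratum-$i$ portion of the interpretation, contains stratum-$i$ predicate constants only positively and only in head position of applications, and so (by a structural induction on expressions using that $\wedge$ and head-position lookup are monotone) is monotone in $\leq$. Consequently the stratum-$i$ slice of $A_{\mathsf{P}_{\leq i}}$ is monotone in the truth ordering, its least two-valued fixpoint exists, and combining it with $M_{<i}$ yields a two-valued fixpoint of $A_{\mathsf{P}_{\leq i}}$ that coincides with the well-founded model of $\mathsf{P}_{\leq i}$.

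The main obstacle is the interaction between three-valued application (item~\ref{item:threeval-apply} of Definition~\ref{tuple-semantics}) and monotonicity: in general, application is non-monotone in $\leq$ precisely because of the $\bigwedge_{\preceq_\pi}$ taken over two-valued extensions of the argument. The stratification condition is engineered to kill exactly this obstruction: because arguments to applications contain only strictly-lower-stratum predicate constants, by the inductive hypothesis their value is two-valued, the set of extensions is a singleton, and the meet collapses to ordinary function application. Making this precise — and, separately, checking that the "glue" between strata really produces the well-founded fixpoint of $A_{\mathsf{P}}$ rather than some larger fixpoint — is where the technical work lies; the latter can be carried out by observing that the inductively constructed pair is a stable fixpoint of $A_{\mathsf{P}}$ and is $\preceq$-below every three-valued stable model, so it must be the well-founded model.
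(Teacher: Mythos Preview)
Your outline identifies the right mechanism --- that stratification forces arguments of applications (and bodies of negations) to be two-valued, so the $\preceq$-meet in item~\ref{item:threeval-apply} of Definition~\ref{tuple-semantics} collapses to ordinary function application and the approximator locally factors through $\TP$ --- but your surrounding architecture differs from the paper's and leaves real obligations open.

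The paper avoids subprograms and gluing entirely by arguing by contradiction directly on the well-founded model $(I_w,J_w)$ of the full program $\mathsf{P}$. If $I_w\neq J_w$, pick the least stratum $n$ on which they differ, define $J$ to agree with $I_w$ on strata $\leq n$ and with $J_w$ on strata $>n$, and show that $J$ is a pre-fixpoint of $\ATP(I_w,\cdot)_2$. Since $J_w=\lfp\,\ATP(I_w,\cdot)_2$ by the stable-fixpoint characterisation, $J_w\leq J$, which forces $J_w(\mathsf{p})\leq I_w(\mathsf{p})$ at stratum $n$, a contradiction. The single technical lemma needed states that if $I$ and $J$ agree on strata $<n$ then $\ATP(I,J)(\mathsf{p})=(\TP(I)(\mathsf{p}),\TP(J)(\mathsf{p}))$ for every $\mathsf{p}$ with $S(\mathsf{p})\leq n$; this is exactly the precise form of your ``application collapses when the argument is two-valued'' observation, and its proof is the structural induction you sketch.

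Your inductive construction, by contrast, leaves two genuine debts. First, you implicitly assume a splitting property: that the well-founded model of $\mathsf{P}_{\leq i}$ restricted to strata $<i$ equals $M_{<i}$, the well-founded model of $\mathsf{P}_{\leq i-1}$. This is true but not automatic from anything stated in the paper. Second, and more seriously, even granting that the stratum-$i$ slice of the approximator is $\leq$-monotone with least two-valued fixpoint $M_i$, you still must show that gluing $M_i$ onto $M_{<i}$ yields the \emph{well-founded} model of $\mathsf{P}_{\leq i}$ rather than merely some exact fixpoint of the approximator. Your closing sentence (``is $\preceq$-below every three-valued stable model'') is precisely the statement that needs proof, and it is not obviously easier than the theorem itself. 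The paper's contradiction argument never incurs either debt, because it starts from the well-founded model rather than trying to build it.
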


By the above theorem and Theorem~\ref{exact-wf-unique-stable}, every stratified $\HOL$ program has
a unique two-valued stable model.

\section{Additional Examples}\label{sec:applications}
In this section we present two examples of how higher-order logic programming can be used.
First, we showcase reasoning problems arising from the field of abstract argumentation, next we model
a $\mathsf{PSPACE}$-complete problem known as Generalized Geography.

\subsection{Abstract Argumentation}
In what follows, we present a set of standard definitions from the field of abstract argumentation \mycitep{AF}.
Listing~\ref{list:AF} contains direct translations of these definitions into our framework; the line numbers with each definition refer to Listing \ref{list:AF}.
Listing~\ref{list:AFex} illustrate how these definitions can be used to solve reasoning problems with argumentation.

An \emph{abstract argumentation framework}~(AF) $\Theta$
is a directed graph $(A,E)$ in which the nodes $A$ represent arguments and the edges in $E$ represent attacks between arguments.
We say that $a$ \emph{attacks} $b$ if $(a,b)\in E$.
A set  $S\subseteq A$ \emph{attacks} $a$ if some $s\in S$ attacks $a$ (Line~\ref{line:attacks}).
A set $S\subseteq A$ \emph{defends} $a$ if  it attacks all attackers of $a$ (Line~\ref{line:defends}).
An \emph{interpretation} of an AF $\Theta=(A,E)$ is a subset $S$ of $A$.
There exist many different semantics of AFs that each define different sets of acceptable arguments according to different standards or intuitions.
The major semantics for argumentation frameworks can be formulated using two operators:
the \emph{characteristic function} $F_\Theta$ (Line~\ref{line:f}) mapping an interpretation $S$ to
\[F_\Theta(S) = \{a\in A\mid S \text{ defends } a\}\]
and the operator $U_\Theta$ ($U$ stands for unattacked; Line~\ref{line:unattacked}) that maps an interpretation $S$ to
\[U_\Theta(S) = \{a\in A\mid a\text{ is not attacked by }S\}.\]
The \emph{grounded extension} of $\Theta$ is defined inductively as the set of all arguments defended by the grounded extension (Line~\ref{line:grounded}), or alternatively, as the least fixpoint of $F_\Theta$, which is a monotone operator.
The operator $U_\Theta$ is an anti-monotone operator; its fixpoints are called \emph{stable extensions} of $\Theta$ (Line~\ref{line:stable}).
An interpretation $S$ is \emph{conflict-free} if it is a postfixpoint of $U_\Theta$ (\ie if $S\subseteq U_\Theta(S)$; Line~\ref{line:conflfree}).
A \emph{complete extension} is a conflict-free fixpoint of $F_\Theta$ (Line~\ref{line:complete}).
An interpretation is \emph{admissible} if it is a conflict-free postfixpoint of $F_\Theta$ (Line~\ref{line:admissable}).  A \emph{preferred extension} is a $\subseteq$-maximal complete extension (Line~\ref{line:preferred}).

Listing~\ref{list:AFex} shows how these definitions can be used for reasoning problems related to argumentation.
There, we search for an argumentation framework with five elements where the grounded extension does not equal the intersection of all stable extensions.

\begin{lstlisting}[label=list:AF,caption=Second-order definitions of abstract argumentation concepts.]
% A is a set of arguments; E subset A x A is the attack relation
attacks A E S X :- (subset S A), (S Y), (E Y X)|\label{line:attacks}|
nondefends A E S X :- (subset S A), (A Y), (E Y X), ~(attacks A E S Y)
defends A E S X :- (subset S A), (A X), ~(nondefends A E S X)|\label{line:defends}|
f A E S X :- (defends A E S X)|\label{line:f}|
u A E S X :- (subset S A), (A X), ~(attacks A E S X)|\label{line:unattacked}|
% grounded A E X means: X is an element of the grounded extension
grounded A E X  :- f A E (grounded A E) X |\label{line:grounded}|
stable A E S :- (equal S (u A E S))|\label{line:stable}|
conflFree A E S :- (subset S (u A E S))|\label{line:conflfree}|
complete A E S :- (conflFree A E S), (equal S (f A E S))|\label{line:complete}|
admissable A E S  :- (conflFree A E S), (subset S (f A E S))|\label{line:admissable}|
preferred A E S :-  maximal subset (complete A E) S |\label{line:preferred}|
\end{lstlisting}
\stickListings
\begin{lstlisting}[label=list:AFex,caption=Toy reasoning problem for abstract argumentation.]
arg a. arg b. arg c. arg d. arg e.
attacks X Y  :- arg X, arg Y, ~(nattacks X Y).
nattacks X Y :- arg X, arg Y, ~(attacks X Y).

ncautiousStable X :- arg X, stable arg attacks S, ~(S X).
cautiousStable X :- arg X, ~(ncautiousStable X).
p :- ~p, equal cautiousStable (grounded arg attacks).
\end{lstlisting}

\subsection{(Generalized) Geography}
Generalized geography is a two-player game that is played on a graph. Two
players take turn to form a simple path (\ie a path without cycles) through the
graph. The first player who can no longer extend the currently formed simple
path loses the game. The question whether a given node in a given graph is a
winning position in this game (\ie whether there is a winning strategy) is
well-known to be $\mathsf{PSPACE}$-hard (see, \eg the proof of
\citet{LS80GOPolynomial-SpaceHard}). This game can be modelled in our language
very compactly: Line~\ref{line:winning} in Listing~\ref{listing:GG} states that
\lstinline|X| is a winning node in the game \lstinline|V, E| if there is an
outgoing edge from \lstinline|X| that leads to a non-winning position in the
induced graph obtained by removing \lstinline|X| from \lstinline|V|. This
definition makes use of the notion of an induced subgraph, which has a very
natural higher-order definition, which in turn makes use of various other
generic predicates about sets (see Listings \ref{list:generic} and
\ref{list:generic2}).

\begin{lstlisting}[caption=Winning positions in the Generalized Geography game.,label=listing:GG]
% X is a winning position in the GG game (V,E)
winning V E X :- (E X Y), ~(X=Y), equal (remove V X) V', |\\\newlineInListing|inducedGraph V E V' E', |\\\newlineInListing|~(winning V' E' Y).  |\label{line:winning}|
% (V',E') is the induced graph by restricting (V,E) to V'
inducedGraph V E V' E' :- subset V' V, |\\\newlineInListing| equal E' (intersection E (square V'))
\end{lstlisting}
\stickListings
\begin{lstlisting}[caption=More generic definitions.,label=list:generic2]
% X is in the union of P and Q
union P Q X :- P X.
union P Q X :- Q X.
% X is in the intersection of P and Q
intersection P Q X :- P X, Q X.
% Y is in the set obtained from P by removing X  (P \ {X})
remove P X Y :- P Y, ~(X=Y).
% (X,Y) is in the square of P (cartesian product of P with itself)
square P X Y :- P X, P Y.
\end{lstlisting}

\section{Related and Future Work}
There are many extensions of standard logic programming under the stable model
semantics that are closely related to our current work. One of them is the
extension of logic programming with \emph{aggregates}, which most solvers
nowadays support. Aggregates are special cases of second-order functions and
have been studied using AFT
\citep{PDB07Well-foundedstablesemanticslogicprogramsaggregates,VBD22AnalyzingSemanticsAggregateAnswerSetProgramming}
and in fact our semantics of application can be viewed as a generalization of
the \emph{ultimate approximating aggregates} of
\citet{PDB07Well-foundedstablesemanticslogicprogramsaggregates}. Also,
higher-order logic programs have been studied through this fixpoint theoretic
lens. \citet{DvJD15Semanticstemplatescompositionalframeworkbuildinglogics}
defined a logic for templates, which are second-order definitions, for which
they use a well-founded semantics. This idea was generalized to arbitrary
higher-order definitions in the next year
\citep{DvBJD16CompositionalTypedHigher-OrderLogicDefinitions}. While they apply
AFT in  the same space of three-valued higher order functions as  we do, a
notable difference is that they use the so-called \emph{ultimate approximator},
resulting in a semantics that does not coincide with the standard semantics for
propositional programs whereas our semantics does (see Theorem
\ref{coincides-with-classical-stable-models}).

In 2018, a well-founded semantics for higher-order logic programs was developed using AFT \citep{CRS18ApproximationFixpointTheoryWell-FoundedSemanticsHigher-Order}.
There are two main ways in which that work differs from ours.
The first, and arguably most important one, is how the three-valued semantics of types is defined.
While in our framework $\mos{\rho\to\bool}$ consists of all functions from $\mo{\rho}$ to $\mos{\bool}$, in their framework $\mos{\rho\to\bool}$ would consist of all $\preceq$-monotonic functions from $\mos{\rho}$ to $\mos{\bool}$.
This results, in their case, to more refined, but less precise, and more complicated, approximations.
As a result, an extension of AFT needed to be developed to accommodate this.
In our current work, we show that we can stay within standard AFT, but to achieve this, we needed to develop a new three-valued semantics
of function application; see Item~\ref{item:threeval-apply} in Definition~\ref{tuple-semantics}.
The formal relationship between the two approaches remains to be further investigated.
The second way in which our work differs from that of \citet{CRS18ApproximationFixpointTheoryWell-FoundedSemanticsHigher-Order} is the treatment of existential predicate variables of type $\pi$ that appear in the bodies of rules; we consider such variables to range over $\mo{\pi}$, while \citeauthor{CRS18ApproximationFixpointTheoryWell-FoundedSemanticsHigher-Order} allow them to range over $\mos{\pi}$.
A consequence of this choice is that arguably, the well-founded semantics of \citep{CRS18ApproximationFixpointTheoryWell-FoundedSemanticsHigher-Order}
does not always behave as expected; even for simple non-recursive programs such as \lstinline|p :- R, ~R|, the meaning of the defined predicates
 is not guaranteed to be two-valued.
This is not an issue with their extension of AFT, but rather with the precise way their approximator is defined.
While we believe it would be possible to solve this issue by changing the definition of the approximator in
\citet{CRS18ApproximationFixpointTheoryWell-FoundedSemanticsHigher-Order}, this
issue needs to be further investigated.
Importantly, we showed in Theorem \ref{well-founded-exact-on-stratified} that issues such as the one just mentioned, cannot arise in our new semantics.

Recently, there have also been some extensions of logic programming that allow
second-order quantification \emph{over} answer sets. This idea was first
referred to as \emph{stable-unstable semantics}
\citep{BJT16Stable-unstablesemanticsBeyondNPnormallogic} and later also as
\emph{quantified ASP} \citep{ART19BeyondNPQuantifyingoverAnswerSets}. A related
formalism is \emph{ASP with quantifiers}
\citep{FLRSS21PlanningIncompleteInformationQuantifiedAnswerSet}, which can be
thought of as a \emph{prenex} version of quantified ASP, consisting of a single
logic program preceded by a list of quantifications. A major advantage of those
lines of work is that they come with efficient implementations
\citep{J22ImplementingStable-UnstableSemanticsASPTOOLSClingo,FMR23EfficientSolverASPQ}
and applications
\citep{ACRT22SolvingProblemsPolynomialHierarchyASPQ,BMR22ModellingOutlierDetectionProblemASPQ,FLRSS21PlanningIncompleteInformationQuantifiedAnswerSet}.
An advantage of true higher-order logic programming (which allows for defining
higher-order predicates) is the  potential for abstraction and reusability
(following the lines of thought of the ``templates'' work referred to above). As
an example, consider our max-clique application from Section
\ref{sec:motivating}. While it is perfectly possible to express this in
stable-unstable semantics or quantified ASP, such encodings would have
\emph{two} definitions of what it means to be a clique: one for the actual
clique to be found and one inside the oracle call that checks for the
non-existence of a larger clique. In our approach, the definition of clique is
given only once and used for these two purposes by giving it as an argument to
the higher-order \lstinline|maximal| predicate. Moreover, the definition of the
\lstinline|maximal| predicate can be reused in future applications where maximal
(with respect to some given order) elements of some set are sought.

There are several future directions that we feel are worth pursuing. In
particular, it would be interesting to investigate efficient implementation
techniques for the proposed stable model semantics. As the examples of the paper
suggest, even an implementation of second-order stable models, would give a
powerful and expressive system. Another interesting research topic is the
characterization of the expressive power of higher-order stable models. As
proven in~\citep{CNR19ExpressivePowerHigher-OrderDatalog}, positive $k$-order
Datalog programs over ordered databases, capture $(k-1)$-$\mathsf{EXPTIME}$ (for
all $k\geq 2$). We believe that the addition of stable negation will result in
greater expressiveness (for example, the ordering restriction on the database
could be lifted), but this needs to be further investigated.

\bibliography{bb_includes/bb_refs_shortened,otherrefs}

\ifincludeappendix
\appendix
\section{Proofs of Section~\ref{sec:two-valued-semantics} and Section~\ref{sec:three-valued-semantics}}

In the following, we provide the proofs of a few propositions contained in Sections~4 and~5. Notice that most of the results of these two sections are rather straightforward. Propositions~\ref{semantics_of_types_lattice} and \ref{semantics_of_types_lattice_cpo} are algebraic consequences of Definitions \ref{def:orders} and \ref{def:orders_three-valued}, respectively.

\begin{reproposition}{semantics_of_types_lattice}
For every predicate type $\pi$, $(\mo{\pi}, \leq_\pi)$ is a complete lattice.
\end{reproposition}
\begin{proof}
	We proceed by induction on the predicate type $\pi$.

	Let $\pi= o$. Clearly, the set $\mo{o} =\{\mathit{true}, \mathit{false} \}$ with the order $\leq_o$ is a complete lattice, with bottom element $\mathit{false}$ and top element $\mathit{true}$.

	Now let $\pi= \rho \to \pi'$, and assume $\mo{\pi'}$ is a complete lattice. We have to show that the set of functions $\mo{\rho} \to	\mo{\pi'}$ with the order $\leq_{\rho\to\pi}$ is a complete lattice. Since $\mo{\pi'}$ is a complete lattice, for each subset $S\subseteq \mo{\pi}$ we can define $f_{\bigwedge S}, f_{\bigvee S} \in \mo{\pi}$ by $f_{\bigwedge S}(x):=\bigwedge\{g(x)\mid g\in S\}$ and  $f_{\bigvee S}(x):=\bigvee\{g(x)\mid g\in S\}$, respectively. By the definition of $\leq_{\rho\to\pi'}$, it is immediate to see that $\bigwedge S=f_{\bigwedge S}$ and $\bigvee S=f_{\bigvee S}$. Hence, $\mo{\pi}$ is a complete lattice, as desired.
\end{proof}

The following proposition draws the correspondence between the two-valued Herbrand models of a program and the pre-fixpoints of its immediate consequence operator.

\begin{reproposition}{model-iff-tp-prefixpoint}
Let $\mathsf{P}$ be a program and $I \in H_\mathsf{P}$. Then, $I$ is a model of $\mathsf{P}$
iff $I$ is a pre-fixpoint of $\TP$.
\end{reproposition}
\begin{proof}
Suppose first that $I$ is a model of $\mathsf{P}$. Suppose, for the sake of contradiction, that $I$ is not a pre-fixpoint of $\TP$.
That means that there exists a predicate constant
$\mathsf{p} : \rho_1 \to \cdots \to \rho_n \to \bool$ and a $\overline{d} \in \mo{\rho_1} \times \cdots \times \mo{\rho_n}$
such that $\TP(I)(\mathsf{p})\ \overline{d}\not\leq I(\mathsf{p})\ \overline{d}$. Then, $\TP(I)(\mathsf{p})\ \overline{d}$ must be $\mathit{true}$, so by the definition
of $\TP$ there is a Herbrand state $s$ and a rule $\mathsf{p}\ \overline{\mathsf{R}} \lrule \mathsf{B}$ such that
$\mwrs{\mathsf{B}}{I}{s[\overline{\mathsf{R}}/\overline{d}]} = \mathit{true}$. Since $I$ is a model,
$\mwrs{\mathsf{p}\ \overline{\mathsf{R}}}{I}{s[\overline{\mathsf{R}}/\overline{d}]} = \mathit{true}$.
Then, $I(\mathsf{p})\ \overline{d}=\mathit{true}$, which is a contradiction.

In the other direction, suppose that $I$ is a pre-fixpoint of $\TP$.
Suppose $s$ is a Herbrand state and $\mathsf{p}\ \mathsf{R}_1\ldots\mathsf{R}_n \lrule \mathsf{B}$ is a rule such that $\mwrs{\mathsf{B}}{I}{s} = \mathit{true}$.
Then, $\TP(I)(\mathsf{p})\ s(\mathsf{R}_1)\ldots s(\mathsf{R}_n) = \mathit{true}$.
Since $I$ is a pre-fixpoint of $\TP$, $I(\mathsf{p})\ s(\mathsf{R}_1)\ldots s(\mathsf{R}_n) = \mathit{true}$,
which implies that $\mwrs{\mathsf{p}\ \mathsf{R}_1\ldots\mathsf{R}_n}{I}{s} = \mathit{true}$.
\end{proof}

\begin{reproposition}{semantics_of_types_lattice_cpo}
For every predicate type $\pi$, $(\mos{\pi}, \leq_\pi)$
is a complete lattice and $(\mos{\pi}, \preceq_\pi)$ is a complete meet-semilattice (\ie
every non-empty subset of $\mos{\pi}$ has a $\preceq_\pi$-greatest lower bound).
\end{reproposition}
\begin{proof}
We proceed by induction on the predicate type $\pi$.

If $\pi=o$, then obviously $(\mos{\pi}, \leq_\pi)$
is a complete lattice and $(\mos{\pi}, \preceq_\pi)$ is a complete meet-semilattice.

 Let now $\pi= \rho\to \pi'$, and assume that $(\mos{\pi'}, \leq_\pi)$
is a complete lattice and $(\mos{\pi'}, \preceq_\pi)$ is a complete meet-semilattice. We show that $(\mos{\pi}, \leq_\pi)$
is a complete lattice. The proof that $(\mos{\pi'}, \preceq_\pi)$ is a complete meet-semilattice is analogous and omitted. Let $S$ be a subset of $ \mo{\rho}_{U_{\mathsf{P}}} \to \mos{\pi'}_{U_{\mathsf{P}}}$. By induction hypothesis, we can define $f_{\bigwedge S}, f_{\bigvee S} \in \mo{\pi}$ by $f_{\bigwedge S}(x):=\bigwedge\{g(x)\mid g\in S\}$ and  $f_{\bigvee S}(x):=\bigvee\{g(x)\mid g\in S\}$, respectively. By the definition of $\leq_{\rho\to\pi'}$, it is immediate to see that $\bigwedge S=f_{\bigwedge S}$ and $\bigvee S=f_{\bigvee S}$. Hence, $\mo{\pi}$ is a complete lattice, as desired.
\end{proof}

\begin{relemma}{two-valued-above-three-valued}
For every argument type $\rho$ and $d^* \in \mos{\rho}$, there exists $d \in \mo{\rho}$ such that $d^* \preceq_\rho d$.
\end{relemma}
\begin{proof}
For the case $\rho=\iota$, we have $\mo{\iota} = \mos{\iota}$, so that $d^* \in \mo{\iota}$ and $d^* \preceq_\iota d^*$.
On the other hand, if $\rho = \pi$, $\rho$ would be of the form $\rho_1\to\ldots\to\rho_n\to o$.
We define $d \in \mo{\rho}$ such that for all $x_1 \in \mo{\rho_1},\ldots,x_n \in \mo{\rho_n}$:
\[
	d(\overline{x})=\begin{cases}
		d^*(\overline{x}), & \text{if } d^*(\overline{x})\in \{\mathit{false}, \mathit{true}\} \\
		\mathit{false}, & \text{otherwise}
	\end{cases}
\]
It is easy to see that $d \preceq_\rho d^*$.
\end{proof}

In order to establish Lemma~\ref{mo-mos-coincide}, we need to first show an auxiliary one:
\begin{lemma}\label{preceq-over-two-valued}
Let $\rho$ be an argument type. Then, $(\preceq_\rho)$ restricted to $\mo{\rho}$ is the trivial ordering, \ie
for all $d, d'\in \mo{\rho}$, $d \preceq_\rho d'$ if and only if $d=d'$.
\end{lemma}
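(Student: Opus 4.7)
My plan is to prove the lemma by structural induction on the argument type $\rho$. The backward direction is immediate: since $\preceq_\rho$ is a partial order (it is reflexive), $d = d'$ trivially yields $d \preceq_\rho d'$. So the substance of the argument concerns the forward direction, where we must show that two two-valued elements comparable under $\preceq_\rho$ must actually be equal.

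For the base cases, I would dispatch $\rho = \iota$ directly from Definition~\ref{def:orders_three-valued}, where $\preceq_\iota$ is defined to be the identity, so there is nothing to show. For the smallest predicate type $\pi = \bool$, I would observe that $\mo{\bool} = \{\mtrue,\mfalse\}$ and the ordering $\preceq_\bool$ is generated by $\mundef \prec_\bool \mtrue$ and $\mundef \prec_\bool \mfalse$. In particular, $\mtrue$ and $\mfalse$ are incomparable under $\preceq_\bool$, so if $d \preceq_\bool d'$ with $d,d' \in \{\mtrue,\mfalse\}$, then $d = d'$.

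For the inductive step, let $\rho = \rho' \to \pi$ where $\pi$ is a predicate type and $\rho'$ is an argument type. Suppose $d, d' \in \mo{\rho' \to \pi}$ with $d \preceq_{\rho' \to \pi} d'$. By Definition~\ref{def:orders_three-valued}, this means $d(x) \preceq_\pi d'(x)$ for every $x \in \mo{\rho'}$. The key observation is that since $d, d' \in \mo{\rho' \to \pi} = \mo{\rho'} \to \mo{\pi}$, for every such $x$ we have $d(x), d'(x) \in \mo{\pi}$, that is, the values are already two-valued. The induction hypothesis applied to $\pi$ (viewed as an argument type via the definition $\rho ::= \iota \mid \pi$) then forces $d(x) = d'(x)$. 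Since this holds pointwise for all $x \in \mo{\rho'}$, we conclude $d = d'$.

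The potentially subtle point is the interplay between predicate and argument types in the induction: the lemma is stated for argument types, but the recursive step reduces to the lemma for the predicate type $\pi$ in the codomain. Fortunately this poses no real obstacle because the grammar $\rho ::= \iota \mid \pi$ with $\pi ::= \bool \mid (\rho \to \pi)$ makes every predicate type also an argument type, so a single simultaneous induction on the size of the type expression suffices. The proof also crucially relies on the fact that $\preceq_{\rho' \to \pi}$ in Definition~\ref{def:orders_three-valued} quantifies only over two-valued arguments $x \in \mo{\rho'}$, which is precisely what allows the induction hypothesis to apply directly to $d(x), d'(x) \in \mo{\pi}$ without any further massaging.
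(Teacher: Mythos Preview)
Your proof is correct and follows essentially the same structural induction as the paper's own proof: base cases $\iota$ and $\bool$, then the pointwise inductive step for $\rho'\to\pi$. If anything, you are more explicit than the paper, which dispatches the $\bool$ case with ``by case analysis'' and does not comment on the predicate/argument-type interplay you rightly flag as the only subtle point.
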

\begin{proof}
By induction on the argument type $\rho$.
If $\rho = \iota$ the result follows from the definition of $\preceq_\iota$ over $\mos{\iota}$.
If $\rho = o$, it can be established by case analysis.
Suppose now that $\rho = \rho'\to\pi$ and the statement holds for $\pi$.
Let $d, d' \in \mo{\rho'\to\pi}$ such that $d \preceq_{\rho'\to\pi} d'$.
By the definition of $\preceq_{\rho'\to\pi}$, we have $d(x) \preceq_{\pi} d'(x)$ for all $x \in \mo{\rho'}$.
By the induction hypothesis, we have $d(x) = d'(x)$ for all $x \in \mo{\rho'}$.
Therefore, $d = d'$.
\end{proof}

\begin{relemma}{mo-mos-coincide}
Let $\mathsf{P}$ be a program, $I \in H_\mathsf{P}$ and $s \in S_\mathsf{P}$.
Then, for every expression $\mathsf{E}$, $\mo{\mathsf{E}}_s(I) = \mos{\mathsf{E}}_s(I)$.
\end{relemma}
\begin{proof}
By induction on the structure of $\mathsf{E}$. The only non-trivial case is when $\mathsf{E}$ is of the form $(\mathsf{E}_1~\mathsf{E}_2)$.
By Lemma~\ref{preceq-over-two-valued},
$\{\lsem \mathsf{E}_1 \rsem^{*}_s(I)(d) \mid d \in \lsem \rho\rsem, \lsem \mathsf{E}_2 \rsem^{*}_s({\cal I}) \preceq_{\rho} d\} =
\{\lsem \mathsf{E}_1 \rsem^{*}_s(I)(\lsem \mathsf{E}_2 \rsem^{*}_s({\cal I}))\}$.
Therefore,
\[
  \begin{array}{rlll}
    [\mos{(\mathsf{E}_1~\mathsf{E}_2)}_s(I) & = & \lsem \mathsf{E}_1 \rsem^{*}_s(I)(\lsem \mathsf{E}_2 \rsem^{*}_s({\cal I})) \\
    & = & \lsem \mathsf{E}_1 \rsem_s(I)(\lsem \mathsf{E}_2 \rsem_s({\cal I})) & \mbox{(Induction Hypothesis)} \\
    & = & \mo{(\mathsf{E}_1~\mathsf{E}_2)}_s(I)
  \end{array}
\]
This completes the proof of the lemma.
\end{proof}




\section{Proofs of Section~\ref{sec:AFT_stable_model_semantics}}
In this appendix, we collect the proofs of the results of Section~\ref{sec:AFT_stable_model_semantics}.
We start with Proposition~\ref{tau-isomorphism-preserves}, which concerns the existence, for every predicate type $\pi$, of an  isomorphism,
\ie an order-preserving bijection, between the set $\mos{\pi}$ of three-valued meanings and the set $\cp{\mo{\pi}}$ of  pairs of two-valued ones.
We first provide the definition of such functions, and then we prove they are indeed isomorphisms in Proposition~\ref{tau-isomorphism-preserves}.
\begin{definition}\label{tau-definition}
For every predicate type $\pi$, we define the functions
$\tau_\pi: \mos{\pi} \to \cp{\mo{\pi}}$ and
$\tau^{-1}_\pi: \cp{\mo{\pi}} \to \mos{\pi}$,
as follows:
\begin{itemize}
  \item $\tau_o(\mfalse) = (\mfalse, \mfalse)$, $\tau_o(\mtrue) = (\mtrue, \mtrue)$,
                        $\tau_o(\mundef) = (\mfalse, \mtrue)$
  \item $\tau_{\rho \to \pi}(f) = (\lambda d. [\tau_\pi(f(d))]_1, \lambda d. [\tau_\pi(f(d))]_2)$

\end{itemize}
and
\begin{itemize}
\item $\tau^{-1}_o(\mfalse,\mfalse) = \mfalse$, $\tau^{-1}_o(\mtrue,\mtrue) = \mtrue$,
                              $\tau^{-1}_o(\mfalse, \mtrue) = \mundef$
\item $\tau^{-1}_{\rho \to \pi}(f_1, f_2) = \lambda d. \tau^{-1}_{\pi}(f_1(d), f_2(d))$
\end{itemize}
\end{definition}

The functions $\tau_\pi$ defined above, can easily be extended to a function between ${\cal H}_{\mathsf{P}}$ and $H^{c}_\mathsf{P}$: given ${\cal I} \in {\cal H}_{\mathsf{P}}$, we define $\tau({\cal I}) = (I,J)$, where
for every predicate constant $\mathsf{p}:\pi$ it holds $I(\mathsf{p}) = [\tau_{\pi}({\cal I}(\mathsf{p}))]_1$ and $J(\mathsf{p}) = [\tau_{\pi}({\cal I}(\mathsf{p}))]_2$.
Conversely, given a pair $(I,J) \in \cp{H_\mathsf{P}}$,
we define the three-valued Herbrand interpretation $\tau^{-1}(I,J)$, for every predicate constant $\mathsf{p}:\pi$, as follows:
$\tau^{-1}(I,J)(\mathsf{p}) = \tau^{-1}_{\pi}(I(\mathsf{p}),J(\mathsf{p}))$.

\begin{reproposition}{tau-isomorphism-preserves}
  For every predicate type $\pi$ there exists a bijection $\tau_\pi: \mos{\pi} \to \cp{\mo{\pi}}$
  with inverse $\tau^{-1}_\pi:  \cp{\mo{\pi}} \to \mos{\pi}$, that both preserve the orderings $\leq$ and $\preceq$
  of elements between $\mos{\pi}$ and $\cp{\mo{\pi}}$. Moreover, there exists a bijection $\tau: {\cal H}_{\mathsf{P}} \to H^{c}_{\mathsf{P}}$
  with inverse $\tau^{-1}: H^{c}_{\mathsf{P}} \to {\cal H}_{\mathsf{P}}$,
  that both preserve the orderings $\leq$ and $\preceq$ between ${\cal H}_{\mathsf{P}}$ and
  $H^{c}_{\mathsf{P}}$.
\end{reproposition}

\begin{proof}
	Consider the functions in Definition~\ref{tau-definition}.
	Let $\pi$ be a predicate type. It follows easily from the definition that $\tau_\pi,\tau^{-1}_\pi$ are well defined functions and a formal proof using
	induction on the type structure is omitted.

	We show that they are also order-preserving, $\tau_\pi$ is a bijection and $\tau^{-1}_\pi$ is the inverse.
	Specifically, for every $f,g \in \mos{\pi}$
	and for every $(f_1,f_2),(g_1,g_2) \in \cp{\mo{\pi}}$, we show that
	the following statements hold:
	\begin{enumerate}
	\item If $f \preceq_{\pi} g$ then $\tau_\pi(f) \preceq_{\pi} \tau_\pi(g)$.
	\item If $f \leq_{\pi} g$ then $\tau_\pi(f) \leq_{\pi} \tau_\pi(g)$.
	\item If $(f_1,f_2) \preceq_{\pi} (g_1,g_2)$ then $\tau_\pi^{-1}(f_1,f_2) \preceq_{\pi} \tau_\pi^{-1}(g_1,g_2)$.
	\item If $(f_1,f_2) \leq_{\pi} (g_1,g_2)$ then $\tau^{-1}_\pi(f_1,f_2) \leq_{\pi} \tau^{-1}_\pi(g_1,g_2)$.
	\item $\tau^{-1}_{\pi}(\tau_{\pi}(f))=f$
	\item $\tau_{\pi}(\tau^{-1}_{\pi}(f_1,f_2))=(f_1,f_2)$
	\end{enumerate}
	We will use structural induction on the types. For the base types $o$ and $\iota$ the proof is trivial
	for all the statements. Consider the general case of $\pi: \rho_1 \to \pi_2$.

\noindent	{\em Statement 1:} Assume that $f \preceq g$. Then, for any $d \in \mo{\rho_1}$ it is $f(d) \preceq_{\pi_2} g(d)$
	and by induction hypothesis $\tau_{\pi_2}(f(d)) \preceq_{\pi_2}\tau_{\pi_2}(g(d))$. Therefore, $[\tau_{\pi_2}(f(d))]_1 \leq_{\pi_2} [\tau_{\pi_2}(g(d))]_1$ and
	by abstracting $\lambda d. [\tau_{\pi_2}(f(d))]_1 \leq_{\pi_2} \lambda d. [\tau_{\pi_2}(g(d))]_1$. Similarly, it is
	$[\tau_{\pi_2}(g(d))]_2 \leq_{\pi_2} [\tau_{\pi_2}(f(d))]_2$ and so $\lambda d. [\tau_{\pi_2}(g(d))]_2 \leq_{\pi_2} \lambda d. [\tau_{\pi_2}(f(d))]_2$.
	We conclude that
	\[
		\tau_\pi(f) = (\lambda d. [\tau_{\pi_2}(f(d))]_1,\lambda d. [\tau_{\pi_2}(f(d))]_2) \preceq_{\pi}
		(\lambda d. [\tau_{\pi_2}(g(d))]_1,\lambda d. [\tau_{\pi_2}(g(d))]_2) = \tau_\pi(g)
	\]

	The proof of the second statement is analogous.

\noindent {\em Statement 3:} Assume that $(f_1,f_2) \preceq_\pi (g_1,g_2)$. Then, for any $d \in \mo{\rho_1}$  it is $f_1(d) \leq_{\pi_2} g_1(d)$
	and $g_2(d) \leq_{\pi_2} f_2(d)$. therefore, $(f_1(d),f_2(d)) \preceq_\pi (g_1(d),g_2(d))$ and by induction it is
	$\tau_\pi^{-1}(f_1(d),f_2(d)) \preceq_{\pi} \tau_\pi^{-1}(g_1(d),g_2(d))$ which by abstracting gives
	$\lambda d. \tau^{-1}_{\pi}(f_1(d), f_2(d)) \preceq_{\pi} \lambda d. \tau^{-1}_{\pi}(g_1(d), g_2(d))$ therefore
	$\tau_\pi^{-1}(f_1,f_2) \preceq_{\pi} \tau_\pi^{-1}(g_1,g_2)$.

	The proof of the fourth statement is analogous.

\noindent {\em Statement 5:} We have:
\[
\begin{array}{rll}
 & \tau^{-1}_{\rho_1 \to \pi_2}(\tau_{\rho_1 \to \pi_2}(f)) \\
 	= & \tau^{-1}_{\rho_1 \to \pi_2}(\lambda d. [\tau_{\pi_2}(f(d))]_1, \lambda d. [\tau_{\pi_2}(f(d))]_2)
	&   \mbox{(Definition of $\tau_{\rho_1 \to \pi_2}$)}\\
	 = & \lambda d. \tau^{-1}_{\pi_2}([\tau_{\pi_2}(f(d))]_1,  [\tau_{\pi_2}(f(d))]_2)
	&   \mbox{(Definition of $\tau^{-1}_{\rho_1 \to \pi_2}$)}\\
	 = & \lambda d. \tau^{-1}_{\pi_2}(\tau_{\pi_2}(f(d)))
	&  \mbox{(Definition of $[\cdot]_1$ and $[\cdot]_2$)}\\
	 = & \lambda d. f(d)
	&   \mbox{(Induction Hypothesis)}\\
	 = & f
\end{array}
\]

\noindent	{\em Statement 6:} Similarly to the previous statement:
\[
\begin{array}{rll}
	&  \tau_{\rho_1 \to \pi_2}(\tau^{-1}_{\rho_1 \to \pi_2}(f_1,f_2)) \\
	= & \tau_{\rho_1 \to \pi_2}(\lambda d. \tau_{\pi_2}^{-1}(f_1(d),f_2(d)))
	&  \mbox{(Definition of $\tau^{-1}_{\rho_1 \to \pi_2}$)}\\
	= & (\lambda d. [\tau_{\pi_2}(\tau^{-1}_{\pi_2}(f_1(d),f_2(d)))]_1, \lambda d. [\tau_{\pi_2}(\tau^{-1}_{\pi_2}(f_1(d),f_2(d)))]_2)
	  & \mbox{(Definition of $\tau_{\rho_1 \to \pi_2}$)}\\
	= & (\lambda d. [(f_1(d),f_2(d))]_1, \lambda d. [(f_1(d),f_2(d))]_2)
	  & \mbox{(Induction Hypothesis)}\\
	= & (\lambda d. f_1(d),\lambda d. f_2(d))
	   & \mbox{(Definition of $[\cdot]_1$ and $[\cdot]_2$)}\\
	= & (f_1,f_2)
\end{array}
\]

The proof extends for the functions $\tau,\tau^{-1}$ defined for interpretations.
For example, for any interpretation ${\cal I} \in {\cal H}_{\mathsf{P}}$ and for any predicate constant $\mathsf{p}$ it follows easily from the previous result and the definitions of $\tau,\tau^{-1}$
that $(\tau^{-1}(\tau({\cal I})))(\mathsf{p}) ={\cal I}(\mathsf{p})$ and
for any $(I,J) \in H^{c}_{\mathsf{P}}$ it is $\tau(\tau^{-1}(I,J))(\mathsf{p}) = (I(\mathsf{p}) ,J(\mathsf{p}))$.
\end{proof}

We proceed now to the second result of Section~\ref{sec:AFT_stable_model_semantics},
Lemma~\ref{ATP_is_approximator_of_TP}, which shows that the mapping $\ATP$ of Definition~\ref{def:ATP} is a \emph{consistent approximator} of $T_\mathsf{P}$, i.e.\ $\ATP$ is $\preceq$-monotonic and for every $I \in H_\mathsf{P}$, $\ATP(I,I) = (\TP(I),\TP(I))$.
The term \emph{consistent} comes from the terminology used by~\citet{DMT04Ultimateapproximationapplicationnonmonotonicknowledgerepresentation},
and it refers to the fact that the mapping is defined over sets of consistent elements, \ie of the form defined in Definition~\ref{def:consistent_lattice}.
A few intermediate lemmas are needed to ease the proof of Lemma~\ref{ATP_is_approximator_of_TP}.
\begin{lemma}\label{tau-1-of-exact}
Let $\pi$ be a predicate type and $f\in \mo{\pi}$. Then $\tau^{-1}_{\pi}(f,f) = f$.
\end{lemma}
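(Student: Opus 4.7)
The plan is a straightforward structural induction on the predicate type $\pi$, mirroring the recursive definition of $\tau^{-1}_\pi$ given in Definition~\ref{tau-definition}.

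For the base case $\pi = o$, the element $f$ ranges over $\{\mtrue,\mfalse\}$, and the claim is checked directly: $\tau^{-1}_o(\mfalse,\mfalse)=\mfalse$ and $\tau^{-1}_o(\mtrue,\mtrue)=\mtrue$, both by the defining clauses of $\tau^{-1}_o$. There is nothing to do here beyond quoting the definition.

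For the inductive step, suppose $\pi = \rho \to \pi'$ and the lemma holds for $\pi'$. Any $f \in \mo{\rho\to\pi'}$ is a function $\mo{\rho}\to\mo{\pi'}$, so $f(d) \in \mo{\pi'}$ for every $d\in\mo{\rho}$. Unfolding the definition of $\tau^{-1}_{\rho\to\pi'}$ applied to the pair $(f,f)$ yields
\[
\tau^{-1}_{\rho\to\pi'}(f,f) \;=\; \lambda d.\,\tau^{-1}_{\pi'}\bigl(f(d),f(d)\bigr).
\]
By the induction hypothesis applied pointwise, $\tau^{-1}_{\pi'}(f(d),f(d)) = f(d)$ for every $d\in\mo{\rho}$, so the right-hand side reduces to $\lambda d.\,f(d) = f$, which completes the induction.

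I do not anticipate any real obstacle: the argument is purely a recursive unfolding matched against a pointwise induction hypothesis, of exactly the same shape as the identity $\tau^{-1}_\pi\circ\tau_\pi = \mathrm{id}$ already established in Statement~5 of Proposition~\ref{tau-isomorphism-preserves}. The only thing to be a bit careful about is domain matching, namely that the lambda-abstraction in the inductive step produces a function on $\mo{\rho}$, which is the correct domain both for elements of $\mos{\rho\to\pi'}$ (whose arguments are two-valued by Definition~\ref{def:orders_three-valued}) and for elements of $\mo{\rho\to\pi'}$, so the resulting object indeed lies in $\mo{\rho\to\pi'}\subseteq \mos{\rho\to\pi'}$ and equals $f$.
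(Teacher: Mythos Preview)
Your proposal is correct and matches the paper's proof essentially line for line: structural induction on $\pi$, with the base case read off from the defining clauses of $\tau^{-1}_o$ and the inductive step obtained by unfolding $\tau^{-1}_{\rho\to\pi'}(f,f) = \lambda d.\,\tau^{-1}_{\pi'}(f(d),f(d))$ and applying the hypothesis pointwise. The extra remark about domain matching is fine but not needed, since the domains of $\mo{\rho\to\pi'}$ and $\mos{\rho\to\pi'}$ coincide by definition.
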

\begin{proof}
By induction on $\pi$. If $\pi=o$ it follows from the definition of $\tau^{-1}_{o}$.
When $\pi=\rho \to \pi'$ assuming that lemma holds for $\pi'$, then
$\tau^{-1}_{\rho \to \pi'}(f,f) = \lambda d.\tau^{-1}_{\pi'}(f(d),f(d))=\lambda d.f(d) = f$.
\end{proof}

\begin{corollary}\label{tau-1-of-exact-interpretations}
Let $I \in H_\mathsf{P}$. Then $\tau^{-1}(I,I) = I$.
\end{corollary}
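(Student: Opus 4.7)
The plan is to reduce the statement to Lemma~\ref{tau-1-of-exact} by unfolding definitions predicate-by-predicate. Recall that two Herbrand interpretations in $H_\mathsf{P}$ coincide iff they assign the same meaning to every predicate constant of $\mathsf{P}$ (since the assignment to individual constants is fixed by the Herbrand requirement, namely $I(\mathsf{c}) = \mathsf{c}$).

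First I would unfold the definition of $\tau^{-1}$ on pairs of interpretations, which was introduced right after Definition~\ref{tau-definition}: for every predicate constant $\mathsf{p}:\pi$ occurring in $\mathsf{P}$,
\[
\tau^{-1}(I,I)(\mathsf{p}) = \tau^{-1}_\pi(I(\mathsf{p}), I(\mathsf{p})).
\]
Since $I \in H_\mathsf{P}$, we have $I(\mathsf{p}) \in \mo{\pi}$, \ie $I(\mathsf{p})$ is a two-valued element of the appropriate predicate type. Therefore, Lemma~\ref{tau-1-of-exact} applies pointwise and yields $\tau^{-1}_\pi(I(\mathsf{p}), I(\mathsf{p})) = I(\mathsf{p})$.

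Combining these two observations, $\tau^{-1}(I,I)(\mathsf{p}) = I(\mathsf{p})$ for every predicate constant $\mathsf{p}$ of $\mathsf{P}$. Since the interpretation of individual constants is fixed identically in every Herbrand interpretation, this suffices to conclude $\tau^{-1}(I,I) = I$. There is no real obstacle here: the lemma does all the work, and the corollary is simply its pointwise lifting from $\mos{\pi}$ to ${\cal H}_\mathsf{P}$ via the componentwise definition of $\tau^{-1}$ on interpretations.
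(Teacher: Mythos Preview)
Your proof is correct and is exactly the intended derivation: the paper states this as an immediate corollary of Lemma~\ref{tau-1-of-exact} without giving an explicit argument, and your pointwise unfolding via the definition of $\tau^{-1}$ on interpretations is precisely how the corollary follows.
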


\begin{lemma}\label{two-valued-is-maximal}
Let $\pi$ be a predicate type and $f\in \mo{\pi}$. Then $f$ is $\preceq_\pi$-maximal over $\mos{\pi}$.
\end{lemma}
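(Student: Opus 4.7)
The plan is to prove the lemma by structural induction on the predicate type $\pi$, mirroring the style of the earlier lemmas in this appendix (compare with Lemma~\ref{tau-1-of-exact} and Lemma~\ref{preceq-over-two-valued}). The statement to be established can be restated as: whenever $f \in \mo{\pi}$ and $g \in \mos{\pi}$ satisfy $f \preceq_\pi g$, it must be that $g = f$ (equivalently, $g \in \mo{\pi}$ and no strict extension exists).

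For the base case $\pi = o$, I would read off the result directly from Definition~\ref{def:orders_three-valued}: the only $\preceq_o$-relations on $\{\mfalse, \mundef, \mtrue\}$ are $\mundef \prec_o \mfalse$ and $\mundef \prec_o \mtrue$, so both $\mtrue$ and $\mfalse$ (which constitute $\mo{o}$) are $\preceq_o$-maximal in $\mos{o}$. For the inductive step $\pi = \rho \to \pi'$, I would assume $f \in \mo{\rho \to \pi'}$, so that $f : \mo{\rho} \to \mo{\pi'}$, and pick any $g \in \mos{\rho \to \pi'} = \mo{\rho} \to \mos{\pi'}$ with $f \preceq_{\rho \to \pi'} g$. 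By the pointwise definition of $\preceq_{\rho \to \pi'}$, for every $d \in \mo{\rho}$ we have $f(d) \preceq_{\pi'} g(d)$, with $f(d) \in \mo{\pi'}$ and $g(d) \in \mos{\pi'}$. Applying the induction hypothesis to $\pi'$ yields $f(d) = g(d)$ for every such $d$, hence $f = g$, which gives the desired maximality.

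I do not foresee a real obstacle here: the lemma is a purely structural consequence of the fact that the arguments of a function in $\mos{\rho \to \pi'}$ range over the two-valued domain $\mo{\rho}$, which is exactly the design choice already exploited in Lemma~\ref{mo-mos-coincide} and in the discussion following Definition~\ref{def:orders_three-valued}. The only mild subtlety to flag is the asymmetric typing in the inductive step (domain is $\mo{\rho}$ while codomain moves between $\mo{\pi'}$ and $\mos{\pi'}$), which makes the induction go through cleanly without having to separately consider argument types $\rho$ at all.
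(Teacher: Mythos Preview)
Your proposal is correct and follows essentially the same approach as the paper's proof: a structural induction on $\pi$, with the base case read off from the definition of $\preceq_o$ and the inductive step handled pointwise using that $\mos{\rho\to\pi'} = \mo{\rho}\to\mos{\pi'}$. Your version is in fact slightly more explicit than the paper's in spelling out that maximality amounts to $f\preceq_\pi g \Rightarrow f=g$, but the argument is the same.
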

\begin{proof}
By induction on $\pi$. If $\pi=o$ it follows from the definition of $\preceq_o$.
When $\pi=\rho \to \pi'$ assuming that lemma holds for $\pi'$, then for every $d\in \mo{\rho}$,
$f(d) \in \mo{\pi'}$, so that $f(d)$ is $\preceq_{\pi'}$-maximal over $\mos{\pi'}$.
We conclude that $f$ is $\preceq_{\rho\to\pi'}$-maximal over $\mos{\rho\to\pi'}$.
\end{proof}

\begin{lemma}\label{interlacing}
Let $\mathsf{P}$ be a program and $\pi$ be a predicate type. Let $I$ be a non-empty index-set and for any $i \in I$, $d_i, d'_i \in \mos{\pi}$.
If for all $i\in I$, $d_i \preceq_\pi d'_i$, then $\bigvee_{\leq_\pi}\{d_i \mid i \in I\} \preceq_\pi \bigvee_{\leq_\pi}\{d'_i \mid i \in I\}$.
\end{lemma}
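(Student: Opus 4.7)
The plan is to proceed by structural induction on the predicate type $\pi$, exploiting the fact that least upper bounds with respect to $\leq_\pi$ are computed pointwise on the function space, as established in the proof of Proposition~\ref{semantics_of_types_lattice_cpo}.

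For the base case $\pi = o$, the claim reduces to a direct case analysis on the value of $v = \bigvee_{\leq_o}\{d_i \mid i \in I\}$ in the three-element set $\{\mfalse, \mundef, \mtrue\}$. If $v = \mfalse$, then every $d_i = \mfalse$, and since $\mfalse$ is $\preceq_o$-maximal (by Lemma~\ref{two-valued-is-maximal}), each $d'_i = \mfalse$ as well, so the right-hand side lub is also $\mfalse$. If $v = \mtrue$, then some $d_i = \mtrue$, and by the same maximality argument the corresponding $d'_i$ equals $\mtrue$, forcing $\bigvee_{\leq_o}\{d'_i\} = \mtrue$. Finally, if $v = \mundef$, the inequality $\mundef \preceq_o x$ holds vacuously for every $x \in \mos{o}$, since $\mundef$ is the $\preceq_o$-least element.

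For the inductive case $\pi = \rho \to \pi'$, assuming the lemma for $\pi'$, I would fix an arbitrary $x \in \mo{\rho}$ and use the pointwise characterization of lubs in the function space to write
\[
  \bigl(\lubr{\leq_\pi}\{d_i \mid i \in I\}\bigr)(x) = \lubr{\leq_{\pi'}}\{d_i(x) \mid i \in I\},
\]
and similarly for the $d'_i$. By hypothesis $d_i \preceq_\pi d'_i$ implies $d_i(x) \preceq_{\pi'} d'_i(x)$ for each $i$, so the induction hypothesis applied to the index family $\{d_i(x)\}_{i \in I}$ yields the desired pointwise inequality, and hence $\lubr{\leq_\pi}\{d_i\} \preceq_\pi \lubr{\leq_\pi}\{d'_i\}$ by Definition~\ref{def:orders_three-valued}.

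The main obstacle is purely in the base case: one must be careful that the interaction between the truth ordering $\leq_o$ and the precision ordering $\preceq_o$ is non-trivial, as they are not compatible in the usual sense (e.g.\ $\mundef$ is a middle element for $\leq_o$ but the bottom for $\preceq_o$). The case analysis above handles this, and crucially uses that the two-valued elements $\mfalse$ and $\mtrue$ are $\preceq_o$-maximal, so precision inequalities $d_i \preceq_o d'_i$ with $d_i$ two-valued force $d'_i = d_i$. Once the base case is settled, the inductive step is routine since both orderings lift pointwise to the function space.
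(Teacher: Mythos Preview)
Your proof is correct and follows essentially the same approach as the paper: structural induction on $\pi$, with a case analysis on $\bigvee_{\leq_o}\{d_i\}$ in the base case and the pointwise computation of lubs (from Proposition~\ref{semantics_of_types_lattice_cpo}) in the inductive step. Your base case is in fact more detailed than the paper's, which merely says ``the lemma follows by case analysis''; your use of $\preceq_o$-maximality of two-valued elements to pin down $d'_i$ from $d_i$ is exactly the right observation.
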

\begin{proof}
We proceed by induction on the predicate type $\pi$.
If $\pi=o$ the lemma follows by case analysis of $\bigvee_{\leq_\pi}\{d_i \mid i \in I\}$.
Suppose now that $\pi = \rho\to\pi'$ and the lemma holds for $\pi'$.
By the proof of Proposition~\ref{semantics_of_types_lattice_cpo}, we have that
$\bigvee_{\leq_\pi}\{d_i \mid i \in I\} = \lambda x. \bigvee_{\leq_{\pi'}}\{d_i(x) \mid i \in I\}$ and
$\bigvee_{\leq_\pi}\{d'_i \mid i \in I\} = \lambda x. \bigvee_{\leq_{\pi'}}\{d'_i(x) \mid i \in I\}$.
For any $x\in \mo{\rho}$, by induction hypothesis, we have that
$\bigvee_{\leq_{\pi'}}\{d_i(x) \mid i \in I\} \preceq_{\pi'} \bigvee_{\leq_{\pi'}}\{d'_i(x) \mid i \in I\}$.
We conclude that $\bigvee_{\leq_\pi}\{d_i \mid i \in I\} \preceq_\pi \bigvee_{\leq_\pi}\{d'_i \mid i \in I\}$.
\end{proof}

\begin{lemma}\label{expressions_fitting}
Let $\mathsf{P}$ be a program, let ${\cal I},{\cal J}\in {\cal H}_\mathsf{P}$, and let $s$ be a Herbrand state of $\mathsf{P}$.
For every expression $\mathsf{E} : \pi$, if ${\cal I} \preceq {\cal J}$ then $\mwrst{\mathsf{E}}{{\cal I}}{s} \preceq_\pi \mwrst{\mathsf{E}}{{\cal J}}{s}$.
\end{lemma}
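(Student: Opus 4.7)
The plan is to prove the lemma by structural induction on the expression $\mathsf{E}$, following the clauses of Definition~\ref{tuple-semantics}. The base cases (variables, individual constants, predicate constants) are essentially immediate: variables and individual constants evaluate to $s(\mathsf{R})$ or $\mathsf{c}$ regardless of the interpretation, so both sides are equal and hence $\preceq$-comparable; for a predicate constant $\mathsf{p} : \pi$, the conclusion is simply $\mathcal{I}(\mathsf{p}) \preceq_\pi \mathcal{J}(\mathsf{p})$, which is exactly the hypothesis ${\cal I} \preceq {\cal J}$ unwound via Definition~\ref{ordering-interpretations}.

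For the equality case $(\mathsf{E}_1 \approx \mathsf{E}_2)$, since the subexpressions have type $\basedom$ and $\preceq_\basedom$ is the identity, the induction hypothesis forces the two values to be equal on each side, so the equality test yields the same Boolean on both interpretations. For negation, one checks by a short case-analysis that the three-valued complement $(\cdot)^{-1}$ is $\preceq_o$-monotone (the three pairs $\mundef \preceq \mtrue$, $\mundef \preceq \mfalse$, and the reflexive ones). For conjunction of bodies, one observes that $\bigwedge_{\leq_o}$ is $\preceq_o$-monotone on each argument (a small case distinction on whether some component is $\mfalse$ suffices), so the result follows coordinatewise from the induction hypothesis applied to each $\mathsf{E}_i$.

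The one genuinely interesting case, and the main obstacle, is application $(\mathsf{E}_1\ \mathsf{E}_2)$ in Item~\ref{item:threeval-apply} of Definition~\ref{tuple-semantics}. Let me write
\[
A_\mathcal{K} \;=\; \{d \in \mo{\rho} \mid \mwrst{\mathsf{E}_2}{\mathcal{K}}{s} \preceq_\rho d\}
\quad\text{for } \mathcal{K}\in\{\mathcal{I},\mathcal{J}\}.
\]
By the induction hypothesis applied to $\mathsf{E}_2$, we have $\mwrst{\mathsf{E}_2}{\mathcal{I}}{s} \preceq_\rho \mwrst{\mathsf{E}_2}{\mathcal{J}}{s}$, so by transitivity of $\preceq_\rho$ the inclusion $A_\mathcal{J} \subseteq A_\mathcal{I}$ holds. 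By the induction hypothesis applied to $\mathsf{E}_1$, we also have $\mwrst{\mathsf{E}_1}{\mathcal{I}}{s}(d) \preceq_\pi \mwrst{\mathsf{E}_1}{\mathcal{J}}{s}(d)$ for every two-valued $d$. Combining the two: for each $d \in A_\mathcal{J}$, the element of the ${\cal I}$-glb at $d$ is below (in $\preceq_\pi$) the corresponding ${\cal J}$-value, hence a fortiori the whole ${\cal I}$-glb is $\preceq_\pi \mwrst{\mathsf{E}_1}{\mathcal{J}}{s}(d)$. Taking the $\preceq_\pi$-glb over $d \in A_\mathcal{J}$ on the right yields $\mwrst{(\mathsf{E}_1\ \mathsf{E}_2)}{\mathcal{I}}{s} \preceq_\pi \mwrst{(\mathsf{E}_1\ \mathsf{E}_2)}{\mathcal{J}}{s}$.

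The subtlety in this last step is that the two glbs are taken over \emph{different} index sets $A_\mathcal{I} \supseteq A_\mathcal{J}$, so one cannot argue coordinatewise directly; what saves the argument is that shrinking the index set can only \emph{raise} the glb in $\preceq$, which aligns with the direction we need. The existence of at least one two-valued $d$ above each three-valued element (Lemma~\ref{two-valued-above-three-valued}) ensures $A_\mathcal{J}$ is non-empty so that the $\preceq_\pi$-glb on the right is well-defined by Proposition~\ref{semantics_of_types_lattice_cpo}. No new tools are needed beyond the induction hypothesis and the basic properties of the meet-semilattice $(\mos{\pi}, \preceq_\pi)$.
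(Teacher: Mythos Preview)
Your proof is correct and follows essentially the same approach as the paper's: a structural induction on $\mathsf{E}$ in which the only nontrivial case is application, handled by observing that the index set shrinks from $A_{\mathcal I}$ to $A_{\mathcal J}$ (so the $\preceq_\pi$-glb can only go up) while the function values go up pointwise by the induction hypothesis on $\mathsf{E}_1$. The paper's proof is terser---it dismisses all other cases as uninteresting and phrases the application case element-by-element (``pick $x$ in the $\mathcal{J}$-set, find its witness $d$, \ldots'') rather than via the set inclusion $A_{\mathcal J}\subseteq A_{\mathcal I}$---but the underlying argument is identical.
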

\begin{proof}
Using induction on $\mathsf{E}$. The only interesting case is when $\mathsf{E} = (\mathsf{E_1}~\mathsf{E_2})$ where $\mathsf{E}_1 : \rho \to \pi$ and $\mathsf{E}_2 : \rho$.
Suppose that when $\mathcal{I}\preceq \mathcal{J}$ then $\mwrst{\mathsf{E_1}}{{\cal I}}{s} \preceq_{\rho\to\pi} \mwrst{\mathsf{E_1}}{{\cal J}}{s}$ and $\mwrst{\mathsf{E_2}}{{\cal I}}{s} \preceq_{\rho} \mwrst{\mathsf{E_2}}{{\cal J}}{s}$.
Suppose $x\in \{\lsem \mathsf{E}_1 \rsem^{*}_s(\mathcal{J})(d) \mid d \in \lsem \rho\rsem, \lsem \mathsf{E}_2 \rsem^{*}_s({\cal J}) \preceq_{\rho} d\}$.
Then, there exists some $d$ such that $\lsem \mathsf{E}_2 \rsem^{*}_s({\cal J}) \preceq_{\rho} d$ and $x = \lsem \mathsf{E}_1 \rsem^{*}_s(\mathcal{J})(d)$.
Since $\mwrst{\mathsf{E_2}}{{\cal I}}{s} \preceq_{\rho} \mwrst{\mathsf{E_2}}{{\cal J}}{s} \preceq_{\rho} d$, we have
$\lsem \mathsf{E}_1 \rsem^{*}_s(\mathcal{I})(d) \in \{\lsem \mathsf{E}_1 \rsem^{*}_s(\mathcal{I})(d) \mid d \in \lsem \rho\rsem, \lsem \mathsf{E}_2 \rsem^{*}_s({\cal I}) \preceq_{\rho} d\}$.
Also, by inductive hypothesis, $\lsem \mathsf{E}_1 \rsem^{*}_s(\mathcal{I})(d) \preceq_{\pi} x$, so that
$\bigwedge_{\preceq_\pi}\{\lsem \mathsf{E}_1 \rsem^{*}_s(\mathcal{I})(d) \mid d \in \lsem \rho\rsem, \lsem \mathsf{E}_2 \rsem^{*}_s({\cal I}) \preceq_{\rho} d\} \preceq_{\pi} x$.
Since that holds for any $x$, we have $\mwrst{\mathsf{E}}{{\cal I}}{s} =
\bigwedge_{\preceq_\pi}\{\lsem \mathsf{E}_1 \rsem^{*}_s(\mathcal{I})(d) \mid d \in \lsem \rho\rsem, \lsem \mathsf{E}_2 \rsem^{*}_s({\cal I}) \preceq_{\rho} d\} \preceq_\pi
\bigwedge_{\preceq_\pi}\{\lsem \mathsf{E}_1 \rsem^{*}_s(\mathcal{J})(d) \mid d \in \lsem \rho\rsem, \lsem \mathsf{E}_2 \rsem^{*}_s({\cal J}) \preceq_{\rho} d\} =
\mwrst{\mathsf{E}}{{\cal J}}{s}$.
\end{proof}

We are finally ready to prove Lemma \ref{ATP_is_approximator_of_TP}.
\begin{relemma}{ATP_is_approximator_of_TP} 
Let $\mathsf{P}$ be a program. In the terminology of \citet{DMT04Ultimateapproximationapplicationnonmonotonicknowledgerepresentation}, $\ATP: H^{c}_\mathsf{P} \to H^{c}_\mathsf{P}$
is a consistent
approximator of $T_\mathsf{P}$.
\end{relemma}
\begin{proof}
We have to show that $\ATP$ is $\preceq$-monotone and extends $\TP$.
For the monotonicity, it follows from the definition of ${\cal T}_{\mathsf{P}}$ together with Lemma~\ref{expressions_fitting} and Lemma~\ref{interlacing}
that ${\cal T}_{\mathsf{P}}$ is $\preceq$-monotone. Also, by Proposition~\ref{tau-isomorphism-preserves}, $\tau$ and $\tau^{-1}$ preserve $\preceq$,
so that $\ATP$ is $\preceq$-monotone.

Now, we have to show that $\ATP$ extends $\TP$, \ie for every $I \in H_\mathsf{P}$, $\ATP(I,I) = (\TP(I),\TP(I))$.
By Corollary~\ref{tau-1-of-exact-interpretations}, $\tau^{-1}(I,I) = I$.
Since $I \in H_\mathsf{P}$, by Lemma~\ref{mo-mos-coincide}, we have that for every expression $E$, $\mo{E}_s(\tau^{-1}(I,I)) = \mos{E}_s(I)$.
Now we have
\[
\begin{array}{rll}
  \ATP(I,I)
 	= & \tau({\cal T}_{\mathsf{P}}(\tau^{-1}(I,I))) \\
	= & \tau\big(\bigvee_{\leq_\bool}\{\mwrst{\mathsf{B}}{\tau^{-1}(I,I)}{s[\overline{\mathsf{R}}/\overline{d}]} \mid \mbox{$s\in S_{\mathsf{P}}$ and $(\mathsf{p}\ \overline{\mathsf{R}} \lrule \mathsf{B})$ in $\mathsf{P}$}\}\big)\\
	= & \tau\big(\bigvee_{\leq_\bool}\{\lsem \mathsf{B} \rsem_{s[\overline{\mathsf{R}}/\overline{d}]}(I) \mid \mbox{$s\in S_{\mathsf{P}}$ and $(\mathsf{p}\ \overline{\mathsf{R}} \lrule \mathsf{B})$ in $\mathsf{P}$}\}\big)\\
  = & \tau(\TP(I))
\end{array}
\]
Since $\TP(I) \in H_\mathsf{P}$, by Corollary~\ref{tau-1-of-exact-interpretations}, we have $\tau(\TP(I)) = (\TP(I), \TP(I))$.
\end{proof}


We conclude this appendix by showing that Definition~\ref{def:AFTsemantics} and Definition~\ref{def:three-valued_model} for three-valued models agree.

\begin{relemma}{model-iff-atp-prefixpoint}
Let $\mathsf{P}$ be a program and $(I, J)\in H^{c}_\mathsf{P}$.
Then, $(I, J)$ is a pre-fixpoint of $\ATP$ if and only if $\tau^{-1}(I, J)$ is a three-valued model of $\mathsf{P}$.
\end{relemma}
\begin{proof}
First notice that by Definition \ref{def:three-valuedTP} and Proposition
\ref{tau-isomorphism-preserves}, $(I, J)$ is a pre-fixpoint of $\ATP$ if and
only if $\ATP(I,J) = \tau({\cal T}_\mathsf{P}(\tau^{-1}(I,J)))\leq (I,J)$ if
and only if ${\cal T}_\mathsf{P}(\tau^{-1}(I,J))\leq \tau^{-1}(I,J)$, i.e.\
$(I,J)$ is a pre-fixpoint of $\ATP$ if and only if $\tau^{-1}(I,J)$ is a
fixpoint of ${\cal T}_\mathsf{P}$. We conclude by Proposition
\ref{three-valued-model-iff-tp-prefixpoint}.
\end{proof}


\section{Proofs of Section~\ref{sec:properties}}

The following theorem states that our stable model semantics coincides with the classical stable model semantics
for the class of propositional programs.
\begin{retheorem}{coincides-with-classical-stable-models}
Let $\mathsf{P}$ be a propositional logic program. Then, ${\cal M}$ is a (three-valued)
stable model of $\mathsf{P}$ iff ${\cal M}$ is a classical (three-valued) stable model of $\mathsf{P}$.
\end{retheorem}

\begin{proof}
In \cite[Section 6, pages 107--108]{DMT04Ultimateapproximationapplicationnonmonotonicknowledgerepresentation}, the well-founded semantics of propositional logic programs
is derived. The language used there allows arbitrary nesting of conjunction, disjunction and negation in bodies of the rules
which fully encompasses our syntax when we restrict our programs to be propositional. In addition, the immediate consequence operator $\TP$ is the same and so is
the approximation space which we have denoted as $H^{c}_{\mathsf{P}}$ in this work.

It is easy to see that the approximator $\ATP$ we give in our approach and the one given in \cite{DMT04Ultimateapproximationapplicationnonmonotonicknowledgerepresentation}  fully coincide for propositional programs therefore produce equivalent
semantics. Notice how our three-valued operator ${\cal T}_{\mathsf{P}}$ fully coincides with the three-valued immediate consequence operator in
\cite{DMT04Ultimateapproximationapplicationnonmonotonicknowledgerepresentation} since the fourth rule in Definition \ref{tuple-semantics} is never used.
\end{proof}

In order to establish Theorem~\ref{stable-models-are-minimal}, we use the following
proposition which is a restatement of Proposition~3.14 found in \cite{DMT04Ultimateapproximationapplicationnonmonotonicknowledgerepresentation}
that refers to pre-fixpoints of the approximator $\ATP$ instead of fixpoints. The proof is almost identical but is presented here, nonetheless, for reasons of completeness.
\begin{proposition}\label{stable-fixpoints-minimal-prefixpoints}
	A stable fixpoint $(x,y)$ of $\ATP$ is a $\leq$-minimal pre-fixpoint of $\ATP$. Furthermore, if $(x,x)$
	is a stable fixpoint of $\ATP$ then $x$ is a minimal pre-fixpoint of $\TP$.
\end{proposition}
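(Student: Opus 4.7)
The plan is to apply standard Approximation Fixpoint Theory reasoning, using only two ingredients: $\preceq$-monotonicity of $\ATP$ (Lemma~\ref{ATP_is_approximator_of_TP}) and the Knaster--Tarski theorem. From $\preceq$-monotonicity, by unpacking the definition of $\preceq$ on $H^{c}_\mathsf{P}$, one derives four routine sub-lemmas about the component operators: $\ATP(\cdot,y)_1$ is $\leq$-monotone and $\ATP(x,\cdot)_2$ is $\leq$-monotone, while $\ATP(\cdot,y)_2$ and $\ATP(x,\cdot)_1$ are $\leq$-antimonotone in their remaining argument. These sub-lemmas are the workhorses of the proof.

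For the first statement, I would fix a stable fixpoint $(x,y)$ and an arbitrary pre-fixpoint $(x',y')\leq (x,y)$ of $\ATP$, and then prove $x\leq x'$ and $y\leq y'$; combined with $(x',y')\leq (x,y)$ this forces equality. To establish $x\leq x'$, I would make two observations. First, because $\ATP(x',y')\leq (x',y')$, the element $x'$ is a pre-fixpoint of the $\leq$-monotone operator $\ATP(\cdot,y')_1$, so by Knaster--Tarski $\lfp\,\ATP(\cdot,y')_1\leq x'$. Second, since $y'\leq y$ and $\ATP(\cdot,y)_1$ is antimonotone in $y$, one has $\ATP(\cdot,y)_1\leq \ATP(\cdot,y')_1$ pointwise, hence $x=\lfp\,\ATP(\cdot,y)_1\leq \lfp\,\ATP(\cdot,y')_1$. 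Chaining these yields $x\leq x'$. A symmetric argument using the stability equation $y=\lfp\,\ATP(x,\cdot)_2$, the $\leq$-monotonicity of $\ATP(x',\cdot)_2$, and the antimonotonicity of $\ATP(\cdot,y')_2$ in its first argument gives $y\leq y'$.

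For the second statement, I would use that $\ATP$ extends $T_\mathsf{P}$ (Lemma~\ref{ATP_is_approximator_of_TP}), so $\ATP(z,z)=(T_\mathsf{P}(z),T_\mathsf{P}(z))$ for every $z\in H_\mathsf{P}$. If $(x,x)$ is a stable fixpoint of $\ATP$, this immediately gives $T_\mathsf{P}(x)=x$. If $x'$ is a pre-fixpoint of $T_\mathsf{P}$ with $x'\leq x$, then $\ATP(x',x')=(T_\mathsf{P}(x'),T_\mathsf{P}(x'))\leq (x',x')$, making $(x',x')$ a pre-fixpoint of $\ATP$ lying $\leq$-below $(x,x)$; the first part forces $(x',x')=(x,x)$, hence $x'=x$.

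The main obstacle I anticipate is bookkeeping: keeping the four monotonicity/antimonotonicity directions of the component operators correct, since swapping one reverses an inequality and breaks the chain. These signs are all immediate from the definition of $\preceq$ on $H^{c}_\mathsf{P}$, but they must be recorded explicitly before running the argument. Once that table is in hand, the rest is a standard double application of Knaster--Tarski and carries no content specific to $\HOL$; the proof is the same as in the propositional AFT setting.
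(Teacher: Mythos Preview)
Your proposal is correct and follows essentially the same route as the paper's proof. The only cosmetic difference is that the paper avoids the intermediate quantity $\lfp\,\ATP(\cdot,y')_1$: it shows in one step that $x'$ itself is a pre-fixpoint of $\ATP(\cdot,y)_1$ via the chain $\ATP(x',y)_1 \leq \ATP(x',y')_1 \leq x'$, concludes $x=x'$, and only then handles the second coordinate---your argument and the paper's for the second statement are identical.
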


\begin{proof}
Let $(x,y)$ be a stable fixpoint of $\ATP$ and let $(x',y')$ such that $(x',y') \leq(x,y)$ and $(x',y')$ is a pre-fixpoint of $\ATP$, so
$\ATP(x',y') \leq (x',y')$. We have that $x' \leq y' \leq y$ which gives us that
$\ATP(x',y)_1 \leq \ATP(x',y')_1 \leq (x',y')_1 = x'$. Therefore, $x'$ is a pre-fixpoint of the operator $\ATP(\cdot,y)_1$ and since
$x$ is its least fixpoint we get that $x \leq x'$. By the assumption that $x' \leq x$ we conclude that $x=x'$.

Since we have shown that $x =x'$ we have that $x =x' \leq y'$ and $\ATP(x,y')_2 \leq \ATP(x',y')_2 \leq (x',y')_2 = y'$ which makes $y'$ a pre-fixpoint of
$\ATP(x, \cdot)_2$. Since $y$ is its least fixpoint we have that $y \leq y'$ and by assumption $y' \leq y$. We conclude that $y=y'$  and
finally $(x,y)=(x',y')$.

Assume that $x' \leq x$ and $x'$ is a pre-fixpoint of $\TP$ therefore $\TP(x')\leq x'$. Since $\ATP$ is an approximator of $\TP$ we have that
$\ATP(x',x') =(\TP(x'),\TP(x')) \leq (x',x')$. But then $(x',x')$ is a pre-fixpoint of $\ATP$ and $(x',x')\leq (x,x)$. By the result of the previous
paragraph we conclude that $(x',x')=(x,x)$ and $x'=x$.
\end{proof}

\begin{retheorem}{stable-models-are-minimal}
%
All (three-valued) stable models of a $\HOL$ program $\mathsf{P}$ are
$\leq$-minimal models of~$\mathsf{P}$.
%
\end{retheorem}
\begin{proof}
Let $\cal M$ be a three-valued stable model of $\mathsf{P}$ and $\cal M'$ a three-valued model of $\mathsf{P}$, such that ${\cal M'} \leq {\cal M}$ which also implies
$\tau(\cal M') \leq \tau(\cal M)$. By Lemma~\ref{model-iff-atp-prefixpoint}, $\tau(\cal M')$ is a pre-fixpoint of $\ATP$.
But $\tau(\cal M)$ is a stable fixpoint of $\ATP$ so by Proposition~\ref{stable-fixpoints-minimal-prefixpoints}
it is a minimal pre-fixpoint of $\ATP$. We conclude that $\tau(\cal M')=\tau(\cal M)$ and so $\cal M'=\cal M$.
Since every two-valued model is a three-valued model it follows that
that every stable model is also a $\leq$-minimal two-valued model.
\end{proof}

%

\begin{retheorem}{exact-wf-unique-stable}
Let $\mathsf{P}$ be a ${\cal HOL}$ program. If the well-founded model of
$\mathsf{P}$ is two-valued, then this is also its unique stable model.
\end{retheorem}
\begin{proof}
Let $\cal M$ be the well founded model of $\mathsf{P}$. Then $\tau({\cal M})$ is the $\preceq$-least three-valued stable model.
It immediately follows that since $\cal M$ is two-valued, by Corollary \ref{tau-1-of-exact-interpretations} it is $\tau({\cal M})=(\cal M,\cal M)$.
Then for any $(x,y)$ three-valued stable fixpoint of $\ATP$ it is $({\cal M},{\cal M}) \preceq (x,y)$.
Since it also must hold $x \leq y $ we conclude that $x=y=\cal M$ and $\tau^{-1}(x,y)=\tau^{-1}({\cal M},{\cal M})=\cal M$.
\end{proof}

In order to establish Theorem~\ref{well-founded-exact-on-stratified}, we first prove some auxiliary results.
\begin{lemma}\label{stratified-sem}
Let $\mathsf{P}$ be a stratified $\HOL$ program and $\mathsf{E}$ be an expression.
Let $I, J \in H_\mathsf{P}$ be two interpretations such that $I(\mathsf{p})=J(\mathsf{p})$ for every predicate constant $\mathsf{p}$ occurring in $\mathsf{E}$.
Then, for every state $s \in S_\mathsf{P}$, $\lsem \mathsf{E} \rsem_s(I) = \lsem \mathsf{E} \rsem_s(J)$.
\end{lemma}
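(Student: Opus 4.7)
The plan is to proceed by straightforward structural induction on the expression $\mathsf{E}$. Note that the stratification hypothesis on $\mathsf{P}$ is actually not used in this particular lemma; the statement expresses a purely local property of the two-valued semantics, namely that $\lsem \mathsf{E} \rsem_s$ depends only on the values assigned by the interpretation to the predicate constants that syntactically occur in $\mathsf{E}$. The stratification hypothesis will presumably be needed only in the subsequent theorem, where this lemma is invoked stratum-by-stratum.

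For the base cases, if $\mathsf{E}$ is an individual or predicate variable $\mathsf{R}$, then $\lsem \mathsf{R} \rsem_s(I) = s(\mathsf{R}) = \lsem \mathsf{R} \rsem_s(J)$ by clause~1 of Definition~\ref{standard-semantics}, independently of $I$ and $J$. If $\mathsf{E}$ is an individual constant $\mathsf{c}$, then $\lsem \mathsf{c} \rsem_s(I) = \mathsf{c} = \lsem \mathsf{c} \rsem_s(J)$ by clause~2. If $\mathsf{E}$ is a predicate constant $\mathsf{p}$, then $\mathsf{p}$ trivially occurs in $\mathsf{E}$, so by hypothesis $I(\mathsf{p}) = J(\mathsf{p})$, and clause~3 yields $\lsem \mathsf{p} \rsem_s(I) = I(\mathsf{p}) = J(\mathsf{p}) = \lsem \mathsf{p} \rsem_s(J)$.

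For the inductive cases, the key observation is that for any compound expression, the set of predicate constants occurring in a subexpression is a subset of those occurring in the whole expression; hence the induction hypothesis is directly applicable to each subexpression with the same pair $(I,J)$. For an application $(\mathsf{E}_1\ \mathsf{E}_2)$, the induction hypothesis gives $\lsem \mathsf{E}_1 \rsem_s(I) = \lsem \mathsf{E}_1 \rsem_s(J)$ and $\lsem \mathsf{E}_2 \rsem_s(I) = \lsem \mathsf{E}_2 \rsem_s(J)$, and clause~4 of Definition~\ref{standard-semantics} then gives the equality for the application. The remaining cases (equality via clause~5, negation via clause~6, and conjunction via clause~7) are handled in exactly the same way: apply the induction hypothesis componentwise and invoke the corresponding clause.

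There is no real obstacle in this proof: it is a routine structural induction, and the lemma serves as a preparatory locality result. The only point worth emphasising is that the argument goes through uniformly because every clause of Definition~\ref{standard-semantics} constructs $\lsem \mathsf{E} \rsem_s(I)$ from the values $\lsem \mathsf{E}' \rsem_s(I)$ of subexpressions together with ingredients (the state $s$, the syntactic identity of constants, and the fixed operations on $\mo{\bool}$) that do not depend on $I$ at all.
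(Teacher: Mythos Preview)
Your proof is correct and follows exactly the approach indicated in the paper, which simply states ``Trivial using induction on the structure of $\mathsf{E}$.'' Your observation that the stratification hypothesis is not actually used here is also accurate.
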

\begin{proof}
Trivial using induction on the structure of $\mathsf{E}$.
\end{proof}

\begin{corollary}\label{stratified-Tp}
Let $S$ be a stratification function of the $\HOL$ program $\mathsf{P}$
and $I, J \in H_\mathsf{P}$ be two interpretations.
If for some $n\in\omega$, $I(\mathsf{p})=J(\mathsf{p})$ for every predicate constant $\mathsf{p}$ with $S(\mathsf{p})< n$,
then $\TP(I)(\mathsf{p}) = \TP(J)(\mathsf{p})$ for every predicate constant $\mathsf{p}$ with $S(\mathsf{p})< n$.
\end{corollary}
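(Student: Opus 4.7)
The plan is to derive this corollary as a direct consequence of Lemma~\ref{stratified-sem} combined with the first condition of the stratification definition. Fix some predicate constant $\mathsf{p}$ with $S(\mathsf{p}) < n$, and fix a tuple $\overline{d}$ of arguments. Recall that
\[
\TP(I)(\mathsf{p})\ \overline{d} = \bigvee\nolimits_{\leq_\bool}\{\mwrs{\mathsf{B}}{I}{s[\overline{\mathsf{R}}/\overline{d}]} \mid s\in S_{\mathsf{P}},\ (\mathsf{p}\ \overline{\mathsf{R}} \lrule \mathsf{B}) \in \mathsf{P}\},
\]
and similarly for $J$. Thus it suffices to show that for each rule $\mathsf{p}\ \overline{\mathsf{R}} \lrule \mathsf{B}$ in $\mathsf{P}$ and every state $s$, the body values $\mwrs{\mathsf{B}}{I}{s[\overline{\mathsf{R}}/\overline{d}]}$ and $\mwrs{\mathsf{B}}{J}{s[\overline{\mathsf{R}}/\overline{d}]}$ coincide.

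The key observation is that the first clause of the stratification condition forces $S(\mathsf{q}) \leq S(\mathsf{p}) < n$ for every predicate constant $\mathsf{q}$ appearing in any literal $\mathsf{L}_i$ of the body $\mathsf{B}$. Hence, by the hypothesis on $I$ and $J$, we have $I(\mathsf{q}) = J(\mathsf{q})$ for every predicate constant $\mathsf{q}$ occurring in $\mathsf{B}$. Applying Lemma~\ref{stratified-sem} to the expression $\mathsf{B}$ and the state $s[\overline{\mathsf{R}}/\overline{d}]$ yields the required equality of body values.

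Taking the supremum over all rules and states gives $\TP(I)(\mathsf{p})\ \overline{d} = \TP(J)(\mathsf{p})\ \overline{d}$. Since this holds for every argument tuple $\overline{d}$, we conclude $\TP(I)(\mathsf{p}) = \TP(J)(\mathsf{p})$, which finishes the proof. There is no real obstacle here: the proof is essentially a bookkeeping exercise that relies solely on the monotonicity of stratification strata under rule application, and the stronger conditions in the stratification definition concerning negation and predicate application will only play a role in the later, more delicate argument for Theorem~\ref{well-founded-exact-on-stratified}.
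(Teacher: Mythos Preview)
Your argument is correct and matches the paper's intent: the corollary is stated without proof as an immediate consequence of Lemma~\ref{stratified-sem} together with the first stratification condition, exactly as you spell out. The only cosmetic point is that Lemma~\ref{stratified-sem} is phrased for expressions while $\mathsf{B}$ is a conjunction of expressions, but the extension is trivial (apply the lemma to each conjunct and take the meet), so there is no real gap.
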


\begin{lemma}\label{stratified-A-eq-O-O}
Let $S$ be a stratification function of the $\HOL$ program $\mathsf{P}$
and $(I, J) \in H^{c}_\mathsf{P}$.
If for some $n\in\omega$, $I(\mathsf{p})=J(\mathsf{p})$ for every predicate constant $\mathsf{p}$ with $S(\mathsf{p})<n$,
then $\ATP(I, J)(\mathsf{p}) = (\TP(I)(\mathsf{p}), \TP(J)(\mathsf{p}))$ for every predicate constant $\mathsf{p}$ with $S(\mathsf{p})\leq n$.
\end{lemma}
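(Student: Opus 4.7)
The plan is to unfold the definition $\ATP(I,J) = \tau(\mathcal{T}_\mathsf{P}(\tau^{-1}(I,J)))$ and reduce the claim to a per-rule statement about body valuations. By the recursive formula in Definition~\ref{tau-definition} (an easy induction on $\pi_\mathsf{p}$ yields $\tau^{-1}_{\pi_\mathsf{p}}(f,g)(d_1)\cdots(d_k) = \tau^{-1}_o(f(\overline{d}), g(\overline{d}))$ for two-valued $d_1,\dots,d_k$) together with a short truth-table check that $\bigvee_{\leq_\bool}$ on three-valued booleans corresponds componentwise to $\bigvee_{\leq_\bool}$ on two-valued ones via $\tau^{-1}_o$, the equation $\ATP(I,J)(\mathsf{p}) = (\TP(I)(\mathsf{p}), \TP(J)(\mathsf{p}))$ for each $\mathsf{p}$ with $S(\mathsf{p}) \leq n$ reduces to proving, for every rule $\mathsf{p}\,\overline{\mathsf{R}} \lrule \mathsf{B}$ of $\mathsf{P}$ with $S(\mathsf{p})\le n$ and every state $s'$, the per-rule identity
\[
\lsem \mathsf{B} \rsem^{*}_{s'}(\tau^{-1}(I,J)) \;=\; \tau^{-1}_o\bigl(\lsem \mathsf{B} \rsem_{s'}(I),\, \lsem \mathsf{B} \rsem_{s'}(J)\bigr). \qquad (\star)
\]

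The technical core of $(\star)$ is the following auxiliary claim, Sub-lemma A: \emph{for any expression $\mathsf{E}$ in which every predicate constant has stratum strictly less than $n$, one has $\lsem \mathsf{E} \rsem^{*}_{s'}(\tau^{-1}(I,J)) = \lsem \mathsf{E} \rsem_{s'}(I) = \lsem \mathsf{E} \rsem_{s'}(J)$, and this common value is two-valued.} Sub-lemma A is proved by structural induction on $\mathsf{E}$: the predicate-constant case uses the hypothesis $I(\mathsf{q})=J(\mathsf{q})$ together with Lemma~\ref{tau-1-of-exact} so that $\tau^{-1}(I,J)(\mathsf{q}) = I(\mathsf{q})$; the application case invokes the inductive hypothesis to obtain a two-valued argument and then uses Lemma~\ref{preceq-over-two-valued} to collapse the $\bigwedge_{\preceq}$ in Item~\ref{item:threeval-apply} of Definition~\ref{tuple-semantics} to direct function application. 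To establish $(\star)$, write $\mathsf{B} = \mathsf{L}_1 \wedge \cdots \wedge \mathsf{L}_m$; since $\bigwedge_{\leq_\bool}$ also commutes componentwise with $\tau^{-1}_o$, it suffices to verify the identity for each literal. Negated subexpressions and equalities between $\basedom$-terms are handled immediately by Sub-lemma A, since condition (2) of stratification (and the total absence of predicate constants in $\basedom$-equalities) forces all their predicate constants to have stratum $<n$. For an atom $\mathsf{H}\,\mathsf{A}_1\cdots\mathsf{A}_k$, condition (3) pins every predicate constant of each argument $\mathsf{A}_j$ to stratum $<n$, so Sub-lemma A supplies a common two-valued value $d_j$, whereupon the three-valued application again collapses to direct application. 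If $\mathsf{H}$ is a variable, $s'(\mathsf{H})$ is two-valued by Definition~\ref{def:state_Herbrand} and the claim follows; if $\mathsf{H}=\mathsf{q}$ is a predicate constant with $S(\mathsf{q})<n$, Sub-lemma A applies directly; finally if $S(\mathsf{q}) = n$, then $\tau^{-1}(I,J)(\mathsf{q}) = \tau^{-1}_{\pi_\mathsf{q}}(I(\mathsf{q}), J(\mathsf{q}))$ by definition of $\tau^{-1}$, and the pointwise formula for $\tau^{-1}$ at higher types yields exactly $\tau^{-1}_o(\lsem \mathsf{L}_i \rsem_{s'}(I), \lsem \mathsf{L}_i \rsem_{s'}(J))$.

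The main obstacle I anticipate is calibrating the strength of Sub-lemma A: it must package \emph{both} two-valuedness of the value \emph{and} coincidence across $I$, $J$, and $\tau^{-1}(I,J)$, because the application case of its own proof requires two-valuedness of the argument (to collapse $\bigwedge_{\preceq}$) alongside the agreement hypothesis (to invoke the inductive hypothesis on $\mathsf{E}_1$). Once Sub-lemma A is in hand, the only moving parts are (i) the commutativity of $\bigvee_{\leq_\bool}$, $\bigwedge_{\leq_\bool}$, and iterated function application with $\tau^{-1}_o$---each a short truth-table or definitional check---and (ii) the observation that in an atom appearing in a body with head-predicate stratum $\leq n$, the atom's head may itself sit at stratum $n$ while condition (3) forces all of its arguments strictly below $n$. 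Together these ingredients deliver $(\star)$, which in turn yields the lemma.
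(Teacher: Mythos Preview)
Your proposal is correct and follows essentially the same approach as the paper. The paper packages your Sub-lemma~A and the subsequent literal-level case analysis into a single auxiliary statement---it proves directly that $\lsem \mathsf{E} \rsem^{*}_s(\tau^{-1}(I,J)) = \tau^{-1}_\pi(\lsem \mathsf{E} \rsem_s(I),\lsem \mathsf{E} \rsem_s(J))$ for any expression $\mathsf{E}$ satisfying the three stratification-style conditions (all constants at stratum $\leq n$; constants in the argument of a top-level application at stratum $<n$; constants under a top-level negation at stratum $<n$), carrying this through a single structural induction. Your decomposition instead isolates the strict-$<n$ case (your Sub-lemma~A, which simultaneously absorbs the paper's Lemma~\ref{stratified-sem} and the two-valuedness needed to collapse $\bigwedge_{\preceq}$) and then handles the possible stratum-$n$ head symbol by an explicit case split on~$\mathsf{H}$. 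Both organizations rest on the same key observations---two-valued arguments collapse Item~\ref{item:threeval-apply}, and $\tau^{-1}_o$ commutes with $\bigvee_{\leq_\bool}$ and $\bigwedge_{\leq_\bool}$---so the mathematical content is identical.
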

\begin{proof}
We will show the following auxiliary statement that suffices to show the lemma.
For any expression $\mathsf{E} : \pi$ such that the following three statements hold:
\begin{enumerate}
  \item $S(\mathsf{q})\leq n$ for every predicate constant $\mathsf{q}$ occurring in $\mathsf{E}$,
  \item if $\mathsf{E}$ is of the form $(\mathsf{E}_1~\mathsf{E}_2)$, then $S(\mathsf{q})< n$ for every predicate constant $\mathsf{q}$ occurring in $\mathsf{E}_2$,
  \item if $\mathsf{E}$ is of the form $(\sim \mathsf{E}_1)$, then $S(\mathsf{q})< n$ for every predicate constant $\mathsf{q}$ occurring in $\mathsf{E}_1$,
\end{enumerate}
and for any Herbrand state $s \in S_\mathsf{P}$ it follows that $\lsem \mathsf{E} \rsem^*_s(\tau^{-1}(I,J)) = \tau^{-1}_\pi(\lsem \mathsf{E} \rsem_s(I),\lsem \mathsf{E} \rsem_s(J))$.
This can be established using induction on the structure of $\mathsf{E}$.
The interesting cases are when $\mathsf{E}$ is of the form $(\mathsf{E_1}~\mathsf{E_2})$ or of the form $(\sim\mathsf{E_3})$.
Suppose that $\mathsf{E} : \pi$ is of the form $(\mathsf{E_1}~\mathsf{E_2})$ where $\mathsf{E}_1 : \rho\to\pi$ and $\mathsf{E}_2 : \rho$
and suppose the statement holds for $\mathsf{E_1}$ and $\mathsf{E_2}$.
By Lemma~\ref{stratified-sem}, $\lsem \mathsf{E}_2 \rsem_s(I) = \lsem \mathsf{E}_2 \rsem_s(J)$.
So, by the induction hypothesis and Lemma~\ref{tau-1-of-exact}, $\lsem \mathsf{E}_2 \rsem^*_s(\tau^{-1}(I,J)) = \lsem \mathsf{E}_2 \rsem_s(I)$ and
therefore $\lsem \mathsf{E}_2 \rsem^*_s(\tau^{-1}(I,J))\in \mo{\rho}$.
By Lemma~\ref{two-valued-is-maximal},
$\bigwedge_{\preceq_{\pi}}\{\lsem \mathsf{E}_1 \rsem^{*}_s(\tau^{-1}(I,J))(d) \mid d \in \lsem \rho\rsem, \lsem \mathsf{E}_2 \rsem^{*}_s(\tau^{-1}(I,J)) \preceq_{\rho} d\} = \lsem \mathsf{E}_1 \rsem^{*}_s(\tau^{-1}(I,J))(\lsem \mathsf{E}_2 \rsem^*_s(\tau^{-1}(I,J))) = \lsem \mathsf{E}_1 \rsem^{*}_s(\tau^{-1}(I,J))(\lsem \mathsf{E}_2 \rsem_s(I))$.
Therefore,
\[
\begin{array}{rll}
    & \lsem \mathsf{E} \rsem^*_s(\tau^{-1}(I,J))\\
 	 = & \lsem \mathsf{E}_1 \rsem^{*}_s(\tau^{-1}(I,J))(\lsem \mathsf{E}_2 \rsem_s(I))\\
	 = & \tau^{-1}_{\rho\to\pi}(\lsem \mathsf{E}_1 \rsem_s(I),\lsem \mathsf{E}_1 \rsem_s(J))(\lsem \mathsf{E}_2 \rsem_s(I))
	    & \mbox{(Induction Hypothesis)}\\
   = & (\lambda d.\tau^{-1}_{\pi}(\lsem \mathsf{E}_1 \rsem_s(I)(d),\lsem \mathsf{E}_1 \rsem_s(J)(d)))(\lsem \mathsf{E}_2 \rsem_s(I))
      & \mbox{(Definition of $\tau^{-1}_{\rho\to\pi}$)}\\
   = & \tau^{-1}_{\pi}(\lsem \mathsf{E}_1 \rsem_s(I)(\lsem \mathsf{E}_2 \rsem_s(I)),\lsem \mathsf{E}_1 \rsem_s(J)(\lsem \mathsf{E}_2 \rsem_s(I)))\\
   = & \tau^{-1}_{\pi}(\lsem \mathsf{E}_1 \rsem_s(I)(\lsem \mathsf{E}_2 \rsem_s(I)),\lsem \mathsf{E}_1 \rsem_s(J)(\lsem \mathsf{E}_2 \rsem_s(J)))
      &  \mbox{(Since $\lsem \mathsf{E}_2 \rsem_s(I) = \lsem \mathsf{E}_2 \rsem_s(J)$)}\\
   = & \tau^{-1}_{\pi}(\lsem \mathsf{E} \rsem_s(I),\lsem \mathsf{E} \rsem_s(J))
\end{array}
\]
Now, suppose that $\mathsf{E} : \bool$ is of the form $(\sim \mathsf{E_3})$ where $\mathsf{E}_3: \bool$.
By Lemma~\ref{stratified-sem}, $\lsem \mathsf{E}_3 \rsem_s(I) = \lsem \mathsf{E}_3 \rsem_s(J)$.
So, by Lemma~\ref{tau-1-of-exact}, $\lsem \mathsf{E}_3 \rsem^*_s(\tau^{-1}(I,J)) = \lsem \mathsf{E}_3 \rsem_s(I)$.
Since $\mathsf{E}_3$ is of type $\bool$, $\lsem \mathsf{E}_3 \rsem^*_s(\tau^{-1}(I,J))$ can be either $\mathit{true}$ or $\mathit{false}$.
In any of the two cases, it is easy to show that $\lsem \mathsf{E} \rsem^*_s(\tau^{-1}(I,J)) = \tau^{-1}_\bool(\lsem \mathsf{E} \rsem_s(I),\lsem \mathsf{E} \rsem_s(J))$.
\end{proof}

\begin{retheorem}{well-founded-exact-on-stratified}
Let $\mathsf{P}$ be a stratified $\HOL$ program. Then, the well-founded model of $\mathsf{P}$ is two-valued.
\end{retheorem}
\begin{proof}
Let $S$ be a stratification function of $\mathsf{P}$ and $(I_w, J_w) \in H^{c}_\mathsf{P}$ be the well-founded model of $\mathsf{P}$.
Suppose, for the sake of contradiction, that $I_w \neq J_w$. Let $n$ be the least number such that there exists some
predicate constant $\mathsf{p} : \pi_1$ with $S(\mathsf{p})=n$ and $I_w(\mathsf{p})\neq J_w(\mathsf{p})$.
We define an interpretation $J$ such that for any predicate constant $\mathsf{q}$:
\[
  J(\mathsf{q}) = \begin{cases}
    I_w(\mathsf{q}), &\text{if } S(\mathsf{q})\leq n \\
    J_w(\mathsf{q}), &\text{if } S(\mathsf{q})>n
  \end{cases}
\]
It is obvious, by definition, that $I_w \leq J \leq J_w$ and therefore $(I_w, J) \in H^{c}_\mathsf{P}$.
For any predicate constant $\mathsf{q}$ with $S(\mathsf{q})\leq n$, we have
\[
  \begin{array}{rlll}
    [\ATP(I_w, J)]_2(\mathsf{q}) & = & \TP(J)(\mathsf{q}) & \mbox{(Lemma~\ref{stratified-A-eq-O-O})} \\
    & = & \TP(I_w)(\mathsf{q}) & \mbox{(Corollary~\ref{stratified-Tp})} \\
    & = & [\ATP(I_w, J_w)]_1(\mathsf{q}) & \mbox{(Lemma~\ref{stratified-A-eq-O-O})} \\
    & = & I_w(\mathsf{q}) & \mbox{($(I_w, J_w)$ is a fixpoint of $\ATP$)} \\
    & = & J(\mathsf{q}) & \mbox{(Definition of $J$)}
  \end{array}
\]

Since $[\ATP(I_w, \cdot)]_2$ is monotone and $J\leq J_w$, it follows
$[\ATP(I_w, J)]_2 \leq [\ATP(I_w, J_w)]_2 = J_w$.
Thus, for any predicate constant $\mathsf{q} : \pi_2$ with $S(\mathsf{q})>n$,
we have $[\ATP(I_w, J)]_2(\mathsf{q}) \leq_{\pi_2} J_w(\mathsf{q}) = J(\mathsf{q})$.
Since $[\ATP(I_w, J)]_2(\mathsf{q}) \leq_{\pi_2} J(\mathsf{q})$ for any predicate constant $\mathsf{q} : \pi_2$,
we have $[\ATP(I_w, J)]_2 \leq J$, or $J$ is a pre-fixpoint of $[\ATP(I_w, \cdot)]_2$.
Since $(I_w, J_w)$ is a stable fixpoint of $\ATP$,
$J_w$ is the least pre-fixpoint of $[\ATP(I_w, \cdot)]_2$. Therefore, we have $J_w \leq J$.
So, $J_w(\mathsf{p}) \leq_{\pi_1} J(\mathsf{p}) = I_w(\mathsf{p})$. But, we have $I_w \leq J_w$,
so that $I_w(\mathsf{p}) = J_w(\mathsf{p})$, which is a contradiction.
We conclude that $I_w = J_w$. Using Corollary~\ref{tau-1-of-exact-interpretations},
$\tau^{-1}(I_w, J_w)= I_w$, so that $\tau^{-1}(I_w, J_w) \in H_\mathsf{P}$.
\end{proof}

\fi
\end{document}